\theoremstyle{plain}
\newtheorem{theorem}{Theorem}[section]
\newtheorem{lemma}[theorem]{Lemma}
\newtheorem{proposition}[theorem]{Proposition}
\newtheorem{corollary}[theorem]{Corollary}
\newtheorem{definition}[theorem]{Definition}
\newtheorem{assumption}[theorem]{Assumption}
\definecolor{myblue}{HTML}{4363d8}
\definecolor{myorange}{HTML}{f58231}
\definecolor{mygreen}{HTML}{3cb44b}
\newcommand{\DP}{{\mathrm{DP}}}
\newcommand{\test}{\mathrm{test}}
\renewcommand{\P}{\mathbb{P}}
\newcommand{\E}{\mathbb{E}}
\newcommand{\R}{\mathbb{R}}
\newcommand{\CI}{\mathrm{CI}}
\newcommand{\algo}{\mathcal{A}}
\newcommand{\exch}{\mathrm{exch}}
\newcommand{\Mid}{\,\Vert\,}
\renewcommand{\d}{\mathrm{d}}
\newcommand*{\transpose}{%
  {\mathpalette\@transpose{}}%
}
\newcommand*{\@transpose}[2]{%
  \raisebox{\depth}{$\m@th#1\intercal$}%
}
\title{Differentially Private E-Values}
\author{%
  Daniel Csillag \\ 
  FGV EMAp\\
  \texttt{daniel.csillag@fgv.br} \\
  \And
  Diego Mesquita \\
  FGV EMAp\\
  \texttt{diego.mesquita@fgv.br} \\
}
\begin{document}

\maketitle

\begin{abstract}
    E-values have gained prominence as flexible tools for statistical inference and risk control, enabling anytime- and post-hoc-valid procedures under minimal assumptions. However, many real-world applications fundamentally rely on sensitive data, which can be leaked through e-values. To ensure their safe release, we propose a general framework to transform non-private e-values into differentially private ones. Towards this end, we develop a novel biased multiplicative noise mechanism that ensures our e-values remain statistically valid. We show that our differentially private e-values attain strong statistical power, and are asymptotically as powerful as their non-private counterparts. Experiments across online risk monitoring, private healthcare, and conformal e-prediction demonstrate our approach's effectiveness and illustrate its broad applicability. 
\end{abstract}

\section{Introduction}

E-values have emerged as versatile tools for statistical inference and risk control, offering anytime- and post-hoc-valid guarantees under minimal assumptions. They are the backbone of a growing body of methods for continuous risk monitoring \citep{evalue-risk-monitoring}, change-point detection \citep{evalue-changepoint-1,evalue-changepoint-2}, test-time adaptation \citep{evalue-test-time-adaptation}, uncertainty quantification \citep{ecp-prior,ecp-gauthier}, and interpretability \citep{evalue-semantic}, among other inferential tasks.
However, many of the domains where these methods are most impactful involve sensitive data.
Applying existing e-value-based procedures directly to such data can compromise individual privacy, since their guarantees ensure statistical validity but not protection against information leakage.


Differential privacy provides a principled framework for addressing such challenges.
However, standard differential privacy mechanisms are insufficient for our setting: though privacy is attained, the resulting quantities are generally not statistically valid e-values.
In this paper, we resolve this by introducing novel biased multiplicative noise mechanisms, which ensure privacy while retaining statistical validity.
In this way, we are able to to convert any non-private e-value into a differentially private one.
To the best of our knowledge, this is the first instance of valid e-values that satisfy differential privacy.

Beyond establishing validity, we also exactly quantify the statistical power of our differentially private e-values, showing that it differs from that of the non-private e-value only by a factor that decreases with the number of samples, related to the mechanism's bias.
Crucially, as the number of samples goes towards infinity, we recover the power of the non-private test.
We also show that our private e-values inherit many of the usual properties of e-values, with no loss of privacy.

Our framework is general: the resulting differentially private e-values retain the compositional and optional continuation properties of standard e-values while ensuring privacy, and can be applied not only to hypothesis testing but also to confidence intervals and general e-value-based inference procedures.
We demonstrate the effectiveness of our approach through experiments in three real-world tasks: e-value based confidence intervals for the prevalence of diabetes, continuous monitoring of the risk of a deployed model, and e-conformal prediction for detection of phishing attacks.

\paragraph{Our contributions}
\begin{itemize}
    \item
        We propose the first general framework to privatize e-values, simultaneously satisfying differential privacy and statistical validity.
        Our framework works by introducing novel biased multiplicative noise mechanisms, which we show to be necessary to ensure validity of the resulting e-values.
    \item
        We derive an exact characterization of the statistical power of our differentially private e-values, in terms of their growth rates.
        As the number of samples grows, so does the growth rates, matching that of the non-private e-value at the limit.
    \item
        We prove that beyond basic validity, our differentially private e-values satisfy many compositional properties of e-values while preserving privacy, including optional continuation, e-to-p conversion and averaging (the latter under some restrictions);
        this enables the seamless use of many existing e-value-based procedures.
    \item
        \looseness=-1
        We demonstrate our framework on three real-world settings
        spanning private healthcare, online risk monitoring atop private data, and private predictive modelling.
        For each instance we derive effective procedures that are readily applicable by practitioners.
\end{itemize}

\noindent\textbf{Related work.}
The broader idea of statistical inference with differential privacy has already been the target of much attention. For example, \citep{prior-test-1,prior-test-2,prior-test-3} propose differentially private versions of classical tests, and \citep{prior-ci-1,prior-ci-2,prior-ci-3} propose private confidence intervals.
More recently, \citep{tot,pena-barrientos} proposed general frameworks from which differentially private hypothesis tests can be obtained.
However, thse are either inapplicable to e-values or discard essential properties of the e-values --- e.g., post-hoc validity, optional continuation, and advantages for multiple testing.
Our approach, in contrast, naturally benefits from all the usual properties of e-values, while also being significantly more data efficient (cf. Appendix~C.1).

\section{Background}

\subsection{Differential privacy}

Differential privacy (DP) is a framework for controlling the privacy loss incurred when releasing information about a dataset.
Introduced by \citep{dp-og}, differential privacy works by incorporating controlled noise to the processing of the dataset, so as to guarantee that the inclusion or exclusion of any single individual's data cannot be inferred from the outputs.
The first fully formalized notion of DP was $\epsilon$-differential privacy, which was later generalized to $(\epsilon, \delta)$-differential privacy.

\begin{definition}[$(\epsilon, \delta)$-differential privacy]
    A randomized algorithm $\algo(\cdot)$ satisfies $(\epsilon, \delta)$-differential privacy if, for all (fixed) neighboring datasets $D$ and $D'$ (i.e., differing by one record), for any set $A$,
    \[ \P[\algo(D) \in A] \leq e^\epsilon \P[\algo(D') \in A] + \delta. \]
    If $\delta=0$, algorithm $\algo(\cdot)$ satisfies $\epsilon$-differential privacy.
\end{definition}

Though influential, $(\epsilon, \delta)$-differential privacy is known to struggle with the composition of many differentially private procedures.
To improve on this, \citep{rdp} introduced $(\alpha, \epsilon)$-Rényi differential privacy, which generalizes $\epsilon$-differential privacy:

\begin{definition}[$(\alpha, \epsilon)$-Rényi differential privacy]
    A randomized algorithm $\algo(\cdot)$ satisfies $(\alpha, \epsilon)$-Rényi differential privacy for $\alpha > 1$ if, for all (fixed) neighboring datasets $D$ and $D'$ (i.e., differing by one record),
    \[
        D_\alpha \bigl( \algo(D) \Mid \algo(D') \bigr) \coloneq
        \frac{1}{\alpha-1} \log \E_{z \sim \algo(D')}\left[ \left( \frac{\d \algo(D)}{\d \algo(D')}(z) \right)^\alpha \right]
        \leq \epsilon,
    \]
    where $D_\alpha (P \Mid Q)$ is the $\alpha$-Rényi divergence between distributions $P$ and $Q$.
\end{definition}

It is worth mentioning that $(\alpha, \epsilon)$-Rényi differential privacy is closely related to classical $(\epsilon, \delta)$-differential privacy.
When $\alpha \to \infty$ we recover $\epsilon$-differential privacy \citep{rdp},
and satisfying $(\alpha, \epsilon)$-Rényi differential privacy also implies $(\epsilon', \delta)$-differential privacy with $\epsilon' = \epsilon + \log(1/\delta)/(\alpha-1)$, for any $\delta > 0$. \looseness=-1

Crucially, differential privacy benefits from two key properties: composition and post-processing.
Composition states that differentially private algorithms can be combined into new differentially private algorithms, whereas post-processing states that any post-processing of a differentially private algorithm retains its privacy.

\begin{proposition}[Composition of $(\alpha, \epsilon)$-Rényi differential privacy]\label{thm:background-composition}
    For any $\alpha > 1$,
    let algorithms $\algo_1(\cdot)$ and $\algo_2(\cdot)$ be $(\alpha, \epsilon_1)$- and $(\alpha, \epsilon_2)$-Rényi differentially private.
    Then the algorithm $\algo(\cdot) = (\algo_1(\cdot), \algo_2(\cdot))$ is $(\alpha, \epsilon_1 + \epsilon_2)$-Rényi differentially private.
\end{proposition}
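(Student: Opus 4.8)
The plan is to reduce the composed Rényi divergence to a product of the individual divergences, exploiting the fact that for a fixed input the internal randomness of $\algo_1$ and $\algo_2$ is independent (or, if one reads the composition adaptively, conditionally independent given the first output). First I would fix an arbitrary pair of neighboring datasets $D$ and $D'$ and write out the definition of the Rényi divergence of the joint output. Writing $\mu = \algo(D)$ and $\nu = \algo(D')$ for the laws of the pair $(\algo_1(\cdot),\algo_2(\cdot))$, and noting that $\mu \ll \nu$ (which follows from absolute continuity of each marginal, itself guaranteed by finiteness of $\epsilon_1$ and $\epsilon_2$), the Radon--Nikodym derivative factorizes coordinatewise:
\[
  \frac{\d \mu}{\d \nu}(z_1, z_2) \;=\; \frac{\d \algo_1(D)}{\d \algo_1(D')}(z_1)\cdot \frac{\d \algo_2(D)}{\d \algo_2(D')}(z_2).
\]

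Next I would raise this to the $\alpha$-th power and integrate against $\nu$. Since under $\nu$ the coordinates $z_1 \sim \algo_1(D')$ and $z_2 \sim \algo_2(D')$ are independent, Fubini lets the expectation of the product split:
\[
  \E_{z \sim \nu}\!\left[\left(\frac{\d\mu}{\d\nu}(z)\right)^{\!\alpha}\right] = \E_{z_1 \sim \algo_1(D')}\!\left[\left(\frac{\d\algo_1(D)}{\d\algo_1(D')}(z_1)\right)^{\!\alpha}\right] \cdot \E_{z_2 \sim \algo_2(D')}\!\left[\left(\frac{\d\algo_2(D)}{\d\algo_2(D')}(z_2)\right)^{\!\alpha}\right].
\]
Taking $\tfrac{1}{\alpha-1}\log$ of both sides turns the product into a sum, so that $D_\alpha(\algo(D) \Mid \algo(D')) = D_\alpha(\algo_1(D)\Mid\algo_1(D')) + D_\alpha(\algo_2(D)\Mid\algo_2(D')) \le \epsilon_1 + \epsilon_2$; since $D,D'$ were arbitrary neighbors, this is exactly the claimed $(\alpha,\epsilon_1+\epsilon_2)$-Rényi differential privacy.

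The one place that needs care — and the step I expect to be the main obstacle — is the \emph{adaptive} reading, in which $\algo_2$ may depend on the realized output $z_1$ of $\algo_1$. Then the clean factorization above fails, and instead I would condition on $z_1$: express the integral over $z_2$ as a conditional expectation, bound it uniformly over the conditioning value by $\sup_{z_1}\E_{z_2}\bigl[(\d\algo_2(D)/\d\algo_2(D'))^{\alpha}\mid z_1\bigr] \le e^{(\alpha-1)\epsilon_2}$ using the RDP guarantee of $\algo_2$, pull this constant out, and then apply the RDP guarantee of $\algo_1$ to the remaining integral over $z_1$. This tower-of-expectations argument recovers the same bound $\epsilon_1+\epsilon_2$, and for the non-adaptive statement as literally written it collapses to the plain independence argument above.
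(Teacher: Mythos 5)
Your proof is correct: the paper does not prove this proposition itself but defers to the cited Rényi-DP reference, and your argument --- factorizing the Radon--Nikodym derivative of the joint output, splitting the $\alpha$-moment by independence, and handling the adaptive case by conditioning on $z_1$ and bounding the inner expectation uniformly --- is exactly the standard proof given there (Proposition~1 of the RDP paper). No gaps; the absolute-continuity remark and the tower-of-expectations step are both handled appropriately.
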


\begin{proposition}[Post-processing of $(\alpha, \epsilon)$-Rényi differential privacy]\label{thm:background-postprocessing}
    For any $\alpha > 1$,
    let $\algo(\cdot)$ be a $(\alpha, \epsilon)$-Rényi differentially private algorithm.
    Then, for any post-processing operation $f$, the algorithm $\algo_f(\cdot) = f(\algo(\cdot))$ is also $(\alpha, \epsilon)$-Rényi differentially private.
\end{proposition}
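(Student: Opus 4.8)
The plan is to reduce the claim to the data processing inequality for the R\'enyi divergence, and then to establish that inequality directly via Jensen's inequality. Fix arbitrary neighboring datasets $D$ and $D'$, and write $P = \algo(D)$, $Q = \algo(D')$. A post-processing operation $f$ is, in full generality, a randomized map that does not access the dataset, hence it is described by a Markov kernel $k(\cdot \mid z)$; consequently $\algo_f(D)$ and $\algo_f(D')$ are the pushforwards $P_f$ and $Q_f$ of $P$ and $Q$ through $k$ (a deterministic $f$ being the special case where $k(\cdot \mid z)$ is the point mass at $f(z)$). By the definition of $(\alpha,\epsilon)$-R\'enyi differential privacy, it then suffices to prove the data processing inequality $D_\alpha(P_f \Mid Q_f) \leq D_\alpha(P \Mid Q)$, since the right-hand side is at most $\epsilon$ by hypothesis. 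If $D_\alpha(P \Mid Q) = \infty$ there is nothing to prove, so we may assume $P \ll Q$.

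For the data processing inequality, I would express the likelihood ratio of the post-processed measures through the kernel: at a point $y$ in the output space,
\[
    \frac{\d P_f}{\d Q_f}(y) = \E_{z \sim \mu_y}\!\left[ \frac{\d P}{\d Q}(z) \right],
    \qquad \mu_y(\d z) \propto k(y \mid z)\, Q(\d z),
\]
where $\mu_y$ is a probability measure (the ``posterior'' over $z$ given output $y$). Since $t \mapsto t^\alpha$ is convex for $\alpha > 1$, Jensen's inequality gives
\[
    \left( \frac{\d P_f}{\d Q_f}(y) \right)^{\!\alpha}
    \leq \E_{z \sim \mu_y}\!\left[ \left( \frac{\d P}{\d Q}(z) \right)^{\!\alpha} \right].
\]
Integrating both sides against $Q_f$ and applying Fubini's theorem, the normalizing constants of $\mu_y$ cancel and $\int k(y \mid z)\,\d y = 1$ collapses the outer integral, so the right-hand side reduces exactly to $\E_{z \sim Q}[ (\d P / \d Q(z))^{\alpha} ]$. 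Applying the monotone map $x \mapsto \tfrac{1}{\alpha - 1}\log x$ yields $D_\alpha(P_f \Mid Q_f) \leq D_\alpha(P \Mid Q)$, as desired.

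The main obstacle is the measure-theoretic bookkeeping rather than any conceptual difficulty: one must verify $P_f \ll Q_f$ (inherited from $P \ll Q$), justify the disintegration defining $\mu_y$ and the identity for $\d P_f / \d Q_f$, and justify the interchange of integrals. It may also be cleanest to treat the limiting case $\alpha \to \infty$ separately, where the statement specializes to the (elementary) post-processing property of pure $\epsilon$-differential privacy. Everything else is routine.
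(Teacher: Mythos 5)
Your argument is correct: it is the standard data-processing-inequality proof for the R\'enyi divergence (conditional Jensen with the convex map $t \mapsto t^\alpha$, then the monotone transform $x \mapsto \tfrac{1}{\alpha-1}\log x$), and the measure-theoretic points you flag (absolute continuity of the pushforwards, disintegration, Fubini) are routine under standard assumptions. The paper itself gives no proof --- it defers to the R\'enyi DP reference \citep{rdp} --- and the proof there rests on exactly this data-processing inequality, so your route coincides with the one the paper implicitly relies on.
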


Throughout the paper, we will refer primarily to $(\alpha, \epsilon)$-Rényi differential privacy. Nevertheless, our framework is just as applicable to standard $(\epsilon, \delta)$-differential privacy; see Appendix~B.1.

\subsection{E-values}

Consider the problem of testing a null hypothesis $H_0$ with data $[Y_1, Y_2, \ldots, Y_n] \eqcolon D$.
To this end, e-values serve as alternatives to the classic p-values as measures of evidence against the null.
Formally, an e-value $E(D)$ for the null hypothesis $H_0$ is a nonnegative random variable whose expectation is at most one under the null.

\begin{definition}[E-value]\label{def:e-value}
    A nonnegative real random variable $E(D)$ is an e-value for a null hypothesis $H_0$ if $\E[E(D)] \leq 1$ under $H_0$.
\end{definition}

A random variable is commonly said to be a ``valid'' e-value when it satisfies Definition~\ref{def:e-value}.

Any e-value can be converted to a p-value by simply taking its reciprocal, and any p-value can be converted to an e-value by a process termed calibration \citep{evalue-calibration}, albeit at a slight loss of power.
In a sense, e-values can be seen as p-values with richer structure \citep{evalue-posthoc},
in particular satisfying post-hoc validity, optional continuation, and merging through averaging.

\begin{proposition}[E-to-p conversion; post-hoc validity]\label{thm:background-e-to-p}
    If $E(D)$ is an e-value for a null hypothesis $H_0$, then $1/E(D)$ is a p-value for $H_0$. Moreover, it is a \emph{post-hoc valid} p-value \citep{evalue-posthoc}, i.e., a p-value that allows for an arbitrarily data dependent significance level $\alpha$, and all post-hoc valid p-values can be written as the reciprocal of some e-value.
\end{proposition}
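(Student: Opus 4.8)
The statement packages three claims, and the plan is to treat them in order: (i) $1/E(D)$ is a p-value for $H_0$; (ii) it is moreover post-hoc valid; and (iii) conversely, every post-hoc valid p-value is the reciprocal of some e-value. The only conceptual ingredient is the definition of post-hoc validity from \citep{evalue-posthoc}, which I take to be that a p-value $p(D)$ is post-hoc valid if it controls the type-I error even at a data-driven significance level, i.e. $\E_{H_0}\bigl[ \mathbf{1}\{p(D) \le \alpha(D)\} / \alpha(D) \bigr] \le 1$ for every measurable $\alpha(\cdot)$ valued in $(0,1]$; taking $\alpha$ constant here already recovers the ordinary inequality $\P_{H_0}(p(D)\le\alpha)\le\alpha$. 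I adopt the conventions $1/0 = +\infty$ and $1/{+\infty}=0$ throughout, and all bounds below hold for every distribution in $H_0$ when the null is composite.

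For (i) and (ii) the engine is Markov's inequality applied to $E(D)$ (in pathwise form for (ii)), and it suffices to establish (ii), since the post-hoc validity inequality with $\alpha$ constant yields $\P_{H_0}(1/E(D)\le\alpha)\le\alpha$ and hence (i). So fix a data-dependent level $\alpha(\cdot)$ valued in $(0,1]$. On the event $\{1/E(D)\le\alpha(D)\} = \{E(D)\ge 1/\alpha(D)\}$ one has $\mathbf{1}\{1/E(D)\le\alpha(D)\}/\alpha(D) = 1/\alpha(D) \le E(D)$, and off this event the left-hand side is $0\le E(D)$; hence this inequality holds pointwise. Taking expectations under $H_0$ and using $\E_{H_0}[E(D)]\le 1$ gives exactly $\E_{H_0}\bigl[\mathbf{1}\{1/E(D)\le\alpha(D)\}/\alpha(D)\bigr]\le 1$, which is (ii).

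For (iii), given a post-hoc valid p-value $p(D)$, the natural candidate is $E(D)\coloneq 1/p(D)$; clearly $E(D)\ge 0$, so it remains to bound $\E_{H_0}[E(D)]$. For a small constant $\eta\in(0,1]$, the data-dependent level $\alpha_\eta(D)\coloneq \max\{p(D),\eta\}$ takes values in $(0,1]$ and satisfies $\mathbf{1}\{p(D)\le\alpha_\eta(D)\}\equiv 1$, so plugging it into the post-hoc validity inequality gives $\E_{H_0}\bigl[1/\max\{p(D),\eta\}\bigr]\le 1$. Letting $\eta\downarrow 0$ and applying monotone convergence yields $\E_{H_0}[1/p(D)]\le 1$, i.e. $\E_{H_0}[E(D)]\le 1$ (and in particular $p(D)>0$ almost surely under $H_0$). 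Thus $E(D)$ is an e-value with $p(D)=1/E(D)$, which is the claim.

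The computations are elementary; the single point that genuinely demands care is matching the working definition of post-hoc validity to that of \citep{evalue-posthoc}, after which the degenerate cases ($\alpha=0$, $p=0$, or $E\in\{0,+\infty\}$) are dispatched by the conventions above together with the short monotone-convergence step in (iii). I do not anticipate any real obstacle beyond this.
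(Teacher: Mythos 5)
The paper never proves this proposition itself --- it is a background statement, and the appendix simply defers to \citep{evalue-posthoc} --- so the relevant comparison is with the standard argument there, which you essentially reproduce: a pointwise Markov bound for the forward direction, and plugging the data-dependent level $\tilde\alpha = p$ (regularized by $\eta$, with a monotone-convergence step) for the converse. That core is correct and is the intended argument. One caveat you should fix: your working definition restricts the data-dependent level $\alpha(\cdot)$ to $(0,1]$, while your part (i) produces p-values $1/E(D)$ that may exceed $1$; these two conventions are incompatible in the converse direction. If $p$ may exceed $1$ but levels are capped at $1$, post-hoc validity in your restricted sense only controls $\E\bigl[(1/p)\,\mathbf{1}\{p \le 1\}\bigr]$, and the claim (iii) actually fails: take $p = a$ with probability $a$ and $p = 2$ with probability $1-a$ for $a \in (0,1)$; this is a p-value and satisfies $\E[\mathbf{1}\{p \le \alpha(D)\}/\alpha(D)] \le a\cdot(1/a) = 1$ for every level valued in $(0,1]$, yet $\E[1/p] = 1 + (1-a)/2 > 1$, so $1/p$ is not an e-value. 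Concretely, the flawed step in your write-up is the assertion that $\alpha_\eta(D) = \max\{p(D),\eta\}$ "takes values in $(0,1]$", which silently assumes $p \le 1$. The repair is simply to adopt the definition of \citep{evalue-posthoc}, in which the (possibly data-dependent) level ranges over all positive values rather than $(0,1]$; your pointwise bound in (ii) never used $\alpha \le 1$, so the forward direction is unchanged, and with unrestricted levels your choice $\alpha_\eta = \max\{p,\eta\}$ is admissible and the proof of (iii) goes through verbatim.
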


\begin{proposition}[Optional continuation]\label{thm:background-optional-continuation}
    If $E_1(D_1)$ and $E_2(D_2)$ are e-values for a null hypothesis $H_0$ over independent\footnote{This can be refined to require only a sequential structure of the data, rather than full independence. We keep to independence for simplicity.} datasets $D_1$ and $D_2$, then $E_1(D_1) \cdot E_2(D_2)$ is also an e-value for $H_0$.
\end{proposition}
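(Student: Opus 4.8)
The plan is to verify the two defining properties of an e-value (Definition~\ref{def:e-value}) for the product $E(D_1, D_2) \coloneqq E_1(D_1) \cdot E_2(D_2)$: nonnegativity, and the expectation bound $\E[E(D_1,D_2)] \le 1$ under $H_0$. Nonnegativity is immediate, since each factor $E_i(D_i)$ is a nonnegative random variable by hypothesis, so their product is nonnegative almost surely.

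For the expectation bound I would work under $H_0$ and exploit the independence $D_1 \indep D_2$. The cleanest route is to condition on $D_1$ and apply the tower property:
\[
    \E[E_1(D_1)\, E_2(D_2)] = \E\bigl[ E_1(D_1)\, \E[E_2(D_2) \mid D_1] \bigr].
\]
Since $D_2$ is independent of $D_1$ (and any auxiliary randomization used to form $E_2$ may be taken independent of $D_1$ as well), the inner conditional expectation equals the unconditional one, so $\E[E_2(D_2) \mid D_1] = \E[E_2(D_2)] \le 1$ by the e-value property of $E_2$ under $H_0$. Substituting and using $E_1(D_1) \ge 0$ together with the e-value property of $E_1$,
\[
    \E[E_1(D_1)\, E_2(D_2)] \le \E[E_1(D_1)] \le 1,
\]
which proves the claim. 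Equivalently, one may observe that $D_1 \indep D_2$ makes $E_1(D_1)$ and $E_2(D_2)$ independent random variables, so the expectation of their product factorizes directly as $\E[E_1(D_1)]\,\E[E_2(D_2)] \le 1$.

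There is essentially no hard step here; the only point deserving a little care is justifying the factorization/conditioning, i.e., that independence of the \emph{datasets} transfers to independence of (or the conditional-expectation factorization for) the \emph{derived} e-values, including any internal randomness they use. As the footnote to the statement indicates, the very same argument goes through if one only assumes a sequential (filtration) structure on the data, replacing unconditional independence by conditioning on the past; we state and prove it under full independence purely for simplicity.
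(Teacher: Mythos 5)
Your proof is correct: the paper itself only cites a reference for this background proposition, but the factorization $\E[E_1(D_1)E_2(D_2)] = \E[E_1(D_1)]\,\E[E_2(D_2)] \le 1$ under independence is exactly the argument the paper uses for its differentially private analogue (Proposition~\ref{thm:optional-continuation-dp}). Your extra care about the derived e-values (and any auxiliary randomness) inheriting independence from the datasets is a reasonable and correct refinement, not a deviation.
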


\begin{proposition}[Averaging]\label{thm:background-averaging}
    If $E_1(D_1)$ and $E_2(D_2)$ are e-values for a null hypothesis $H_0$, then all convex combinations $\eta E_1(D_1) + (1-\eta) E_2(D_2)$ for $\eta \in [0, 1]$ are e-values for $H_0$, regardless of any dependence between $D_1$ and $D_2$.
\end{proposition}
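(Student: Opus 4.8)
The plan is to verify directly the two defining properties of an e-value (Definition~\ref{def:e-value}) for the convex combination $E_\eta = \eta E_1(D_1) + (1-\eta) E_2(D_2)$, where $\eta \in [0,1]$ is fixed. First, nonnegativity: since $\eta \geq 0$ and $1-\eta \geq 0$, and both $E_1(D_1)$ and $E_2(D_2)$ are nonnegative random variables by hypothesis, each summand $\eta E_1(D_1)$ and $(1-\eta) E_2(D_2)$ is nonnegative, hence so is $E_\eta$. Second, the expectation bound under $H_0$: by linearity of expectation --- which holds for any integrable random variables on a common probability space, with no assumption on their joint distribution --- we have $\E[E_\eta] = \eta\,\E[E_1(D_1)] + (1-\eta)\,\E[E_2(D_2)]$ under $H_0$. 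Since each $E_i(D_i)$ is a valid e-value, $\E[E_i(D_i)] \leq 1$ under $H_0$, and since the weights are nonnegative, $\E[E_\eta] \leq \eta + (1-\eta) = 1$, which is exactly the e-value property.

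I do not anticipate any genuine obstacle; the only point worth a line of care is integrability, which is needed to invoke linearity of expectation. This is immediate: $0 \leq E_i(D_i)$ and $\E[E_i(D_i)] \leq 1 < \infty$, so each $E_i(D_i)$ --- and therefore $E_\eta$ --- is integrable. The real content of the statement is the contrast with optional continuation (Proposition~\ref{thm:background-optional-continuation}): there the product $E_1(D_1)\,E_2(D_2)$ only satisfies $\E[E_1(D_1)\,E_2(D_2)] = \E[E_1(D_1)]\,\E[E_2(D_2)] \leq 1$ when $D_1 \indep D_2$, whereas averaging relies solely on linearity, which is insensitive to the dependence structure of $(D_1, D_2)$; this is precisely why the ``regardless of any dependence'' clause is available. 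A straightforward induction extends the same argument to arbitrary finite convex combinations $\sum_i \eta_i E_i(D_i)$ with $\eta_i \geq 0$ and $\sum_i \eta_i = 1$, though the stated two-term version already suffices.
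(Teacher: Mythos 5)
Your proof is correct and uses exactly the standard linearity-of-expectation argument (plus the easy nonnegativity and integrability checks); the paper does not prove this background proposition itself but defers to a citation, and the same linearity computation is what appears in its appendix proofs of the private averaging results (Propositions~\ref{thm:independent-averaging-dp} and \ref{thm:dependent-averaging-dp}). Nothing is missing.
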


Defined here for hypothesis testing, e-values then serve as the building blocks of many higher-level procedures, from parameter inference \citep{savi,evalue-ppi,evalue-asympt-ppi}, risk monitoring \citep{evalue-risk-monitoring}, change-point detection \citep{evalue-changepoint-1,evalue-changepoint-2}, test-time adaptation \citep{evalue-test-time-adaptation}, uncertainty quantification \citep{ecp-prior,ecp-gauthier,ecp-gauthier-2}, interpretability \citep{evalue-semantic}, and more.

\section{Differentially Private E-Values}

In this section we present our framework for differentially private e-values, which guarantees privacy while retaining the validity of the resulting e-values.
We will first construct our differentially private e-values in the context of hypothesis testing, and then show how these can be used for general e-value-based procdures and confidence intervals.

\subsection{Hypothesis testing} \label{sec:method-hyp-test}

Suppose we have a (non-private) e-value $E(D)$ for a null hypothesis $H_0$, i.e., a nonnegative real random variable whose expectation over the data $D$ is at most 1 under the null hypothesis.
Our goal is to leverage $E(D)$ to construct a new e-value $E^\DP(D)$ satisfying $(\alpha, \epsilon)$-Rényi differential privacy.
We do this by introducing a measured amount of independent noise to the e-value, as such:
\begin{equation}\label{eq:e-dp-form}
    E^\DP(D) \coloneq E(D) \cdot e^{-\xi},
\end{equation}
with random noise $\xi$ independent from $E(D)$.
By passing the noise variable $\xi$ through the exponential and incorporating it multiplicatively we ensure that $E^\DP(D)$ remains nonnegative.

With well-designed choices for the distribution of $\xi$, we can simultaneously ensure differential privacy and validity.
First, note that since $\xi$ is independent from $E(D)$,
\[ \E[E^\DP(D)] = \E[E(D) \cdot e^{-\xi}] = \E[E(D)] \cdot \E[e^{-\xi}]; \]
and so, as long as the moment-generating function $t \mapsto \E[E^{t \xi}]$ exists and is at most $1$ at $t=-1$, we have that $E^\DP(D)$ will be a valid e-value: under the null,
\[ \E[E^\DP(D)] = \E[E(D)] \cdot \E[e^{-\xi}] \leq \E[E(D)] \leq 1, \]
where the last step follows from the fact that $E(D)$ is an e-value.

To obtain differential privacy, we then appeal to the post-processing theorem: $E^\DP(D)$ is $(\alpha, \epsilon)$-Rényi differentially private iff $\log E^\DP(D) = \log E(D) - \xi$ is $(\alpha, \epsilon)$-Rényi differentially private.
This allows us to leverage existing additive noise mechanisms such as Gaussian and Laplace mechanisms to attain differential privacy, as a function of the \emph{log-sensitivity} of the e-value, defined as
\begin{equation}\label{eq:sensitivity}
    \Delta_{\log} (E) \coloneq \sup_{\lvert D \Delta D' \rvert \leq 1} \lvert \log E(D) - \log E(D') \rvert,
\end{equation}
with $\lvert D \Delta D' \rvert \leq 1$ denoting that $D$ and $D'$ differ by a single record.

To ensure validity of the resulting e-values, we bias $\xi$ so as to ensure its moment-generating function satisfies the required bound.
This leads to the following biased Gaussian and Laplace mechanisms:

\begin{theorem}[Biased Gaussian mechanism]\label{thm:hyp-test-valid-gaussian}
    For any $\alpha > 1$ and $\epsilon > 0$, let
    \[ E^\DP(D) = E(D) \cdot e^{-\xi}, \qquad \xi \sim \mathcal{N}\biggl( \frac{\alpha [\Delta_{\log}(E)]^2}{4 \epsilon}, \frac{\alpha [\Delta_{\log}(E)]^2}{2 \epsilon} \biggr). \]
    Then $E^\DP(D)$ is a valid e-value satisfying $(\alpha, \epsilon)$-Rényi differential privacy.
\end{theorem}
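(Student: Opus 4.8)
The plan is to prove the two assertions --- validity as an e-value, and $(\alpha,\epsilon)$-Rényi differential privacy --- separately, each reducing to a standard fact about Gaussians once the stated mean and variance of $\xi$ are substituted. Throughout I tacitly assume $\Delta_{\log}(E) < \infty$ (otherwise the mechanism is degenerate), which in particular guarantees that $\log E(D)$ is well defined on the relevant datasets.

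For validity I would reuse the independence argument sketched just before the theorem. Since $\xi$ is independent of $E(D)$, under $H_0$ we have $\E[E^\DP(D)] = \E[E(D)]\cdot\E[e^{-\xi}] \le \E[e^{-\xi}]$, so it suffices to show $\E[e^{-\xi}] \le 1$. For $\xi \sim \mathcal N(\mu,\sigma^2)$ the moment generating function gives $\E[e^{-\xi}] = e^{-\mu+\sigma^2/2}$, which is at most $1$ precisely when $\mu \ge \sigma^2/2$. Plugging in $\mu = \alpha[\Delta_{\log}(E)]^2/(4\epsilon)$ and $\sigma^2 = \alpha[\Delta_{\log}(E)]^2/(2\epsilon)$ yields $\mu = \sigma^2/2$, so in fact $\E[e^{-\xi}] = 1$ and $E^\DP(D)$ is a valid e-value with no extra slack beyond whatever $E(D)$ already had. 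For privacy I would invoke Proposition~\ref{thm:background-postprocessing}: because $E^\DP(D) = \exp(\log E(D) - \xi)$ is a data-independent post-processing of $\log E(D) - \xi$, it is enough to show the latter is $(\alpha,\epsilon)$-Rényi differentially private. Writing $\log E(D) - \xi = \bigl(\log E(D) - \mu\bigr) + \eta$ with $\eta \sim \mathcal N(0,\sigma^2)$ exhibits it as the Gaussian mechanism applied to the scalar statistic $D \mapsto \log E(D) - \mu$, whose sensitivity is still $\Delta_{\log}(E)$ by definition~\eqref{eq:sensitivity} since the deterministic shift by $-\mu$ leaves sensitivity unchanged. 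For neighboring $D,D'$ the output laws are $\mathcal N(m,\sigma^2)$ and $\mathcal N(m',\sigma^2)$ with $|m-m'| \le \Delta_{\log}(E)$, and the closed form $D_\alpha\bigl(\mathcal N(m,\sigma^2)\Mid\mathcal N(m',\sigma^2)\bigr) = \alpha(m-m')^2/(2\sigma^2)$ gives $D_\alpha \le \alpha[\Delta_{\log}(E)]^2/(2\sigma^2) = \epsilon$ after substituting $\sigma^2 = \alpha[\Delta_{\log}(E)]^2/(2\epsilon)$.

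The computation is essentially routine; the point that warrants care --- and the conceptual core of the mechanism --- is that the \emph{bias} $\mu$ is what buys validity yet must not cost privacy. This is resolved cleanly by observing that shifting by a fixed constant is post-processing, so the privacy analysis coincides verbatim with that of the zero-mean Gaussian mechanism; the only real content is that the two parameter choices are coupled so that $\mu = \sigma^2/2$ and $\alpha[\Delta_{\log}(E)]^2/(2\sigma^2) = \epsilon$ hold simultaneously, which is exactly the design that makes both guarantees true at once. I would close by noting that the same MGF bound shows any $\mu \ge \sigma^2/2$ preserves validity, so $\mu = \sigma^2/2$ is the least (hence power-optimal) valid bias for this variance.
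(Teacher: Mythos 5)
Your proposal is correct and follows essentially the same route as the paper: post-processing reduces privacy to the additive Gaussian mechanism on $\log E(D)$, the closed-form Rényi divergence between equal-variance Gaussians gives $\alpha[\Delta_{\log}(E)]^2/(2\sigma^2) \le \epsilon$, and the Gaussian MGF yields the validity condition $\mu \ge \sigma^2/2$, which the stated parameters satisfy with equality. The only cosmetic difference is that you explicitly frame the constant shift $-\mu$ as leaving the sensitivity unchanged, whereas the paper lets the $\mu$ terms cancel inside the divergence computation.
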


\begin{theorem}[Biased Laplace mechanism]\label{thm:hyp-test-valid-laplace}
    For any $\alpha > 1$ and $\epsilon > 0$, let
    \[ E^\DP(D) = E(D) \cdot e^{-\xi}, \qquad \xi \sim \mathrm{Laplace}\bigl( -\log (1 - b_{\alpha,\epsilon}^2), b_{\alpha,\epsilon} \bigr), \]
    where
    \begin{align*}
        b_{\alpha,\epsilon} &\coloneq 1 \,/\, \mathbf{h}_\alpha^{-1}\left( (2\alpha - 1) e^{(\alpha - 1) \epsilon} \right),
        \\
        \mathbf{h}_\alpha(t) &\coloneq \alpha e^{(\alpha-1) \Delta_{\log}(E) t} + (\alpha-1) e^{-\alpha \Delta_{\log}(E) t} \qquad \text{for }t \geq 0.
    \end{align*}
    Then, as long as $b_{\alpha,\epsilon} < 1$,
    $E^\DP(D)$ is a valid e-value satisfying $(\alpha, \epsilon)$-Rényi differential privacy.
\end{theorem}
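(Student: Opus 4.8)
The plan is to prove the two assertions of \Cref{thm:hyp-test-valid-laplace} separately: validity via a moment-generating-function identity, and $(\alpha,\epsilon)$-Rényi differential privacy via an explicit Rényi divergence between two shifted Laplace laws combined with the post-processing property.

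\emph{Validity.} Since $\xi$ is independent of $E(D)$, under $H_0$ we have $\E[E^\DP(D)] = \E[E(D)]\cdot\E[e^{-\xi}] \le \E[e^{-\xi}]$, so it suffices to show $\E[e^{-\xi}]\le 1$. For $\xi\sim\mathrm{Laplace}(\mu,b)$ the moment-generating function is $\E[e^{t\xi}] = e^{\mu t}/(1-b^2 t^2)$, defined for $\lvert t\rvert<1/b$; the hypothesis $b_{\alpha,\epsilon}<1$ is exactly what places $t=-1$ inside this interval. Evaluating at $t=-1$ with $\mu = -\log(1-b_{\alpha,\epsilon}^2)$ gives $\E[e^{-\xi}] = e^{-\mu}/(1-b_{\alpha,\epsilon}^2) = (1-b_{\alpha,\epsilon}^2)/(1-b_{\alpha,\epsilon}^2) = 1$. (The positive bias $\mu$ is chosen precisely to cancel the variance inflation in the MGF, so the e-value property holds with equality for the noise factor.)

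\emph{Privacy.} By \Cref{thm:background-postprocessing} it is enough to show that $\log E^\DP(D) = \log E(D)-\xi$ is $(\alpha,\epsilon)$-Rényi differentially private. Using that the Laplace law is symmetric about its mean, $\log E^\DP(D)\sim\mathrm{Laplace}(\log E(D)-\mu,\,b_{\alpha,\epsilon})$ and likewise for any neighbour $D'$; hence, by translation invariance of Lebesgue measure and reflection symmetry of the Laplace density, $D_\alpha\bigl(\log E^\DP(D)\Mid\log E^\DP(D')\bigr)$ depends on $D,D'$ only through $\gamma\coloneq\lvert\log E(D)-\log E(D')\rvert$, and $\gamma\le\Delta_{\log}(E)$ by the definition \eqref{eq:sensitivity} of the log-sensitivity. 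The core computation is the closed form
\[
  D_\alpha\bigl(\mathrm{Laplace}(0,b)\Mid\mathrm{Laplace}(\gamma,b)\bigr) = \frac{1}{\alpha-1}\log\frac{\alpha\,e^{(\alpha-1)\gamma/b} + (\alpha-1)\,e^{-\alpha\gamma/b}}{2\alpha-1},
\]
obtained by splitting $\int p^\alpha q^{1-\alpha}\,\d x$ over the regions $x<0$, $0\le x\le\gamma$, and $x>\gamma$ (assuming $\gamma\ge0$) and integrating each exponential piece; the middle region converges because $2\alpha-1>0$ for $\alpha>1$. The numerator is nondecreasing in $\gamma\ge0$ (its $\gamma$-derivative is proportional to $e^{(\alpha-1)\gamma/b}-e^{-\alpha\gamma/b}\ge0$), so $D_\alpha$ is maximized at $\gamma=\Delta_{\log}(E)$, where the numerator equals $\mathbf{h}_\alpha(1/b)$ by definition of $\mathbf{h}_\alpha$.

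Finally, $\mathbf{h}_\alpha$ is a strictly increasing bijection from $[0,\infty)$ onto $[2\alpha-1,\infty)$: $\mathbf{h}_\alpha(0)=2\alpha-1$, $\mathbf{h}_\alpha(t)\to\infty$, and $\mathbf{h}_\alpha'(t)=\alpha(\alpha-1)\Delta_{\log}(E)\bigl(e^{(\alpha-1)\Delta_{\log}(E)t}-e^{-\alpha\Delta_{\log}(E)t}\bigr)>0$ for $t>0$; since $\epsilon>0$, the point $(2\alpha-1)e^{(\alpha-1)\epsilon}$ lies in the range, so $\mathbf{h}_\alpha^{-1}$ and hence $b_{\alpha,\epsilon}$ are well-defined. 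Substituting $1/b_{\alpha,\epsilon}=\mathbf{h}_\alpha^{-1}\bigl((2\alpha-1)e^{(\alpha-1)\epsilon}\bigr)$, i.e.\ $\mathbf{h}_\alpha(1/b_{\alpha,\epsilon})=(2\alpha-1)e^{(\alpha-1)\epsilon}$, into the displayed bound yields $D_\alpha\le\frac{1}{\alpha-1}\log\frac{(2\alpha-1)e^{(\alpha-1)\epsilon}}{2\alpha-1}=\epsilon$, the claimed guarantee. I expect the main obstacle to be the Rényi-divergence integral for shifted Laplace laws — carefully treating the three regions and recognizing the result as $\mathbf{h}_\alpha$ — together with the monotonicity argument that pins the worst case to $\gamma=\Delta_{\log}(E)$; the validity part is a short MGF calculation.
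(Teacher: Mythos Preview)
Your proposal is correct and follows essentially the same approach as the paper: reducing privacy to the additive-noise problem on $\log E$ via post-processing, computing the closed-form R\'enyi divergence between shifted Laplace laws (the paper isolates this as a lemma), exploiting its monotonicity in the shift to replace $\gamma$ by $\Delta_{\log}(E)$, inverting the strictly increasing $\mathbf{h}_\alpha$, and handling validity through the Laplace MGF at $t=-1$. The only cosmetic differences are that you present validity before privacy and you make the well-definedness of $\mathbf{h}_\alpha^{-1}$ explicit by noting $\mathbf{h}_\alpha(0)=2\alpha-1$ and $\epsilon>0$.
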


Interestingly, the biased Laplace mechanism is only viable when the sensitivity is not too high, due to the $b_{\alpha,\epsilon} < 1$ requirement. This is in contrast to the more usual (non-biased) Laplace mechanism for differential privacy, which is always applicable (but does not ensure validity of the e-values). The biased Gaussian mechanism, however, is always available.

We can also exactly characterize the stastical power of our differentially private e-values, in terms of their expected growth rates \citep{kelly}:

\begin{proposition}\label{thm:hyp-test-power}
    Let $E^\DP(D)$ be as in Equation~\ref{eq:e-dp-form}. Then
    \[ \E\left[\frac{1}{n} \log E^\DP(D)\right] = \E\left[\frac{1}{n} \log E(D)\right] - \frac{\E[\xi]}{n}. \]
\end{proposition}

Crucially, since the sensitivity of the log of e-values is usually at most constant w.r.t. $n$ (cf. Section~\ref{sec:experiments}), the penalty $\E[\xi]/n$ decays with a fast rate of $O(1/n)$. At the limit, our differentially private e-values are as powerful as their non-private counterparts.

Our differentially private e-values also behave in many of the usual ways, while preserving privacy.

\begin{proposition}[Optional continuation]\label{thm:optional-continuation-dp}
    If $E^\DP_1(D_1)$ and $E^\DP_2(D_2)$ are $(\alpha, \epsilon)$-Rényi differentially private e-values for a null hypothesis $H_0$, with data $D_1$ independent from $D_2$, then $E^\DP_1(D_1) \cdot E^\DP_2(D_2)$ is also an $(\alpha, \epsilon)$-Rényi differentially private e-value for $H_0$. Moreover, the release of both $E^\DP_1(D_1)$ and $E^\DP_1(D_1) \cdot E^\DP_2(D_2)$ is also $(\alpha, \epsilon)$-Rényi differentially private.
\end{proposition}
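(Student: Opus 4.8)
The plan is to split the statement into a validity claim and a privacy claim, and --- crucially --- to handle privacy by a \emph{parallel} composition argument rather than the sequential composition of \Cref{thm:background-composition}. The validity half is immediate: a differentially private e-value is in particular an e-value, so $E^\DP_1(D_1)$ and $E^\DP_2(D_2)$ are e-values for $H_0$ over the independent datasets $D_1$ and $D_2$, and optional continuation for ordinary e-values (\Cref{thm:background-optional-continuation}) gives that $E^\DP_1(D_1)\cdot E^\DP_2(D_2)$ is again an e-value for $H_0$.

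For the privacy of the joint release, I would first note that the released pair $\bigl(E^\DP_1(D_1),\, E^\DP_1(D_1)\cdot E^\DP_2(D_2)\bigr)$ is the image of $\bigl(E^\DP_1(D_1),\, E^\DP_2(D_2)\bigr)$ under the fixed map $(a,b)\mapsto(a,ab)$; by post-processing (\Cref{thm:background-postprocessing}) it therefore suffices to show that $\bigl(E^\DP_1(D_1),\, E^\DP_2(D_2)\bigr)$ is $(\alpha,\epsilon)$-Rényi DP as a function of the combined dataset $D=(D_1,D_2)$. Fix neighbors $D=(D_1,D_2)$ and $D'=(D_1',D_2')$. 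The single differing record lies in exactly one of the two halves, so without loss of generality $D_1,D_1'$ are neighbors and $D_2=D_2'$ (the opposite case being symmetric). Since $E^\DP_1$ and $E^\DP_2$ are computed from disjoint data using independent randomness and independent additive noise, the law of the pair is a product measure, and Rényi divergence is additive over product measures:
\begin{align*}
&D_\alpha\bigl( (E^\DP_1(D_1),E^\DP_2(D_2)) \Mid (E^\DP_1(D_1'),E^\DP_2(D_2')) \bigr) \\
&\qquad = D_\alpha\bigl( E^\DP_1(D_1) \Mid E^\DP_1(D_1') \bigr) + D_\alpha\bigl( E^\DP_2(D_2) \Mid E^\DP_2(D_2) \bigr).
\end{align*}
The second term vanishes because $D_2=D_2'$, and the first is at most $\epsilon$ by the hypothesis that $E^\DP_1$ is $(\alpha,\epsilon)$-Rényi DP. Taking the supremum over neighbors shows the pair --- and hence the joint release --- is $(\alpha,\epsilon)$-Rényi DP.

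Finally, the product $E^\DP_1(D_1)\cdot E^\DP_2(D_2)$ is recovered from the joint release by projecting onto the second coordinate, so one more application of post-processing (\Cref{thm:background-postprocessing}) establishes that it too is $(\alpha,\epsilon)$-Rényi DP. The one place that needs care --- and the main conceptual obstacle --- is resisting a direct appeal to \Cref{thm:background-composition}, which would only yield $2\epsilon$: the improvement to $\epsilon$ comes precisely from the fact that a single-record change perturbs only one of the two disjoint sub-datasets, combined with additivity of Rényi divergence over products and $D_\alpha(P\Mid P)=0$. A secondary, routine point is to state the standing assumption that the two private e-values use mutually independent randomness (already implicit in \Cref{thm:background-optional-continuation}) and to note that $(a,b)\mapsto(a,ab)$ is a genuine post-processing map (indeed invertible almost surely, since $e^{-\xi}>0$).
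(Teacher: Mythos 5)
Your proposal is correct and follows essentially the same route as the paper: both establish validity from independence of $D_1$ and $D_2$ (the product of e-values over independent data is an e-value), and both obtain privacy at level $\epsilon$ (not $2\epsilon$) via a parallel-composition argument — the single differing record lies in only one half, the joint law of $\bigl(E^\DP_1(D_1), E^\DP_2(D_2)\bigr)$ is a product measure by independence of the noise, Rényi divergence is additive over products with one term vanishing — followed by post-processing to get the release of $\bigl(E^\DP_1(D_1),\, E^\DP_1(D_1)\cdot E^\DP_2(D_2)\bigr)$. Your writeup makes the post-processing map $(a,b)\mapsto(a,ab)$ and the vanishing divergence term slightly more explicit than the paper, but the substance is identical.
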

\begin{proposition}[E-to-p conversion]\label{thm:p-to-e-dp}
    If $E^\DP(D)$ is an $(\alpha, \epsilon)$-Rényi differentially private e-value for a null hypothesis $H_0$, then $1/E^\DP(D)$ is an $(\alpha, \epsilon)$-Rényi differentially private post-hoc valid p-value for $H_0$.
\end{proposition}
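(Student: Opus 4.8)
The plan is to derive both conclusions as direct corollaries of results already in hand: the post-processing theorem for Rényi differential privacy (Proposition~\ref{thm:background-postprocessing}) for the privacy half, and the classical e-to-p conversion with its post-hoc strengthening (Proposition~\ref{thm:background-e-to-p}) for the validity half. The two arguments do not interact, so I would carry them out separately.

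For privacy, I would write $1/E^\DP(D) = f(E^\DP(D))$ where $f\colon [0,\infty] \to [0,\infty]$ is the fixed measurable map $x \mapsto 1/x$ (with the convention $1/0 \coloneq +\infty$). Since $E^\DP(D)$ is $(\alpha,\epsilon)$-Rényi differentially private by hypothesis and $f$ adds no randomness and does not depend on the data, Proposition~\ref{thm:background-postprocessing} immediately gives that $1/E^\DP(D)$ is $(\alpha,\epsilon)$-Rényi differentially private.

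For validity, I would invoke Proposition~\ref{thm:background-e-to-p}: the hypothesis states that $E^\DP(D)$ is a valid e-value for $H_0$ (as guaranteed by Theorems~\ref{thm:hyp-test-valid-gaussian} and~\ref{thm:hyp-test-valid-laplace}), so by the cited proposition $1/E^\DP(D)$ is automatically a post-hoc valid p-value for $H_0$. Concretely this is Markov's inequality under the null --- $\P[1/E^\DP(D) \le t] = \P[E^\DP(D) \ge 1/t] \le t\,\E[E^\DP(D)] \le t$ for $t \in (0,1]$ --- and the post-hoc (arbitrarily data-dependent significance level) version transfers without change.

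There is no real obstacle; the only point requiring care is the event $\{E^\DP(D) = 0\}$, where $1/E^\DP(D) = +\infty$. This affects neither validity (it only enlarges the p-value, so the inequality above still holds) nor privacy (the extended-real map $x \mapsto 1/x$ remains measurable). If a $[0,1]$-valued p-value is preferred, one can post-process instead by $x \mapsto \min(1/x, 1)$, which is again a fixed measurable function, so both conclusions persist.
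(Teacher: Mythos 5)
Your proof is correct and follows essentially the same route as the paper: validity of $1/E^\DP(D)$ as a post-hoc valid p-value via the standard e-to-p conversion (Proposition~\ref{thm:background-e-to-p}, citing the same underlying result), and privacy via the post-processing theorem applied to $t \mapsto 1/t$. The extra details you supply (the explicit Markov-inequality computation and the handling of the event $\{E^\DP(D)=0\}$) only make the argument more self-contained.
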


A notable exception, however, is averaging. Though the average of e-evalues is always an e-value, the privacy guarantee may degrade under certain conditions.
\begin{proposition}[Independent Averaging] \label{thm:independent-averaging-dp}
    If $E^\DP_1(D_1)$ and $E^\DP_2(D_2)$ are $(\alpha, \epsilon)$-Rényi differentially private e-values for a null hypothesis $H_0$, then for any $\eta \in [0, 1]$, $\eta E^\DP_1(D_1) + (1-\eta) E^\DP_2(D_2)$ is also an $(\alpha, \epsilon)$-differentially private e-value for $H_0$.
\end{proposition}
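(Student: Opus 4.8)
The plan is to separate the two assertions---that the average is a valid e-value, and that it is $(\alpha,\epsilon)$-Rényi differentially private---and prove each using only results already in the excerpt. Validity is immediate and uses no independence: since $E^\DP_1(D_1)$ and $E^\DP_2(D_2)$ are assumed to be e-values for $H_0$, Proposition~\ref{thm:background-averaging} directly gives that $\eta E^\DP_1(D_1) + (1-\eta) E^\DP_2(D_2)$ is an e-value for $H_0$ for every $\eta \in [0,1]$. So the work is entirely in the privacy claim, and this is where the qualifier ``independent'' matters: as in Proposition~\ref{thm:optional-continuation-dp}, it should be read as saying that $D_1$ and $D_2$ are built from disjoint sets of records and that the two privatizing mechanisms use independent randomness.

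For privacy I would argue by parallel composition followed by post-processing. Fix neighboring datasets; by disjointness the single changed record lies in exactly one block---say $D_1$ is replaced by a neighbor $D_1'$ while $D_2$ is untouched (the other case is symmetric). Because $E^\DP_1(D_1)$ and $E^\DP_2(D_2)$ are independent, their joint law is a product measure, and Rényi divergence is additive across product measures, so
\[
 D_\alpha\big((E^\DP_1(D_1),E^\DP_2(D_2)) \,\big\Vert\, (E^\DP_1(D_1'),E^\DP_2(D_2))\big)
 = D_\alpha\big(E^\DP_1(D_1)\,\big\Vert\,E^\DP_1(D_1')\big) + D_\alpha\big(E^\DP_2(D_2)\,\big\Vert\,E^\DP_2(D_2)\big).
\]
The second term is $0$ and the first is at most $\epsilon$ by the $(\alpha,\epsilon)$-Rényi privacy of $E^\DP_1$, so the joint release $(E^\DP_1(D_1),E^\DP_2(D_2))$ is $(\alpha,\epsilon)$-Rényi differentially private. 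Since $\eta E^\DP_1(D_1) + (1-\eta) E^\DP_2(D_2)$ is a deterministic function of this pair, Proposition~\ref{thm:background-postprocessing} yields the same guarantee for the average, finishing the proof.

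The one subtlety to handle carefully---and the reason the independence hypothesis is not cosmetic---is the constant: naively treating $(E^\DP_1(D_1),E^\DP_2(D_2))$ as a generic composition and invoking Proposition~\ref{thm:background-composition} would only give $(\alpha,2\epsilon)$. The sharp $\epsilon$ comes from the fact that, for disjoint blocks, a single-record change perturbs only one coordinate, so the divergence does not accumulate across the two mechanisms; formally this is the additivity of Rényi divergence over the product law, applied with a zero-divergence second factor. Beyond that the proof is routine: one records the symmetric case (changed record in $D_2$), and notes that since differential privacy is stated for fixed neighboring datasets, no distributional assumption on $D_1, D_2$ is needed beyond the disjoint-records / independent-noise structure.
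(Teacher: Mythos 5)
Your proposal is correct and follows essentially the same route as the paper: decompose the neighboring relation so that only one block changes, use additivity of the Rényi divergence over the independent (product) randomness of the two mechanisms with one term zero and the other at most $\epsilon$, conclude the joint release is $(\alpha,\epsilon)$-RDP, and finish with post-processing, with validity of the average following from linearity of expectation. The parallel-composition subtlety you highlight (avoiding the naive $2\epsilon$ from generic composition) is exactly the point of the paper's case analysis as well.
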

\begin{proposition}[Dependent Averaging] \label{thm:dependent-averaging-dp}
    More generally, if $E^\DP_1(D)$ and $E^\DP_2(D)$ are $(\alpha, \epsilon)$-Rényi differentially private e-values for a null hypothesis $H_0$, then for any $\eta \in [0, 1]$, $\eta E^\DP_1(D) + (1-\eta) E^\DP_2(D)$ is an $(\alpha, 2\epsilon)$-differentially private e-value for $H_0$.
\end{proposition}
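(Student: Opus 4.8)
The statement bundles together two claims: that $\eta E^\DP_1(D) + (1-\eta) E^\DP_2(D)$ is a valid e-value, and that it satisfies $(\alpha, 2\epsilon)$-Rényi differential privacy. I would treat these separately. For the e-value claim there is essentially nothing new to do: it is exactly \Cref{thm:background-averaging} specialized to $D_1 = D_2 = D$, since that proposition holds for \emph{arbitrary} dependence between the two datasets. Spelled out, under $H_0$ linearity of expectation gives $\E[\eta E^\DP_1(D) + (1-\eta) E^\DP_2(D)] = \eta\,\E[E^\DP_1(D)] + (1-\eta)\,\E[E^\DP_2(D)] \le \eta + (1-\eta) = 1$, the expectation being over the data together with the independent noise each mechanism injects, and nonnegativity is immediate from $\eta \in [0,1]$ and $E^\DP_i(D) \ge 0$.

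The privacy claim is the substantive part, and my plan is to derive it purely from the composition and post-processing theorems. First, regard $E^\DP_1(\cdot)$ and $E^\DP_2(\cdot)$ as randomized algorithms with independent internal randomness (the noise variables of \Cref{thm:hyp-test-valid-gaussian,thm:hyp-test-valid-laplace} are drawn independently), each $(\alpha,\epsilon)$-Rényi differentially private by hypothesis. \Cref{thm:background-composition} then says the joint release $D \mapsto \bigl(E^\DP_1(D),\, E^\DP_2(D)\bigr)$ is $(\alpha, 2\epsilon)$-Rényi differentially private. Next, write $\eta E^\DP_1(D) + (1-\eta) E^\DP_2(D) = f\bigl(E^\DP_1(D), E^\DP_2(D)\bigr)$ where $f(a,b) \coloneq \eta a + (1-\eta) b$ is a fixed, data-independent map --- i.e., the averaged e-value is a post-processing of the joint release --- so \Cref{thm:background-postprocessing} transfers the $(\alpha, 2\epsilon)$ guarantee to it, which finishes the argument.

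What is worth emphasizing, rather than a genuine obstacle, is why the budget here is $2\epsilon$ and not the $\epsilon$ of \Cref{thm:independent-averaging-dp}. When the two e-values are built from disjoint data, a one-record change perturbs at most one of them, so a parallel-composition argument keeps the cost at $\epsilon$; but when both are functions of the same $D$, a single record can move both outputs at once, and sequential composition --- which sums the budgets --- is the sharpest generic bound. The only thing one actually has to be careful about is that the averaging map qualifies as post-processing, i.e., that $\eta$ is a fixed constant and not chosen as a function of $D$; granting that, the proof is immediate.
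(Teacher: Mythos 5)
Your proposal is correct and matches the paper's own argument essentially step for step: privacy follows from sequential composition applied to the joint release $(E^\DP_1(D), E^\DP_2(D))$ followed by post-processing with the fixed averaging map, and validity follows from linearity of expectation under the null. The additional remarks on why the budget doubles relative to the independent-data case are sound but not needed for the proof itself.
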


\subsection{Algorithms atop e-values and confidence intervals}

Beyond hypothesis testing, e-values are also commonly used as fundamental building blocks in larger algorithms.
As long as the number of e-values used is finite, the standard composition theorems of differential privacy apply, and so for appropriately chosen values of $(\alpha, \epsilon)$ we can simply replace the procedure's e-values with our differentially private ones and attain validity with privacy.

Formally, we have an algorithm $\algo(E_1, \ldots, E_k)$, receiving as input $k$ e-values for $k$ respective null hypotheses $H^{(k)}_0$.
We need some notion of validity of the overall algorithm, which we assume holds whenever the input e-values are all valid.
\begin{assumption}\label{thm:algo-valid-assumption}
    If $E_1, \ldots, E_k$ are valid e-values for the nulls $H^{(1)}, \ldots, H^{(k)}$, then $\algo(E_1, \ldots, E_k)$ is valid.
\end{assumption}

Then the next result follows by the standard composition theorem of (Rényi) differential privacy.

\begin{theorem}\label{thm:general-algo-valid}
    Under Assumption~\ref{thm:algo-valid-assumption},
    let $\alpha > 1$ and $\epsilon > 0$.
    For each $j = 1, \ldots, k$, let $E^\DP_j$ be an $(\alpha, \epsilon/k)$-Rényi differentially private e-value for the null $H_0^{(j)}$ (e.g., obtained through Theorems~\ref{thm:hyp-test-valid-gaussian} and \ref{thm:hyp-test-valid-laplace}).
    Then $\algo(E^\DP_1, \ldots, E^\DP_k)$ is valid and $(\alpha, \epsilon)$-Rényi differentially private.
\end{theorem}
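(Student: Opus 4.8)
The plan is to separate the two claims—validity and privacy—and dispatch each with the tools already assembled. For validity, I would first observe that each $E^\DP_j$ is, by hypothesis (e.g., via Theorems~\ref{thm:hyp-test-valid-gaussian} or \ref{thm:hyp-test-valid-laplace}, or however it was produced), a valid e-value for the null $H_0^{(j)}$. Assumption~\ref{thm:algo-valid-assumption} then applies verbatim with input $E_1 = E^\DP_1, \ldots, E_k = E^\DP_k$, yielding immediately that $\algo(E^\DP_1, \ldots, E^\DP_k)$ is valid.

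For privacy, the idea is to build up the joint release $(E^\DP_1, \ldots, E^\DP_k)$ by iterated composition and then recognize $\algo$ as post-processing. Concretely, view each $E^\DP_j$ as the output of a randomized algorithm $\algo_j(\cdot)$ acting on the underlying data; by hypothesis $\algo_j$ is $(\alpha, \epsilon/k)$-Rényi differentially private. Applying Proposition~\ref{thm:background-composition} once gives that $(\algo_1(\cdot), \algo_2(\cdot))$ is $(\alpha, 2\epsilon/k)$-Rényi differentially private, and a straightforward induction on $j$—at each step treating the tuple of the first $j$ outputs as a single algorithm and composing it with $\algo_{j+1}$—shows that $(\algo_1(\cdot), \ldots, \algo_k(\cdot))$ is $(\alpha, k \cdot \epsilon/k) = (\alpha, \epsilon)$-Rényi differentially private. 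Finally, $\algo(E^\DP_1, \ldots, E^\DP_k)$ depends on the data only through the tuple $(E^\DP_1, \ldots, E^\DP_k)$, so it is a (possibly randomized) post-processing of this tuple, and Proposition~\ref{thm:background-postprocessing} gives that it too is $(\alpha, \epsilon)$-Rényi differentially private, completing the proof.

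The argument has no real obstacle; the only points requiring a modicum of care are (i) applying the composition in the worst-case sense, i.e., \emph{without} assuming that the datasets feeding the different $E^\DP_j$ are disjoint—Proposition~\ref{thm:background-composition} already covers this, since it quantifies over neighbouring datasets of the combined mechanism, and it is precisely this conservatism that forces the per-hypothesis budget $\epsilon/k$—and (ii) noting that post-processing through a randomized map is covered by Proposition~\ref{thm:background-postprocessing}, so any internal randomness of $\algo$ (tie-breaking, sampling, etc.) does not affect the privacy accounting. If one additionally knew that the $k$ e-values were computed on disjoint data shards, a parallel-composition argument would allow each shard to be run at budget $\epsilon$ rather than $\epsilon/k$; the stated theorem makes the conservative choice so as to remain valid under arbitrary data reuse across the $k$ hypotheses.
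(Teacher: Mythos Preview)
Your proposal is correct and follows essentially the same approach as the paper's proof: validity is immediate from Assumption~\ref{thm:algo-valid-assumption}, and privacy follows from the composition theorem (Proposition~\ref{thm:background-composition}) applied $k$ times to obtain $(\alpha,\epsilon)$-RDP for the tuple $(E^\DP_1,\ldots,E^\DP_k)$, followed by post-processing (Proposition~\ref{thm:background-postprocessing}). Your additional remarks on worst-case composition versus parallel composition and on randomized post-processing are accurate elaborations but not needed for the theorem as stated.
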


This works out of the box for many algorithms. However, it is not enough for algorithms that (formally) depend on an infinite set of e-values. A particularly notable example of this are confidence intervals; an e-value-based confidence interval for some parameter $\theta^\star \in \Theta$ is typically defined by taking a family of e-values $(E_\theta(D))_{\theta \in \Theta}$ for nulls $H^{(\theta)}_0 : \theta^\star = \theta$, and inverting the test as $C_\alpha(D) \coloneq \{ \theta \in \Theta : E_\theta(D) < 1/\alpha \}$.

To resolve this, we define a procedure that leverages a finite amount of our differentially private e-values to provide a provably valid confidence interval, under the assumption that the log of the (non-private) e-value, $\log E_\theta(D)$, is locally Lipschitz in the parameter $\theta$. For simplicity we present here the scalar case where $\Theta \subset \R$, but the same technique can be applied in $\R^d$:
Let $0 = a_0 < \cdots < a_k = 1$ be a partition of $[0, 1]$, and define the corresponding midpoints $\theta_{j} \coloneq (a_{j-1} + a_j)/2$.
Because $\log E_\theta(D)$ is locally Lipschitz, it is $L_j$-Lipschitz within $[a_{j-1}, a_j]$, for each $j = 1, \ldots, k$; and thus, for all $\theta' \in [a_{j-1}, a_j]$,
\begin{align}
    \log E_{\theta'}(D) &\geq \log E_{\theta_j}(D) - L_j \lvert \theta' - \theta_j \rvert \nonumber
    \\
    \implies
    E_{\theta'}(D) &\geq E_{\theta_j}(D) \cdot e^{-L_j \lvert \theta' - \theta_j \rvert} \label{eq:e-tilde}
    \\ &\geq E_{\theta_j}(D) \cdot e^{-L_j (a_j - a_{j-1})} \eqcolon \widetilde{E}_{\theta_j}(D), \nonumber
\end{align}
and the last right-hand-side is independent of $\theta'$.
So $\widetilde{E}_{\theta_j}(D)$ is simultaneously an e-value for all nulls $H_0 : \theta^\star = \theta'$, $\theta' \in [a_{j-1}, a_j]$, as
\[ \E[E_{\theta_j}(D) \cdot e^{-L_j (a_j - a_{j-1})}] \leq \E[E_{\theta'}(D)] \leq 1. \]
Now that we have a finite amount of e-values covering all our nulls, we can straightforwardly define a procedure atop them to generate an actual CI from the finite e-values:
\begin{equation*}\label{eq:ci}
    \CI_{\alpha}\!\left( \widetilde{E}_{\theta_1}(D), \ldots, \widetilde{E}_{\theta_k}(D) \right) \coloneq \!\! \bigcup_{\substack{j=1 \\ \widetilde{E}_{\theta_j}(D) \leq 1/\alpha }}^k [a_{j-1}, a_j].
\end{equation*}
Differential privacy is now easily attained by creating differentially private versions of $\widetilde{E}_{\theta_1}(D), \ldots, \widetilde{E}_{\theta_k}(D)$ (e.g., by Theorems~\ref{thm:hyp-test-valid-laplace} and \ref{thm:hyp-test-valid-gaussian}) and applying Theorem~\ref{thm:general-algo-valid}.
\begin{corollary}\label{thm:valid-ci}
    Suppose $\log E_\theta(D)$ is locally Lipschitz in $\theta$.
    Let $\alpha > 1$ and $\epsilon > 0$,
    and let $\widetilde{E}^\DP_{\theta_1}, \ldots, \widetilde{E}^\DP_{\theta_k}$ be $(\alpha, \epsilon/k)$-Rényi differentially private versions of $\widetilde{E}_{\theta_1}, \ldots, \widetilde{E}_{\theta_k}$ as defined in Equation~\ref{eq:e-tilde}.
    Then $\CI_\alpha\bigl( \widetilde{E}^\DP_{\theta_1}(D), \ldots, \widetilde{E}^\DP_{\theta_k}(D) \bigr)$ is an $(\alpha, \epsilon)$-Rényi differentially private confidence interval for $\theta^\star$, i.e., it satisfies $(\alpha, \epsilon)$-Rényi differential privacy, and
    \[ \P\left[ \theta^\star \in \CI_\alpha\!\left(\widetilde{E}^\DP_{\theta_1}(D), \ldots, \widetilde{E}^\DP_{\theta_k}(D)\right) \right] \geq 1 - \alpha. \]
\end{corollary}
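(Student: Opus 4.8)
The plan is to prove the two assertions of the corollary separately: the differential-privacy guarantee, and the coverage bound $\P[\theta^\star \in \CI_\alpha(\widetilde{E}^\DP_{\theta_1}(D),\dots,\widetilde{E}^\DP_{\theta_k}(D))] \ge 1-\alpha$. The privacy direction is the routine one. By hypothesis each $\widetilde{E}^\DP_{\theta_j}(\cdot)$ is $(\alpha,\epsilon/k)$-Rényi differentially private as a function of $D$, with noise drawn independently across $j$, so by $k$-fold composition (\Cref{thm:background-composition}) the joint release $(\widetilde{E}^\DP_{\theta_1}(D),\dots,\widetilde{E}^\DP_{\theta_k}(D))$ is $(\alpha,\epsilon)$-Rényi differentially private; and since the confidence set is a deterministic function of that tuple, post-processing (\Cref{thm:background-postprocessing}) preserves the guarantee. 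This is exactly \Cref{thm:general-algo-valid} instantiated with $\algo=\CI_\alpha$ (the privacy half of that theorem places no condition on $\algo$), so I would simply invoke it.

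For coverage, fix the true parameter $\theta^\star$ (I take $\Theta\subseteq[0,1]$, in line with the construction's partition of $[0,1]$). Since the cells $[a_{j-1},a_j]$, $j=1,\dots,k$, cover $[0,1]$, choose a \emph{deterministic} index $j^\star$ with $\theta^\star\in[a_{j^\star-1},a_{j^\star}]$ (if $\theta^\star$ is a shared endpoint, either adjacent cell works). The chain of inequalities leading to \Cref{eq:e-tilde}, applied with $\theta'=\theta^\star$ and $j=j^\star$, gives $\widetilde{E}_{\theta_{j^\star}}(D)\le E_{\theta^\star}(D)$ pointwise; hence under the data law whose true parameter is $\theta^\star$ we get $\E[\widetilde{E}_{\theta_{j^\star}}(D)]\le\E[E_{\theta^\star}(D)]\le 1$, i.e.\ $\widetilde{E}_{\theta_{j^\star}}(D)$ is a valid e-value for the \emph{realized} null. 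Applying \Cref{thm:hyp-test-valid-gaussian} or \Cref{thm:hyp-test-valid-laplace}, its privatized version $\widetilde{E}^\DP_{\theta_{j^\star}}(D)$ is again a valid e-value, so $\E[\widetilde{E}^\DP_{\theta_{j^\star}}(D)]\le 1$ as well.

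It then remains to convert this into coverage via Markov's inequality. By the definition of $\CI_\alpha$, whenever $\widetilde{E}^\DP_{\theta_{j^\star}}(D)\le 1/\alpha$ the whole cell $[a_{j^\star-1},a_{j^\star}]$ — and in particular $\theta^\star$ — lies in the confidence set; contrapositively, $\{\theta^\star\notin\CI_\alpha(\dots)\}\subseteq\{\widetilde{E}^\DP_{\theta_{j^\star}}(D)>1/\alpha\}$. Because $j^\star$ does not depend on $D$ and $\widetilde{E}^\DP_{\theta_{j^\star}}(D)\ge 0$, Markov gives $\P[\widetilde{E}^\DP_{\theta_{j^\star}}(D)\ge 1/\alpha]\le\alpha\,\E[\widetilde{E}^\DP_{\theta_{j^\star}}(D)]\le\alpha$, whence $\P[\theta^\star\notin\CI_\alpha(\dots)]\le\alpha$, i.e.\ $\P[\theta^\star\in\CI_\alpha(\dots)]\ge 1-\alpha$.

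The argument is short, and the only place needing care — the "main obstacle", such as it is — is the bookkeeping of which null each $\widetilde{E}_{\theta_j}$ is valid for: the e-value fed into Markov must be valid under the \emph{true} data-generating distribution, which is precisely why $j^\star$ is chosen so that $\theta^\star$ lies in its cell and why the Lipschitz lower bound of \Cref{eq:e-tilde} is invoked. It is equally essential that $j^\star$ is selected non-adaptively, from the fixed $\theta^\star$ rather than from $D$, so that no union bound over cells is required and the threshold $1/\alpha$ is not eroded. One upstream assumption worth keeping explicit is that each $L_j$ is a genuine, $D$-independent Lipschitz constant for $\log E_\theta(D)$ on $[a_{j-1},a_j]$; otherwise $\widetilde{E}_{\theta_j}(D)$ need not be the claimed pointwise lower bound, and the sensitivity used to privatize it would not be well defined.
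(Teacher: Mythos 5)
Your proposal is correct and follows essentially the same route as the paper: coverage comes from noting that the (privatized) $\widetilde{E}_{\theta_{j^\star}}$ for the cell containing $\theta^\star$ is a valid e-value and applying Markov's inequality at level $1/\alpha$, while privacy follows from $k$-fold composition plus post-processing, i.e.\ Theorem~\ref{thm:general-algo-valid}. The only cosmetic difference is that you apply Markov directly to the private e-value, whereas the paper first verifies Assumption~\ref{thm:algo-valid-assumption} for $\CI_\alpha$ with generic valid e-values and then invokes Theorem~\ref{thm:general-algo-valid} for both conclusions at once; the content is identical.
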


\section{Experiments and Applications}\label{sec:experiments}

In this section we empirically evaluate our method in three settings:
(i) confidence intervals for the prevalence of diabetes with private patient data (Section~\ref{sec:experiment-confseq});
(ii) private anytime-valid hypothesis testing for online risk monitoring (Section~\ref{sec:experiment-hyptest}); and
(iii) e-conformal prediction for the predictive modelling of online phishing attacks (Section~\ref{sec:experiment-ecp}).
Experiment details can be found in the supplementary material.

Code for all experiments can be found on \url{https://github.com/dccsillag/experiments-evalue-dp}.
All experiments were run on an AMD Ryzen 9 5950X CPU, with 64GB of RAM.
That said, they are lightweight and should easily run on weaker hardware.

\subsection{Private e-confidence intervals} \label{sec:experiment-confseq}

\begin{figure*}[t]
    \centering
    \includegraphics[width=\textwidth]{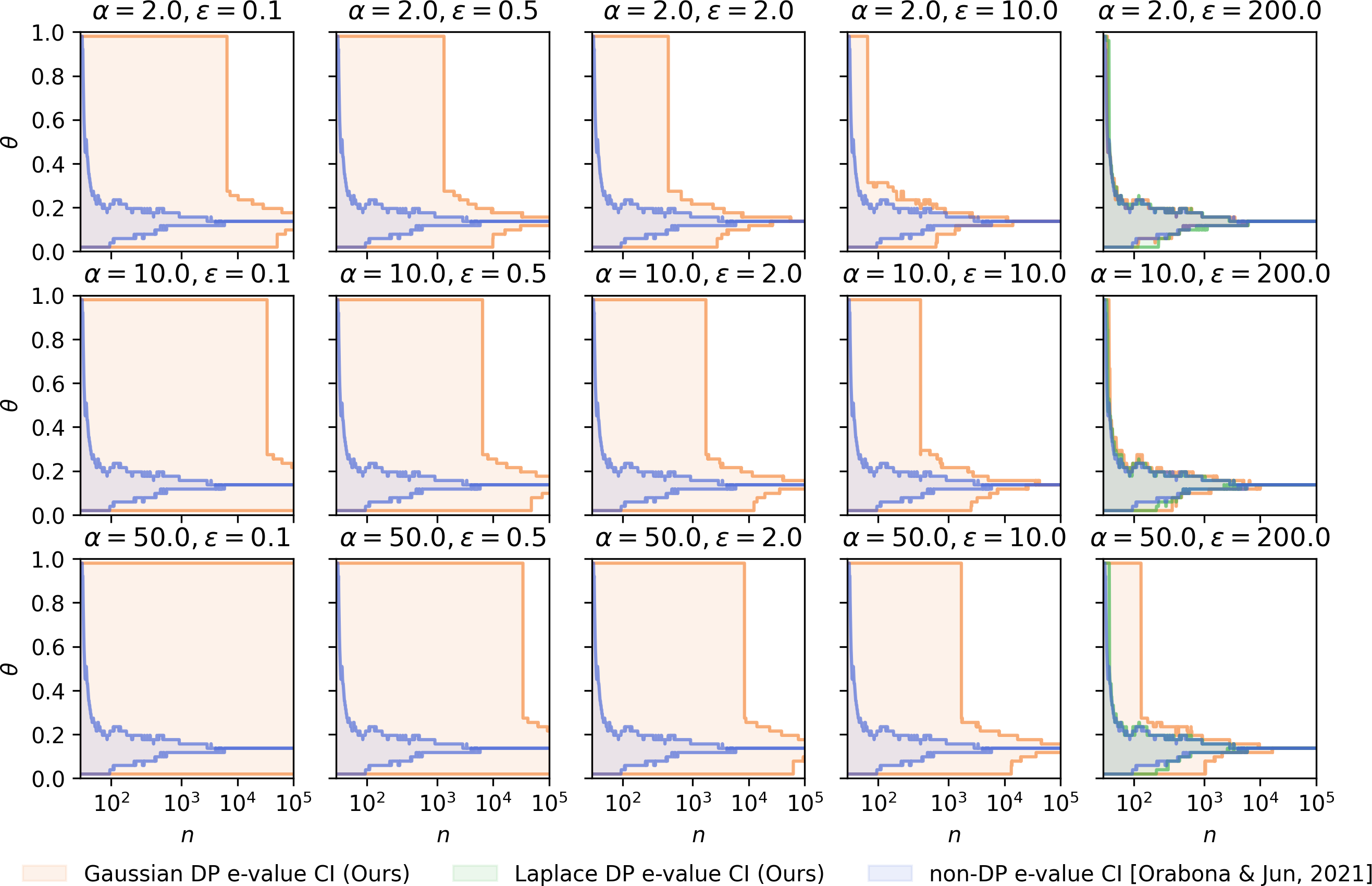}
    \caption{\textbf{Private e-confidence intervals.} We use our differentially private e-values to produce private confidence intervals for the prevalence of diabetes over a population. We illustrate the confidence intervals varying the number of samples $n$ (x axis) and Rényi privacy budgets $(\alpha, \epsilon)$.
    We display the non-private confidence intervals (\textcolor{myblue}{blue}) along with those obtained through our biased Gaussian mechanism (\textcolor{myorange}{orange}), and when applicable also the as the ones obtained via our biased Laplace mechanism (\textcolor{mygreen}{green}).
    Note that our biased Gaussian mechanism always converges to the non-private confidence intervals, as does the biased Laplace mechanism when available.
    }
    \label{fig:confseq}
\end{figure*}

We start by the problem of producing a confidence interval for the mean of a bounded random variable, following the work of \citep{evalue-mean}.
To do so, we leverage the e-value for the mean of \citep{evalue-mean} with betting following Cover's universal portfolios \citep{cover,evalue-mean-cover}:
\begin{equation} \label{eq:evalue-mean-cover}
    E_{\theta}(D) = \prod_{i=1}^n \bigl( 1 + \lambda_i (Y_i - \theta) \bigr)
    \quad \text{with} \quad
    \lambda_i = \frac{\E_{\lambda \sim F}\left[ \lambda \cdot \prod_{j=1}^i \bigl( 1 + \lambda (Y_j - \theta) \bigr) \right]}{\E_{\lambda \sim F}\left[ \prod_{j=1}^i \bigl( 1 + \lambda (Y_j - \theta) \bigr) \right]}, \nonumber
\end{equation}
where $F$ is a distribution with support in $[\lambda_{\inf}, \lambda_{\sup}] \subset (-1/(1-\theta), 1/\theta)$.

\begin{figure*}[t]
    \centering
    \includegraphics[width=\textwidth]{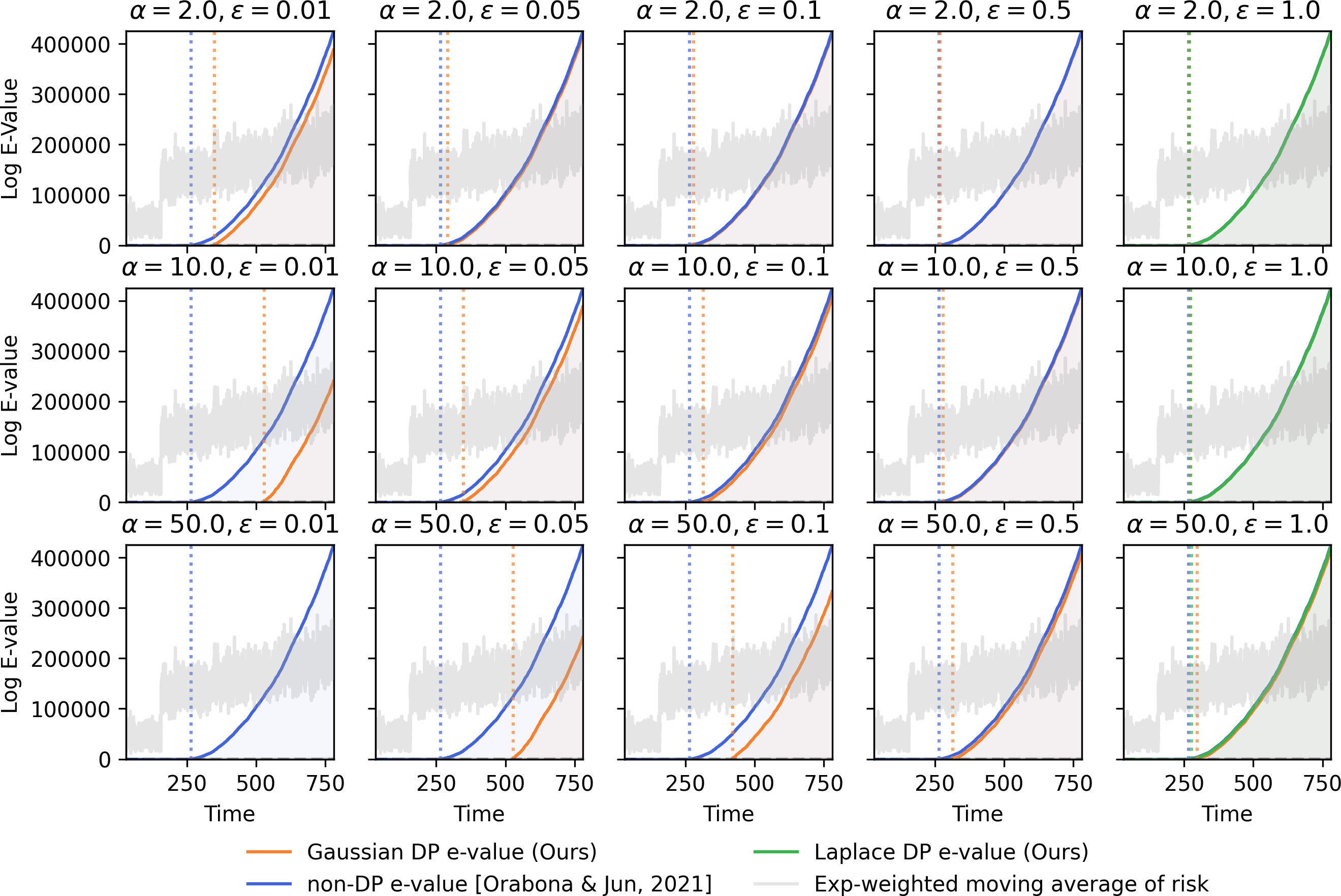}
    \caption{\textbf{Anytime-valid private hypothesis testing.} We apply our differentially private e-values for private continuous risk monitoring by the means of anytime-valid hypothesis testing. Our goal is to detect significant increases in the risk of a deployed predictive model, whose moving average can be seen in the background. The curves indicate the log of the e-values for the null hypothesis that the risk is under a certain safety threshold. Our private e-values closely match the non-private ones, quickly rejecting the null hypothesis after the change-point even for very small values of $\epsilon$.}
    \label{fig:hyptest}
\end{figure*}

Besides great performance, this choice of betting strategy $(\lambda_i)_{i=1}^n$ has the key property that the e-values $E_{\theta}(D)$ are invariant to permutations of the data $Y_1, \ldots, Y_n$ \citep{evalue-mean-cover,ville}.
Straightfoward computation then yields the following bound on the log--sensitivity of $E_{\theta}(D)$:
\begin{proposition}\label{thm:sensitivity-evalue-mean}
    Let $F$ have support in $[\lambda_{\inf}, \lambda_{\sup}] \subset (-1/(1-\theta), 1/\theta)$.
    The log-sensitivity of $E_\theta(D)$, $\Delta_{\log}(E_{\theta})$, is upper bounded as
    \begin{align*}
        \Delta_{\log}(E_{\theta}) &\leq
            \max\bigl\{
            \log ( 1 + \max\{\lambda_{\sup} (1-\theta), -\lambda_{\inf} \theta\} ),
            \\ &\qquad\quad -\log ( 1 + \min\{\lambda_{\inf}(1-\theta),-\lambda_{\sup}\theta\} )
            \bigr\}.
    \end{align*}
\end{proposition}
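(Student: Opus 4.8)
The plan is to exploit the mixture (Cover universal portfolio) representation $E_\theta(D) = \E_{\lambda\sim F}\bigl[\prod_{i=1}^n (1+\lambda(Y_i-\theta))\bigr]$ together with the permutation invariance of $E_\theta$ noted just before the statement. Since neighboring datasets differ by exactly one record, permutation invariance lets me assume without loss of generality that $D = (Y_1,\ldots,Y_{n-1},Y_n)$ and $D' = (Y_1,\ldots,Y_{n-1})$ share the prefix $Y_1,\ldots,Y_{n-1}$ and differ only in whether the last point $Y_n$ is present. Writing $w(\lambda) \coloneq \prod_{j=1}^{n-1}(1+\lambda(Y_j-\theta))$, this gives
\[ \frac{E_\theta(D)}{E_\theta(D')} = \frac{\E_{\lambda\sim F}\bigl[w(\lambda)\,(1+\lambda(Y_n-\theta))\bigr]}{\E_{\lambda\sim F}[w(\lambda)]}, \]
which is a weighted average of $\lambda \mapsto 1+\lambda(Y_n-\theta)$ against the nonnegative weights $w(\lambda)$; nonnegativity of $w(\lambda)$ is exactly the support condition $[\lambda_{\inf},\lambda_{\sup}]\subset(-1/(1-\theta),1/\theta)$, which forces each factor $1+\lambda(Y_j-\theta)>0$ for $Y_j\in[0,1]$. (Equivalently one can use the predictable product form and observe that $\lambda_n$ lies in the convex hull of $\supp F\subseteq[\lambda_{\inf},\lambda_{\sup}]$ while the bets on the shared prefix are unchanged.)

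Next I would bound the function being averaged. The map $(\lambda,y)\mapsto 1+\lambda(y-\theta)$ is affine in each argument, so over the rectangle $[\lambda_{\inf},\lambda_{\sup}]\times[0,1]$ it attains its extreme values at the four corners, namely $1+\lambda_{\inf}(1-\theta)$, $1+\lambda_{\sup}(1-\theta)$, $1-\lambda_{\inf}\theta$, and $1-\lambda_{\sup}\theta$. Using $1-\theta\ge 0$, $\theta\ge 0$, and $\lambda_{\inf}\le\lambda_{\sup}$ to discard the dominated corners, the maximum over the rectangle is $1+\max\{\lambda_{\sup}(1-\theta),-\lambda_{\inf}\theta\}$ and the minimum is $1+\min\{\lambda_{\inf}(1-\theta),-\lambda_{\sup}\theta\}$, and the support condition makes both strictly positive. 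Since the ratio $E_\theta(D)/E_\theta(D')$ is a convex combination of such values, it lies in the same interval.

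Finally, since $\log$ is increasing and the ratio is confined to $[1+m,\,1+M]$ with $m\coloneq\min\{\lambda_{\inf}(1-\theta),-\lambda_{\sup}\theta\}$ and $M\coloneq\max\{\lambda_{\sup}(1-\theta),-\lambda_{\inf}\theta\}$, the quantity $\lvert\log E_\theta(D)-\log E_\theta(D')\rvert$ is at most $\max\{\log(1+M),\,-\log(1+m)\}$: one always has $M\ge 0$ (it cannot happen that $\lambda_{\sup}<0$ while $\lambda_{\inf}>0$), so $\log(1+M)\ge0$ is the relevant endpoint contribution on the upper side, while on the lower side either $m<0$, giving $-\log(1+m)=\lvert\log(1+m)\rvert$, or $m\ge0$, in which case $-\log(1+m)\le0$ is harmlessly dominated by $\log(1+M)$. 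Taking the supremum over neighboring $D,D'$ yields the stated bound. The only mildly delicate steps are the reduction to a shared prefix via permutation invariance and the sign bookkeeping when collapsing the four corner values; neither is a genuine obstacle, which is why the paper can call this a "straightforward computation."
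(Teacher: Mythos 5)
Your proof is correct and follows essentially the same route as the paper's: reduce via permutation invariance to neighboring datasets differing only in the last element, observe that the change in $\log E_\theta$ is governed by a single factor of the form $1+\lambda(Y-\theta)$ with $\lambda$ effectively confined to $[\lambda_{\inf},\lambda_{\sup}]$, and bound that factor via the corner values of $[\lambda_{\inf},\lambda_{\sup}]\times[0,1]$ with the same sign bookkeeping. The only cosmetic difference is that you phrase the change as a ratio of mixtures (a $w(\lambda)$-weighted average of $1+\lambda(Y_n-\theta)$), whereas the paper telescopes the predictable product $\prod_i\bigl(1+\lambda_i(Y_i-\theta)\bigr)$ and bounds the single leftover term $\log\bigl(1+\lambda_{n+1}(Y_{n+1}-\theta)\bigr)$ --- the same quantity viewed two ways, as you note yourself.
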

Moreover, to apply our Proposition~\ref{thm:valid-ci} for constructing valid confidence intervals, we further bound the Lipschitz constant of the log of the e-value over $\theta \in [\theta_{\inf}, \theta_{\sup}]$:
\begin{proposition}\label{thm:lipschitz-bound-mean}
    For distributions $F$ with support contained in $[\lambda_{\inf}, \lambda_{\sup}]$, the Lipschitz constant of $\log E_\theta(D)$ over $\theta \in [\theta_{\inf}, \theta_{\sup}]$ is upper bounded by
    \[
        \lVert\theta \mapsto \log E_\theta(D)\rVert_{\mathrm{Lip}}
        \leq \max\left\{
            \left\lvert \frac{\lambda_{\sup}}{1 - \lambda_{\sup} \theta_{\sup}} \right\rvert,
            \left\lvert \frac{\lambda_{\inf}}{1 + \lambda_{\inf} (1 - \theta_{\inf})} \right\rvert
            \right\}.
    \]
\end{proposition}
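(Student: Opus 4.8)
The plan is to reduce this to a one–variable optimization by first replacing the data–dependent betting fractions $\lambda_i$ with a mixture representation. Cover's universal portfolio is a mixture of constant–fraction bets, so the wealth telescopes and $E_\theta(D) = \E_{\lambda\sim F}\bigl[\prod_{i=1}^n(1+\lambda(Y_i-\theta))\bigr]$ — the same identity that underlies the permutation-invariance mentioned before Proposition~\ref{thm:sensitivity-evalue-mean} and the log–sensitivity bound there. I would work with this form throughout, since it removes the $\theta$–dependent $\lambda_i$ from the picture: a direct term-by-term differentiation of $\sum_i \log(1+\lambda_i(Y_i-\theta))$ would create awkward $\d\lambda_i/\d\theta$ contributions, whereas the mixture form contains $\theta$ only through the factors $1+\lambda(Y_i-\theta)$.

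Next I would differentiate $\log E_\theta(D)$ in $\theta$ under the expectation. Writing $P_\lambda := \prod_i(1+\lambda(Y_i-\theta)) \ge 0$, the derivative equals $-1$ times the $P_\lambda$–weighted average over $\lambda\sim F$ of $\sum_{i=1}^n \lambda/(1+\lambda(Y_i-\theta))$. Since this is a convex combination, its absolute value is at most $\sup_{\lambda\in[\lambda_{\inf},\lambda_{\sup}]}\bigl|\sum_i \lambda/(1+\lambda(Y_i-\theta))\bigr| \le \sum_i \sup_{\lambda}\bigl|\lambda/(1+\lambda(Y_i-\theta))\bigr|$ by the triangle inequality. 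This reduces the whole problem to bounding a single factor $\phi(\lambda,Y,\theta) = \lambda/(1+\lambda(Y-\theta))$ over $\lambda\in[\lambda_{\inf},\lambda_{\sup}]$, $Y\in[0,1]$, $\theta\in[\theta_{\inf},\theta_{\sup}]$, which is exactly the one–factor quantity appearing on the right-hand side of the claim.

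For that one–factor bound I would argue by monotonicity: $\partial_\lambda[\lambda/(1+\lambda c)] = (1+\lambda c)^{-2} > 0$ whenever $1+\lambda c>0$, so on each sign regime the worst $\lambda$ is the endpoint $\lambda_{\sup}$ (for $\lambda\ge 0$) or $\lambda_{\inf}$ (for $\lambda\le 0$), while for fixed $\lambda$ the magnitude is largest when the denominator $1+\lambda(Y-\theta)$ is as small as possible but still positive, forcing $(Y,\theta)=(0,\theta_{\sup})$ in the positive regime and $(Y,\theta)=(1,\theta_{\inf})$ in the negative regime. This yields the two candidates $\lvert\lambda_{\sup}/(1-\lambda_{\sup}\theta_{\sup})\rvert$ and $\lvert\lambda_{\inf}/(1+\lambda_{\inf}(1-\theta_{\inf}))\rvert$, whose maximum is the stated bound; it holds uniformly in $\theta$, so integrating the derivative estimate gives the Lipschitz constant. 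Positivity of all denominators (hence finiteness throughout and legitimacy of differentiating under the expectation) follows from the support hypothesis $[\lambda_{\inf},\lambda_{\sup}]\subset(-1/(1-\theta),1/\theta)$, which I would verify holds for every $\theta\in[\theta_{\inf},\theta_{\sup}]$, the binding cases being $\theta=\theta_{\inf}$ and $\theta=\theta_{\sup}$.

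The main obstacle is the first step. Because the $\lambda_i$ are functions of $\theta$ and of the data, any naive differentiation is unpleasant; the mixture/telescoping identity for Cover's portfolio is what makes the computation clean, and stating it precisely — together with justifying the exchange of $\d/\d\theta$ with the expectation, which requires the factors $1+\lambda(Y_i-\theta)$ to be bounded away from zero uniformly in $\theta$ and $\lambda$ — is the delicate part. Once that is in place, the remaining steps are routine calculus and a short case analysis.
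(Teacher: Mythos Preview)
Your proposal is correct and follows essentially the same route as the paper: pass to the mixture representation $E_\theta(D)=\E_{\lambda\sim F}\bigl[\prod_i(1+\lambda(Y_i-\theta))\bigr]$, differentiate under the expectation, recognize the result as a $P_\lambda$--weighted average over $\lambda$ (which the paper writes as a change of measure to $F_{n,\theta}$ with $\d F_{n,\theta}/\d F\propto E^{(\lambda)}_\theta(D)$), and then bound the single factor $\lambda/(1+\lambda(Y-\theta))$ by monotonicity to obtain the two endpoint candidates. The only cosmetic difference is that the paper bounds $-\lambda/(1+\lambda(Y-\theta))$ via its joint monotonicity in $Y-\theta$ and $\lambda$, whereas you split by the sign of $\lambda$; both yield the same extremes.
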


Altogether, we can now directly use the biased Gaussian and Laplace mechanisms defined in Section~\ref{sec:method-hyp-test} along with Proposition~\ref{thm:valid-ci} to construct differentially private e-value-based confidence intervals for our data.

For our experiment, we use the e-values from Equation~\ref{eq:evalue-mean-cover} atop the dataset of \citep{dataset-diabetes} for estimating the prevalence of diabetes,
and vary the Rényi privacy budget over $\alpha \in \{2, 10, 50\}$ and $\epsilon \in \{0.1, 0.5, 2, 10, 200\}$.
The results can be seen in Figure~\ref{fig:confseq};
we note that our method with biased Gaussian noise always converges to the non-private intervals as $n \to \infty$, and very closely matches the non-private confidence intervals over all $n$ with $\epsilon = 10$.
For smaller values of $\epsilon$ (e.g. 0.1), our method requires only mild increases in data size, past a certain point.
The biased Laplace mechanism, however, is not defined for most combinations of $(\alpha, \epsilon)$, due to the $b_{\alpha,\epsilon} < 1$ requirement (cf. Theorem~\ref{thm:hyp-test-valid-laplace}). But when it is defined, it is quite accurate.

\subsection{Anytime-valid hypothesis testing for online risk monitoring} \label{sec:experiment-hyptest}

We now turn our attention to the problem of online risk monitoring with private data.
In this setting we have a pre-trained predictive model $\widehat{\mu}$, and want to continuously track its test loss so as to ensure it does not go over some predetermined safety threshold.
Inspired by the work of \citep{evalue-risk-monitoring} and \citep{evalue-ppi},
we frame this as an anytime-valid hypothesis test for the null hypothesis $H_0 : \E[\mathrm{Loss}(\widehat{\mu}(X_i), Y_i)] \leq \mathrm{SafetyThreshold} \text{ for all } i$. This corresponds to a one-sided test for the mean, which we can perform using the same e-value from Equation~\ref{eq:evalue-mean-cover}, but with $F$ now being the uniform distribution with support in $[0, c/\theta)$ for some $0 < c < 1$.

\begin{figure*}[t]
    \centering
    \includegraphics[width=\textwidth]{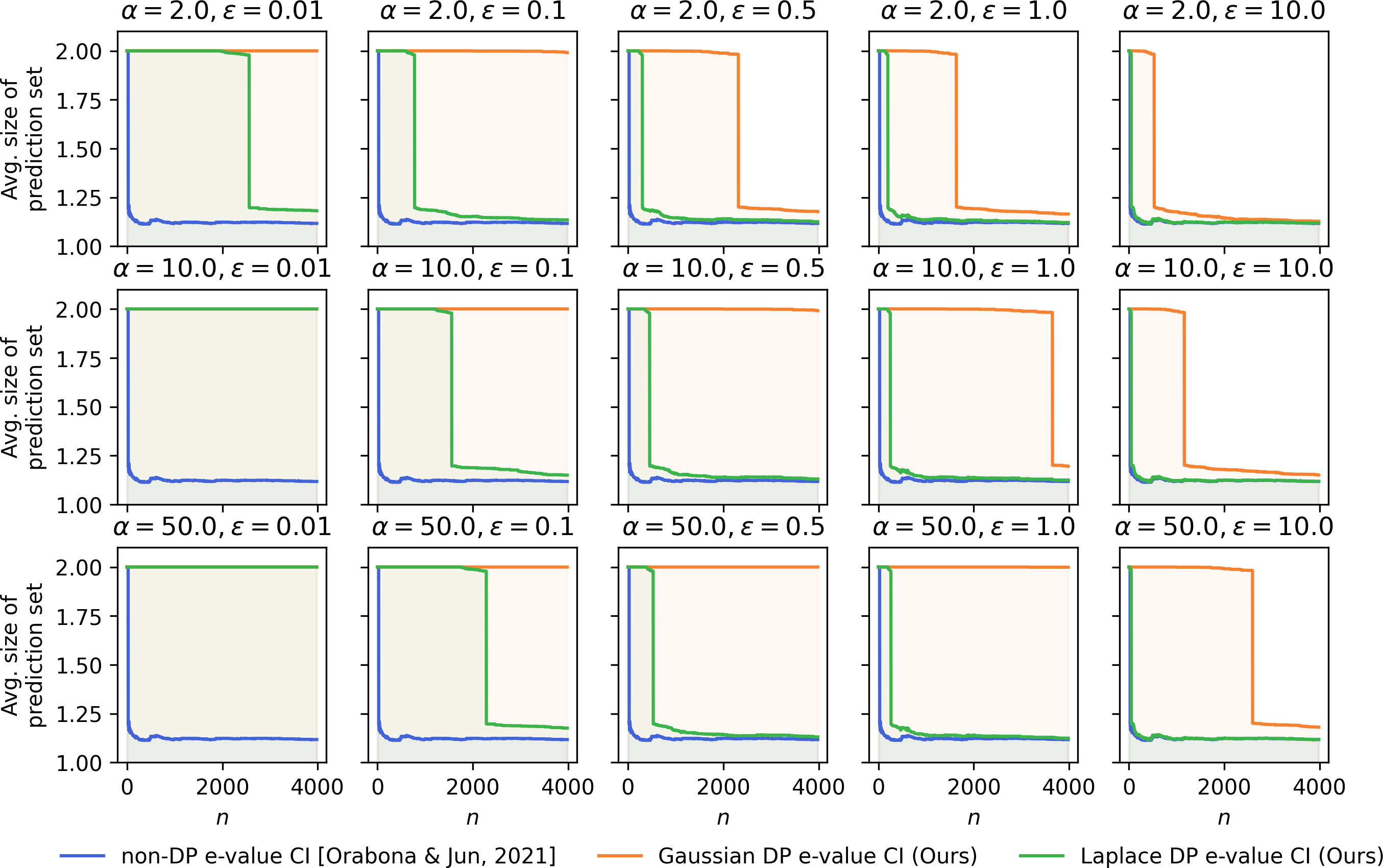}
    \caption{\textbf{Private e-conformal prediction.} We also apply our e-values to produce private predictive sets with post-hoc guarantees through e-conformal prediction. In this setting the biased Laplace mechanism is generally applicable, and consistently outperforms the biased Gaussian mechanism, closely matching the average size of non-private prediction sets.}
    \label{fig:conformal}
\end{figure*}
Crucially, in to attain anytime-validity with differential privacy we conduct our inference in batches, and appeal to the private optional continuation property (Proposition~\ref{thm:optional-continuation-dp}).
In each batch we compute the privatized form of the e-value of Equation~\ref{eq:evalue-mean-cover} with the one-sided distribution $F$, and combine the e-values across batches through multiplication.

For this experiment we use the data of \citep{dataset-covertype} with a simulated change-point.
Figure~\ref{fig:hyptest} shows this procedure in action, with the Rényi privacy budget varying over $\alpha \in \{2, 10, 50\}$ and $\epsilon \in \{0.01, 0.05, 0.1, 0.5, 1\}$.
We note the remarkable accuracy of our biased Gaussian mechanism in this setting, even for small values of $\epsilon$, which often rejects the null hypothesis (and thus triggers an alarm of a change of distribution) very close to the non-private procedure, and thus also close to the actual change-point in the data.
However, similar to our previous example in Section~\ref{sec:experiment-confseq}, the biased Laplace mechanism is only defined for most higher values of $\epsilon$.

\subsection{Private e-conformal prediction} \label{sec:experiment-ecp}

Finally, we apply our method to privatize e-conformal prediction \citep{ecp-gauthier,ecp-prior}.
E-conformal prediction provides predictive uncertainty quantification for machine learning models, achieving strong guarantees (e.g. post-hoc validity) by leveraging e-values.
Given calibration samples $D = (X_i, Y_i)_{i=1}^n \subset \mathcal{X} \times \mathcal{Y}$ and a
a score function $s : \mathcal{X} \times \mathcal{Y} \to \R_{>0}$ (which typically incorporates a predictive model),
e-conformal prediction produces predictive sets as
\[ C_\alpha(x) \coloneq \left\{ y \in \mathcal{Y} : E^{\exch}(D; s(x, y)) < 1/\alpha \right\}, \]
where $E^{\exch}(D; s(x, y))$ is an e-value for the null that $s(X_1, Y_1), \ldots, s(X_n, Y_n), s(x, y)$ are exchangeable random variables.
A common choice is
\[ E^\exch(D; S^\test) = \frac{(n+1) S^\test}{\sum_{i=1}^n s(X_i, Y_i) + S^\test}. \]
We furthermore consider that the conformity score $s$ has its image contained within a finite set ${s_1, \ldots, s_K} \subset [a, b]$.
This makes it so that $C_\alpha(\cdot)$ is a function of a finite amount of e-values (one for each $s_i$, $i = 1, \ldots, K$) and affords us the following bound on the log-sensitivity:

\begin{proposition}\label{thm:sensitivity-exch}
    Suppose the score function and $S^\test$ are all contained in $[a, b]$.
    Then the log-sensitivity of $E^\exch(D; S^\test)$, $\Delta_{\log}(E^\exch)$, is upper bounded as
    \[ \Delta_{\log}(E^\exch) \leq 2 \cdot \frac{b/a}{n+1}. \]
\end{proposition}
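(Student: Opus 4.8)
The plan is to directly compute the log-sensitivity from the definition in Equation~\ref{eq:sensitivity}. Fix two neighboring datasets $D$ and $D'$ differing in a single record; say $D = (s_1, \ldots, s_n)$ and $D' = (s_1, \ldots, s_{n-1}, s_n')$ (in terms of the conformity scores $s_i = s(X_i, Y_i)$), and write $T^\test = S^\test$. Denote $\Sigma = \sum_{i=1}^n s_i + S^\test$ and $\Sigma' = \Sigma - s_n + s_n'$. Since the numerator $(n+1)S^\test$ is unchanged, we have
\[
\log E^\exch(D; S^\test) - \log E^\exch(D'; S^\test) = \log \Sigma' - \log \Sigma = \log\frac{\Sigma'}{\Sigma}.
\]
So the task reduces to bounding $\lvert \log(\Sigma'/\Sigma) \rvert$ uniformly over all choices of the scores in $[a,b]$ and the test point $S^\test \in [a,b]$.

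Next I would bound the ratio $\Sigma'/\Sigma$. Both $\Sigma$ and $\Sigma'$ are sums of $n+1$ terms, each lying in $[a,b]$, so $\Sigma, \Sigma' \in [(n+1)a, (n+1)b]$. Moreover $\Sigma' - \Sigma = s_n' - s_n \in [a-b, b-a]$, i.e. $\lvert \Sigma' - \Sigma \rvert \leq b - a$. Hence
\[
\left\lvert \log\frac{\Sigma'}{\Sigma} \right\rvert = \left\lvert \log\left(1 + \frac{\Sigma' - \Sigma}{\Sigma}\right) \right\rvert \leq \frac{\lvert \Sigma' - \Sigma \rvert}{\min\{\Sigma, \Sigma'\}} \leq \frac{b-a}{(n+1)a},
\]
using the elementary inequality $\lvert \log(1+x)\rvert \leq \lvert x \rvert / \min\{1, 1+x\}$ (equivalently, that $\log$ is $1/c$-Lipschitz on $[c,\infty)$, applied with $c = \min\{\Sigma,\Sigma'\} \geq (n+1)a$). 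This already gives a bound of $\frac{b-a}{(n+1)a} = \frac{1}{n+1}(b/a - 1) \leq \frac{b/a}{n+1}$, which is slightly tighter than the stated $\frac{2(b/a)}{n+1}$; the factor of $2$ in the claim is presumably slack to accommodate the case where a record differs in both the numerator and denominator, or it is just a convenient loose form. If one wants to track the numerator changing too (e.g., if the "record" could be the test point, though here $S^\test$ is held fixed), one symmetrizes: a change in a single score moves both $\log \Sigma'$ and, in a variant accounting, $\log(\text{numerator})$, each contributing at most $\frac{b-a}{(n+1)a}$, summing to at most $\frac{2(b-a)}{(n+1)a} \leq \frac{2(b/a)}{n+1}$.

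The main (and only) obstacle is really just the bookkeeping of which quantities change under a neighboring-dataset swap and making sure the denominator lower bound $(n+1)a$ is applied to the smaller of $\Sigma, \Sigma'$; everything else is the elementary Lipschitz-on-a-half-line estimate for $\log$. I would present the one-term version cleanly and then remark that the factor $2$ in the statement is a safe upper bound covering the symmetric accounting. No deeper structure of e-values or exchangeability is needed — the e-value's validity is irrelevant to its sensitivity, which is a purely deterministic property of the map $D \mapsto E^\exch(D; S^\test)$.
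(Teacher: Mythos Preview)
There is one genuine gap: you analyze the wrong neighboring relation. The paper's notation $|D \Delta D'|\le 1$ (and its proofs of both sensitivity propositions) uses the \emph{add/remove} model---$D'$ is $D$ with one calibration score inserted or deleted, so the dataset size changes from $n$ to $n\pm 1$, and hence the prefactor $(n{+}1)$ in $E^{\exch}$ changes as well. Your swap model keeps $n$ fixed, which is why the numerator cancels for you; under add/remove it does not, and your computation as written does not cover the cases the statement requires.

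That said, your Lipschitz-of-$\log$ method adapts immediately and is cleaner than the paper's lengthy case-by-case calculation. Rewrite $E^{\exch}(D;S^{\test}) = S^{\test}/A$ with $A$ the average of the $n{+}1$ scores $S_1,\ldots,S_n,S^{\test}$; an add/remove neighbour gives $S^{\test}/B$ with $B$ the corresponding average over $n{+}2$ or $n$ scores. Both $A,B\in[a,b]$, and writing $B=\tfrac{(n+1)A+S_{n+1}}{n+2}$ (respectively $A=\tfrac{nB+S_n}{n+1}$) shows $|B-A|\le \tfrac{b-a}{n+1}$. Your Lipschitz estimate then gives $|\log A-\log B|\le \tfrac{b-a}{a(n+1)}\le \tfrac{b/a}{n+1}$, still without the factor $2$. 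So your intuition that the $2$ is slack is correct, but your guesses about its origin are not: $S^{\test}$ is held fixed and never perturbs the numerator---the paper's proof simply bounds more crudely and picks up the extra factor along the way.
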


Then, by privatizing these e-values for exchangeability and using Theorem~\ref{thm:general-algo-valid}, we obtain private e-conformal predictive sets.

Figure~\ref{fig:conformal}
shows the predictive set sizes for conformal prediction atop the phishing classification dataset of \citep{dataset-phishing}, for Rényi privacy budgets with $\alpha \in \{2, 10, 50\}$ and $\epsilon \in \{0.01, 0.1, 0.5, 1, 10\}$.
For our conformity score, we use a truncated version of the score proposed by \citep{ecp-gauthier}, quantized onto 500 bins in $[1, 100]$, which affords us reasonably low sensitivities.
Remarkably, we find that in this instance the biased Laplace is (i) always defined, and (ii) consistently outperforms the biased Gaussian mechanism, quickly approaching the non-private average predictive set size even for small values of $\epsilon$.

\section{Conclusion}

In this work we introduced the first general framework for constructing differentially private e-values that simultaneously guarantee privacy protection and statistical validity. Through novel biased multiplicative noise mechanisms, we showed that our private e-values retain strong statistical power -- asymptotically matching their non-private counterparts -- while preserving their key compositional properties. Our experiments on online risk monitoring, private healthcare analytics, and conformal prediction demonstrate the practical effectiveness of our approach across diverse privacy-sensitive applications. By bridging differential privacy and e-value-based inference, this work enables the principled deployment of flexible e-value-based statistical procedures on sensitive data, broadening the applicability of modern inference methods to domains with rigorous privacy requirements.

\subsubsection*{Acknowledgements}

This work was supported by the Fundação Carlos Chagas Filho de Amparo à Pesquisa do Estado do Rio de Janeiro FAPERJ (SEI-260003/000709/2023) and the Conselho Nacional de Desenvolvimento Científico e Tecnológico CNPq (404336/2023-0, 305692/2025-9).

\bibliographystyle{apalike}
\bibliography{bibliography}

\begin{thebibliography}{}

\bibitem[Balinsky and Balinsky, 2024]{ecp-prior}
Balinsky, A.~A. and Balinsky, A.~D. (2024).
\newblock Enhancing conformal prediction using e-test statistics.
\newblock {\em ArXiv}, abs/2403.19082.

\bibitem[Bar et~al., 2024]{evalue-test-time-adaptation}
Bar, Y., Shaer, S., and Romano, Y. (2024).
\newblock Protected test-time adaptation via online entropy matching: A betting approach.
\newblock {\em ArXiv}, abs/2408.07511.

\bibitem[Blackard, 1998]{dataset-covertype}
Blackard, J. (1998).
\newblock {Covertype}.
\newblock UCI Machine Learning Repository.
\newblock {DOI}: https://doi.org/10.24432/C50K5N.

\bibitem[CDC, 2015]{dataset-diabetes}
CDC (2015).
\newblock Cdc – 2014 brfss survey data and documentation.
\newblock Last accessed 30 January 2025.

\bibitem[Cover, 1996]{cover}
Cover, T.~M. (1996).
\newblock Universal portfolios.

\bibitem[Csillag et~al., 2025]{evalue-ppi}
Csillag, D., Struchiner, C.~J., and Goedert, G.~T. (2025).
\newblock Prediction-powered e-values.
\newblock {\em ArXiv}, abs/2502.04294.

\bibitem[Ding et~al., 2018]{prior-test-3}
Ding, B., Nori, H., Li, P.~L., and Allen, J. (2018).
\newblock Comparing population means under local differential privacy: with significance and power.
\newblock In {\em AAAI Conference on Artificial Intelligence}.

\bibitem[Du et~al., 2020]{prior-ci-2}
Du, W., Foot, C., Moniot, M., Bray, A., and Groce, A. (2020).
\newblock Differentially private confidence intervals.
\newblock {\em ArXiv}, abs/2001.02285.

\bibitem[Dwork et~al., 2006]{dp-og}
Dwork, C., McSherry, F., Nissim, K., and Smith, A.~D. (2006).
\newblock Calibrating noise to sensitivity in private data analysis.
\newblock {\em J. Priv. Confidentiality}, 7:17--51.

\bibitem[Dwork and Roth, 2014]{dp-book}
Dwork, C. and Roth, A. (2014).
\newblock The algorithmic foundations of differential privacy.
\newblock {\em Found. Trends Theor. Comput. Sci.}, 9:211--407.

\bibitem[Ferrando et~al., 2020]{prior-ci-3}
Ferrando, C., Wang, S., and Sheldon, D. (2020).
\newblock Parametric bootstrap for differentially private confidence intervals.
\newblock In {\em International Conference on Artificial Intelligence and Statistics}.

\bibitem[Gauthier et~al., 2025a]{ecp-gauthier-2}
Gauthier, E., Bach, F., and Jordan, M.~I. (2025a).
\newblock Backward conformal prediction.
\newblock {\em ArXiv}, abs/2505.13732.

\bibitem[Gauthier et~al., 2025b]{ecp-gauthier}
Gauthier, E., Bach, F., and Jordan, M.~I. (2025b).
\newblock E-values expand the scope of conformal prediction.
\newblock {\em ArXiv}, abs/2503.13050.

\bibitem[Karwa and Vadhan, 2017]{prior-ci-1}
Karwa, V. and Vadhan, S.~P. (2017).
\newblock Finite sample differentially private confidence intervals.
\newblock In {\em Information Technology Convergence and Services}.

\bibitem[Kazan et~al., 2023]{tot}
Kazan, Z., Shi, K., Groce, A., and Bray, A. (2023).
\newblock The test of tests: A framework for differentially private hypothesis testing.
\newblock {\em ArXiv}, abs/2302.04260.

\bibitem[Kelly, 1956]{kelly}
Kelly, J.~L. (1956).
\newblock A new interpretation of information rate.
\newblock {\em IRE Trans. Inf. Theory}, 2:185--189.

\bibitem[Kilian et~al., 2025]{evalue-asympt-ppi}
Kilian, V., Cortinovis, S., and Caron, F. (2025).
\newblock Anytime-valid, bayes-assisted,prediction-powered inference.
\newblock {\em ArXiv}, abs/2505.18000.

\bibitem[Koning, 2023]{evalue-posthoc}
Koning, N.~W. (2023).
\newblock Post-hoc {$\alpha$} hypothesis testing and the post-hoc {$p$}-value.

\bibitem[Mironov, 2017]{rdp}
Mironov, I. (2017).
\newblock R{\'e}nyi differential privacy.
\newblock {\em 2017 IEEE 30th Computer Security Foundations Symposium (CSF)}, pages 263--275.

\bibitem[Mohammad and McCluskey, 2012]{dataset-phishing}
Mohammad, R. and McCluskey, L. (2012).
\newblock {Phishing Websites}.
\newblock UCI Machine Learning Repository.
\newblock {DOI}: https://doi.org/10.24432/C51W2X.

\bibitem[Orabona and Jun, 2021]{evalue-mean-cover}
Orabona, F. and Jun, K.-S. (2021).
\newblock Tight concentrations and confidence sequences from the regret of universal portfolio.
\newblock {\em IEEE Transactions on Information Theory}, 70:436--455.

\bibitem[Pena and Barrientos, 2022]{pena-barrientos}
Pena, V. and Barrientos, A.~F. (2022).
\newblock Differentially private hypothesis testing with the subsampled and aggregated randomized response mechanism.
\newblock {\em Statistica Sinica}.

\bibitem[Podkopaev and Ramdas, 2021]{evalue-risk-monitoring}
Podkopaev, A. and Ramdas, A. (2021).
\newblock Tracking the risk of a deployed model and detecting harmful distribution shifts.
\newblock {\em ArXiv}, abs/2110.06177.

\bibitem[Ramdas et~al., 2022]{savi}
Ramdas, A., Gr{\"u}nwald, P.~D., Vovk, V., and Shafer, G. (2022).
\newblock Game-theoretic statistics and safe anytime-valid inference.
\newblock {\em ArXiv}, abs/2210.01948.

\bibitem[Rogers and Kifer, 2017]{prior-test-1}
Rogers, R.~M. and Kifer, D. (2017).
\newblock A new class of private chi-square hypothesis tests.
\newblock In {\em International Conference on Artificial Intelligence and Statistics}.

\bibitem[Shekhar and Ramdas, 2023]{evalue-changepoint-2}
Shekhar, S. and Ramdas, A. (2023).
\newblock Sequential change detection via backward confidence sequences.
\newblock {\em ArXiv}, abs/2302.02544.

\bibitem[Shin et~al., 2022]{evalue-changepoint-1}
Shin, J., Ramdas, A., and Rinaldo, A. (2022).
\newblock E-detectors: A nonparametric framework for sequential change detection.
\newblock {\em The New England Journal of Statistics in Data Science}.

\bibitem[Swanberg et~al., 2019]{prior-test-2}
Swanberg, M., Globus-Harris, I., Griffith, I., Ritz, A.~M., Groce, A., and Bray, A. (2019).
\newblock Improved differentially private analysis of variance.
\newblock {\em Proceedings on Privacy Enhancing Technologies}, 2019:310 -- 330.

\bibitem[Teneggi and Sulam, 2024]{evalue-semantic}
Teneggi, J. and Sulam, J. (2024).
\newblock I bet you did not mean that: Testing semantic importance via betting.
\newblock {\em ArXiv}, abs/2405.19146.

\bibitem[Ville, 1939]{ville}
Ville, J.-L. (1939).
\newblock {\'E}tude critique de la notion de collectif.

\bibitem[Vovk and Wang, 2019]{evalue-calibration}
Vovk, V. and Wang, R. (2019).
\newblock E-values: Calibration, combination, and applications.
\newblock {\em Political Methods: Quantitative Methods eJournal}.

\bibitem[Waudby-Smith and Ramdas, 2020]{evalue-mean}
Waudby-Smith, I. and Ramdas, A. (2020).
\newblock Estimating means of bounded random variables by betting.
\newblock {\em Journal of the Royal Statistical Society Series B: Statistical Methodology}.

\end{thebibliography}

\appendix

\section{Proofs from the main text}

For proofs of Propositions~2.3 and 2.4, see \citep{rdp}.
The proof of Proposition~2.6 can be found in \citep{evalue-posthoc}. For Propositions~2.7 and 2.8, see \citep{evalue-calibration}.

\begin{theorem}[Biased Gaussian mechanism for a single e-value; Theorem~3.1 in the main text]
    For any $\alpha > 1$ and $\epsilon > 0$, let
    \[ E^\DP(D) = E(D) \cdot e^{-\xi}, \qquad \xi \sim \mathcal{N}\biggl( \frac{\alpha [\Delta_{\log}(E)]^2}{4 \epsilon}, \frac{\alpha [\Delta_{\log}(E)]^2}{2 \epsilon} \biggr). \]
    Then $E^\DP(D)$ is a valid e-value satisfying $(\alpha, \epsilon)$-Rényi differential privacy.
\end{theorem}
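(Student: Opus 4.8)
The plan is to verify the two asserted properties—validity as an e-value and $(\alpha,\epsilon)$-Rényi differential privacy—separately, since the argument for one does not interact with the other. Write $\Delta \coloneq \Delta_{\log}(E)$ for brevity, and note we are in the regime $\Delta < \infty$, so that $\log E(D)$ is well-defined (in particular $E(D) > 0$).

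For validity, I would exploit the independence of $\xi$ and $E(D)$ to factor, under $H_0$, $\E[E^\DP(D)] = \E[E(D)]\cdot\E[e^{-\xi}]$. The second factor is the Gaussian moment generating function evaluated at $-1$: for $\xi\sim\mathcal N(\mu,\sigma^2)$ one has $\E[e^{-\xi}] = e^{-\mu+\sigma^2/2}$, which is finite for every choice of parameters, so no integrability issue arises. Substituting $\mu = \alpha\Delta^2/(4\epsilon)$ and $\sigma^2 = \alpha\Delta^2/(2\epsilon)$ gives $-\mu+\sigma^2/2 = 0$, hence $\E[e^{-\xi}] = 1$ exactly. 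This is precisely the reason the noise mean is biased to equal half its variance. Therefore $\E[E^\DP(D)] = \E[E(D)] \le 1$, using that $E(D)$ is an e-value.

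For privacy, I would invoke the post-processing property (Proposition~\ref{thm:background-postprocessing}): $E^\DP(D) = \exp\!\bigl(\log E(D) - \xi\bigr)$ is a deterministic function of $\log E(D) - \xi$, so it suffices to show that $\log E(D) - \xi$ is $(\alpha,\epsilon)$-Rényi DP. But this is exactly the Gaussian mechanism applied to the statistic $\log E(D)$, whose sensitivity is $\Delta$ by \eqref{eq:sensitivity}; the deterministic shift by $\mu$ is common to both datasets and cancels in any Rényi divergence between the outputs on neighbors. Concretely, for fixed neighboring $D, D'$, the laws of $\log E(D) - \xi$ and $\log E(D') - \xi$ are Gaussians with common variance $\sigma^2$ whose means are at distance $\lvert\log E(D) - \log E(D')\rvert \le \Delta$; since the order-$\alpha$ Rényi divergence between two equal-variance Gaussians whose means differ by $t$ equals $\alpha t^2/(2\sigma^2)$, we get $D_\alpha(\algo(D)\Mid\algo(D')) \le \alpha\Delta^2/(2\sigma^2) = \epsilon$ after plugging in $\sigma^2 = \alpha\Delta^2/(2\epsilon)$.

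The proof is essentially bookkeeping once these two ingredients are in place. The only step that needs genuine care is the standard computation of the $\alpha$-Rényi divergence between two Gaussians of equal variance (a short evaluation of an integral of a Gaussian density raised to the power $\alpha$), together with the observation that biasing the noise mean—which is what rescues e-value validity—has no effect whatsoever on that divergence and hence none on the privacy guarantee; everything else is substitution of the stated parameter values.
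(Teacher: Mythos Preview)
Your proposal is correct and follows essentially the same approach as the paper: reduce privacy to the Gaussian mechanism on $\log E(D)$ via post-processing, use the closed-form order-$\alpha$ Rényi divergence between equal-variance Gaussians (which is invariant to the common shift $\mu$) to obtain $\alpha\Delta^2/(2\sigma^2)\le\epsilon$, and establish validity from the Gaussian MGF identity $\E[e^{-\xi}]=e^{-\mu+\sigma^2/2}=1$. The paper presents the two parts in the opposite order and phrases the bias choice as the minimal $\mu$ achieving $\E[e^{-\xi}]\le 1$, but the substance is identical.
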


\begin{proof}
    We are considering $\xi \sim \mathcal{N}(\mu, \sigma^2)$ for some $\mu$ and $\sigma^2$.
    By the post-processing theorem,
    $E^\DP(D) = E(D) \cdot e^{-\xi}$ is $(\alpha, \epsilon)$-RDP iff $\log E^\DP(D) = \log \left( E(D) \cdot e^{-\xi} \right) = \log E(D) - \xi$ is $(\alpha, \epsilon)$-RDP; note that $(\log E(D) - xi) \sim \mathcal{N}(\log E(D) - \mu, \sigma^2)$. It then follows, using Proposition 7 of \citep{rdp}, that
    \begin{align*}
        D_\alpha (\log E^\DP(D) \Mid \log E^\DP(D'))
        &= D_\alpha (\mathcal{N}(\log E(D) - \mu, \sigma^2) \Mid \mathcal{N}(\log E(D') - \mu, \sigma^2))
        \\ &= D_\alpha (\mathcal{N}(0, \sigma^2) \Mid \mathcal{N}((\log E(D') - \mu) - (\log E(D) - \mu), \sigma^2))
        \\ &= D_\alpha (\mathcal{N}(0, \sigma^2) \Mid \mathcal{N}(\log E(D') - \log E(D), \sigma^2))
        \\ &= \frac{\alpha (\log E(D') - \log E(D))^2}{2 \sigma^2}.
    \end{align*}
    For this to be at most $\epsilon$, we require:
    \begin{align*}
        & \frac{\alpha (\log E(D') - \log E(D))^2}{2 \sigma^2} \leq \epsilon
        \\ \iff& \frac{\alpha (\log E(D') - \log E(D))^2}{2 \epsilon} \leq \sigma^2
        \\ \impliedby& \frac{\alpha [\Delta_{\log}(E)]^2}{2 \epsilon} \leq \sigma^2;
    \end{align*}
    So, taking $\sigma^2 = \alpha [\Delta_{\log}(E)]^2 / 2 \epsilon$ ensures $(\alpha, \epsilon)$-Renyi differential privacy.

    Thus, all that remains is to find the smallest bias $\mu$ that ensures validity of $E^\DP(D)$.
    As argued in Section~3.1, all we need is that the MGF of $\xi$ is at most one at $-1$. So it follows, using the form of the MGF of an univariate normal:
    \begin{align*}
        \E[e^{-\xi}] &= \exp\left( -\mu + \frac{\alpha [\Delta_{\log}(E)]^2 / 2 \epsilon}{2} \right) \leq 1
        \\ \iff& -\mu + \frac{\alpha [\Delta_{\log}(E)]^2 / 2 \epsilon}{2} \leq 0
        \\ \iff& \mu \geq \frac{\alpha [\Delta_{\log}(E)]^2 / 2 \epsilon}{2} = \frac{\alpha [\Delta_{\log}(E)]^2}{4 \epsilon}.
        \qedhere
    \end{align*}
\end{proof}

In order to derive the biased Laplace mechanism, we first introduce the following Lemma, which is a proper generalization of Proposition 6 of \citep{rdp}.

\begin{lemma} \label{thm:renyi-laplace}
    Let $\alpha > 1$. Then the $\alpha$-Rényi divergence between $\mathrm{Laplace}(0, b)$ and $\mathrm{Laplace}(\mu, b)$ for any $\mu \in \R$ and $b > 0$ is given by
    \[
        D_\alpha\bigl( \mathrm{Laplace}(0, b) \Mid \mathrm{Laplace}(\mu, b) \bigr) = \frac{1}{\alpha-1} \log \left( \frac{\alpha}{2\alpha-1} e^{(\alpha-1)|\mu|/b} + \frac{\alpha-1}{2\alpha-1} e^{-\alpha|\mu|/b} \right).
    \]
    Additionally, $D_\alpha\bigl( \mathrm{Laplace}(0, b) \Mid \mathrm{Laplace}(\mu, b) \bigr)$ is increasing in $\lvert \mu \rvert$.
\end{lemma}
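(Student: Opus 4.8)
The plan is to compute the defining integral of the Rényi divergence directly. Write the two densities as $p(x) = \tfrac{1}{2b}e^{-|x|/b}$ and $q(x) = \tfrac{1}{2b}e^{-|x-\mu|/b}$; then
\[
D_\alpha\bigl(\mathrm{Laplace}(0,b)\Mid\mathrm{Laplace}(\mu,b)\bigr) = \frac{1}{\alpha-1}\log \int_{\R} p(x)^\alpha q(x)^{1-\alpha}\,\d x,
\]
and a direct computation gives $p(x)^\alpha q(x)^{1-\alpha} = \tfrac{1}{2b}\exp\bigl((-\alpha|x| + (\alpha-1)|x-\mu|)/b\bigr)$, so everything reduces to evaluating $I(\mu) \coloneq \int_\R \exp\bigl((-\alpha|x| + (\alpha-1)|x-\mu|)/b\bigr)\,\d x$. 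First I would note that the substitution $x\mapsto -x$ turns $I(\mu)$ into $I(-\mu)$, so it suffices to take $\mu\geq 0$ and replace $\mu$ by $|\mu|$ at the end.

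Next I would split $\R$ into the three intervals $(-\infty,0)$, $[0,\mu]$, and $(\mu,\infty)$, on each of which $|x|$ and $|x-\mu|$ are affine. The exponent (times $b$) is $x + (\alpha-1)\mu$ on the first interval, $-(2\alpha-1)x + (\alpha-1)\mu$ on the second, and $-x - (\alpha-1)\mu$ on the third. Each piece is an elementary exponential integral — using $\alpha>1$ (hence $2\alpha-1>0$) so that all of them converge — giving $b\,e^{(\alpha-1)\mu/b}$, $\tfrac{b}{2\alpha-1}e^{(\alpha-1)\mu/b}\bigl(1 - e^{-(2\alpha-1)\mu/b}\bigr)$, and $b\,e^{-\alpha\mu/b}$ respectively. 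Summing and multiplying by $\tfrac{1}{2b}$, the cross term simplifies via $e^{(\alpha-1)\mu/b}e^{-(2\alpha-1)\mu/b} = e^{-\alpha\mu/b}$, and collecting coefficients yields exactly $\tfrac{\alpha}{2\alpha-1}e^{(\alpha-1)\mu/b} + \tfrac{\alpha-1}{2\alpha-1}e^{-\alpha\mu/b}$; substituting $|\mu|$ for $\mu$ gives the claimed identity.

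For the monotonicity claim, since $\alpha-1>0$ and $\log$ is increasing, it suffices to show that $g(t) \coloneq \tfrac{\alpha}{2\alpha-1}e^{(\alpha-1)t} + \tfrac{\alpha-1}{2\alpha-1}e^{-\alpha t}$ is increasing in $t \coloneq |\mu|/b \geq 0$. Differentiating, $g'(t) = \tfrac{\alpha(\alpha-1)}{2\alpha-1}\bigl(e^{(\alpha-1)t} - e^{-\alpha t}\bigr)$, and for $t\geq 0$ we have $(\alpha-1)t \geq -\alpha t$ (equivalently $(2\alpha-1)t\geq 0$), so $g'(t)\geq 0$. Hence $t\mapsto\tfrac{1}{\alpha-1}\log g(t)$ is nondecreasing, which is the asserted monotonicity in $|\mu|$.

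The main obstacle here is bookkeeping rather than anything conceptual: getting the three-way case split on the signs of $x$ and $x-\mu$ right, and tracking which exponents are negative so that the improper integrals converge — which is precisely where the hypothesis $\alpha>1$ is used.
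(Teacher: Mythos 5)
Your proposal is correct and follows essentially the same route as the paper's proof: the same three-region split of the integral, the same algebraic collection of terms using $e^{(\alpha-1)\mu/b}e^{-(2\alpha-1)\mu/b}=e^{-\alpha\mu/b}$, a symmetry reduction to $\mu\ge 0$ (you do it upfront via $x\mapsto -x$, the paper at the end via negating the distributions), and the same derivative argument for monotonicity, which you state in a cleaner factored form. The only nitpick is that convergence of the three pieces does not actually hinge on $\alpha>1$ (the tails decay like $e^{-|x|/b}$ regardless); $\alpha>1$ matters for the sign of $\alpha-1$ in the prefactor and in the monotonicity step, but this does not affect correctness.
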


\begin{proof}
    By definition of $D_\alpha$,
    \begin{align*}
        D_\alpha\bigl( \mathrm{Laplace}(0, b) \Mid \mathrm{Laplace}(\mu, b) \bigr)
        &= \frac{1}{\alpha-1} \log \int_{-\infty}^{\infty} \left(\frac{1}{2b} e^{-|x|/b}\right)^\alpha \left(\frac{1}{2b} e^{-|x-\mu|/b}\right)^{1-\alpha} \, dx
        \\ &= \frac{1}{\alpha-1} \log \int_{-\infty}^{\infty} \frac{1}{2b} e^{-\alpha|x|/b - (1-\alpha)|x-\mu|/b}.
    \end{align*}

    We need to evaluate
    \[
        I = \int_{-\infty}^{\infty} \frac{1}{2b} e^{-\alpha|x|/b - (1-\alpha)|x-\mu|/b} \, dx.
    \]

    First, let us assume that $\mu \geq 0$.
    We then split the integral into three regions:

    \textbf{Region 1:} $x < 0$
    \[
        \int_{-\infty}^0 \frac{1}{2b} e^{\alpha x/b - (1-\alpha)(\mu-x)/b} \, dx = \frac{1}{2b} e^{-(1-\alpha)\mu/b} \int_{-\infty}^0 e^{x/b} \, dx = \frac{1}{2} e^{-(1-\alpha)\mu/b}
    \]

    \textbf{Region 2:} $0 \leq x < \mu$
    \begin{align*}
        \int_0^\mu \frac{1}{2b} e^{-\alpha x/b - (1-\alpha)(\mu-x)/b} \, dx &= \frac{1}{2b} e^{-(1-\alpha)\mu/b} \int_0^\mu e^{-(2\alpha-1)x/b} \, dx \\
        &= \frac{1}{2} e^{-(1-\alpha)\mu/b} \frac{1 - e^{-(2\alpha-1)\mu/b}}{2\alpha-1}
    \end{align*}

    \textbf{Region 3:} $x \geq \mu$
    \[
        \int_\mu^\infty \frac{1}{2b} e^{-\alpha x/b - (1-\alpha)(x-\mu)/b} \, dx = \frac{1}{2b} e^{(1-\alpha)\mu/b} \int_\mu^\infty e^{-x/b} \, dx = \frac{1}{2} e^{-\alpha\mu/b}
    \]

    We thus have
    \begin{align*}
        I
        &= \frac{1}{2} e^{-(1-\alpha)\mu/b} \left[ 1 + \frac{1 - e^{-(2\alpha-1)\mu/b}}{2\alpha-1} \right] + \frac{1}{2} e^{-\alpha\mu/b}
        \\ &= \frac{1}{2} e^{(\alpha-1)\mu/b} \left[ 1 + \frac{1 - e^{-(2\alpha-1)\mu/b}}{2\alpha-1} \right] + \frac{1}{2} e^{-\alpha\mu/b}
        \\ &= \frac{1}{2} e^{(\alpha-1)\mu/b} \left[ \frac{2\alpha-1 + 1 - e^{-(2\alpha-1)\mu/b}}{2\alpha-1} \right] + \frac{1}{2} e^{-\alpha\mu/b}
        \\ &= \frac{1}{2} e^{(\alpha-1)\mu/b} \left[ \frac{2\alpha - e^{-(2\alpha-1)\mu/b}}{2\alpha-1} \right] + \frac{1}{2} e^{-\alpha\mu/b}
        \\ &= \frac{\alpha}{2\alpha-1} e^{(\alpha-1)\mu/b} - \frac{1}{2} e^{(\alpha-1)\mu/b} \frac{e^{-(2\alpha-1)\mu/b}}{2\alpha-1} + \frac{1}{2} e^{-\alpha\mu/b}
        \\ &= \frac{\alpha}{2\alpha-1} e^{(\alpha-1)\mu/b} - \frac{1}{2} \frac{e^{-\alpha\mu/b}}{2\alpha-1} + \frac{1}{2} e^{-\alpha\mu/b}
        \\ &= \frac{\alpha}{2\alpha-1} e^{(\alpha-1)\mu/b} + \frac{\alpha-1}{2\alpha-1} e^{-\alpha\mu/b}.
    \end{align*}

    For $\mu < 0$, we have
    \begin{align*}
        D_\alpha\bigl( \mathrm{Laplace}(0, b) \Mid \mathrm{Laplace}(\mu, b) \bigr)
        &= D_\alpha\bigl( -\mathrm{Laplace}(0, b) \Mid -\mathrm{Laplace}(\mu, b) \bigr)
        \\ &= D_\alpha\bigl( \mathrm{Laplace}(0, b) \Mid \mathrm{Laplace}(-\mu, b) \bigr)
        \\ &= D_\alpha\bigl( \mathrm{Laplace}(0, b) \Mid \mathrm{Laplace}(\lvert \mu \rvert, b) \bigr),
    \end{align*}
    from which we conclude the computation.

    Finally, to see that this is increasing in $\lvert \mu \rvert$, we take the derivative:
    \begin{align*}
        & \frac{\mathrm{d}}{\mathrm{d} \mu} \frac{1}{\alpha-1} \log \left( \frac{\alpha}{2\alpha-1} e^{(\alpha-1)\mu/b} + \frac{\alpha-1}{2\alpha-1} e^{-\alpha\mu/b} \right)
        \\ &= \frac{1}{\alpha-1} \left( \frac{\alpha}{2\alpha-1} e^{(\alpha-1)\mu/b} + \frac{\alpha-1}{2\alpha-1} e^{-\alpha\mu/b} \right)^{-1} \frac{\mathrm{d}}{\mathrm{d} \mu} \left( \frac{\alpha}{2\alpha-1} e^{(\alpha-1)\mu/b} + \frac{\alpha-1}{2\alpha-1} e^{-\alpha\mu/b} \right) > 0
        \\ &\iff \frac{\mathrm{d}}{\mathrm{d} \mu} \left( \frac{\alpha}{2\alpha-1} e^{(\alpha-1)\mu/b} + \frac{\alpha-1}{2\alpha-1} e^{-\alpha\mu/b} \right) > 0
        \\ &\iff \frac{\alpha}{2\alpha-1} \frac{\mathrm{d}}{\mathrm{d} \mu} e^{(\alpha-1)\mu/b} + \frac{\alpha-1}{2\alpha-1} \frac{\mathrm{d}}{\mathrm{d} \mu} e^{-\alpha\mu/b} > 0
        \\ &\iff \frac{\alpha}{2\alpha-1} e^{(\alpha-1)\mu/b} \frac{\mathrm{d}}{\mathrm{d} \mu} ((\alpha-1)\mu/b) + \frac{\alpha-1}{2\alpha-1} e^{-\alpha\mu/b} \frac{\mathrm{d}}{\mathrm{d} (-\alpha\mu/b) \mu} > 0
        \\ &\iff \alpha e^{(\alpha-1)\mu/b} \frac{\mathrm{d}}{\mathrm{d} \mu} ((\alpha-1)\mu/b) + (\alpha-1) e^{-\alpha\mu/b} \frac{\mathrm{d}}{\mathrm{d} \mu} (-\alpha\mu/b) > 0
        \\ &\iff \alpha e^{(\alpha-1)\mu/b} \frac{\alpha-1}{b} - (\alpha-1) e^{-\alpha\mu/b} \frac{\alpha}{b} > 0
        \iff \frac{\alpha (\alpha-1)}{b} e^{(\alpha-1)\mu/b} - \frac{\alpha (\alpha-1)}{b} e^{-\alpha\mu/b} > 0
        \\ &\iff e^{(\alpha-1)\mu/b} - e^{-\alpha\mu/b} > 0
        \iff (\alpha-1)\mu/b > -\alpha\mu/b
        \iff \alpha-1 > -\alpha
        \iff \alpha > \frac{1}{2},
    \end{align*}
    which is always true.
\end{proof}

\begin{theorem}[Biased Laplace mechanism for a single e-value; Theorem~3.2 in the main text]
    For any $\alpha > 1$ and $\epsilon > 0$, let
    \[ E^\DP(D) = E(D) \cdot e^{-\xi}, \qquad \xi \sim \mathrm{Laplace}\bigl( -\log (1 - b_{\alpha,\epsilon}^2), b_{\alpha,\epsilon} \bigr), \]
    where
    \begin{align*}
        b_{\alpha,\epsilon} &\coloneq 1 \,/\, \mathbf{h}_\alpha^{-1}\left( (2\alpha - 1) e^{(\alpha - 1) \epsilon} \right),
        \\
        \mathbf{h}_\alpha(t) &\coloneq \alpha e^{(\alpha-1) \Delta_{\log}(E) t} + (\alpha-1) e^{-\alpha \Delta_{\log}(E) t} \qquad \text{for }t \geq 0.
    \end{align*}
    Then, as long as $b_{\alpha,\epsilon} < 1$,
    $E^\DP(D)$ is a valid e-value satisfying $(\alpha, \epsilon)$-Rényi differential privacy.
\end{theorem}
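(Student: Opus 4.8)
The plan is to verify, in order: (i) that $b_{\alpha,\epsilon}$ is a well-defined positive number under the stated hypotheses; (ii) that $E^\DP(D)$ satisfies $(\alpha,\epsilon)$-Rényi differential privacy; and (iii) that $E^\DP(D)$ is a valid e-value. The privacy part will rest entirely on Lemma~\ref{thm:renyi-laplace}, while the validity part will rest on the explicit moment generating function of a Laplace variable. For (i), I would note that $\mathbf{h}_\alpha$ is continuous and strictly increasing on $[0,\infty)$ — its derivative is $\alpha(\alpha-1)\Delta_{\log}(E)\bigl(e^{(\alpha-1)\Delta_{\log}(E)t}-e^{-\alpha\Delta_{\log}(E)t}\bigr)\ge 0$ for $t\ge 0$, by the same sign argument used at the end of Lemma~\ref{thm:renyi-laplace} — with $\mathbf{h}_\alpha(0)=2\alpha-1$ and $\mathbf{h}_\alpha(t)\to\infty$. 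Hence $\mathbf{h}_\alpha$ is a bijection of $[0,\infty)$ onto $[2\alpha-1,\infty)$, and since $\epsilon>0$ forces $(2\alpha-1)e^{(\alpha-1)\epsilon}\ge 2\alpha-1$, the quantity $\mathbf{h}_\alpha^{-1}\bigl((2\alpha-1)e^{(\alpha-1)\epsilon}\bigr)$ exists and is positive, so $b_{\alpha,\epsilon}$ is well-defined and positive.

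For (ii), by the post-processing theorem it suffices that $\log E^\DP(D)=\log E(D)-\xi$ be $(\alpha,\epsilon)$-RDP. Treating $E(D)$ as fixed for neighboring datasets, $\log E(D)-\xi\sim\mathrm{Laplace}\bigl(\log E(D)+\log(1-b_{\alpha,\epsilon}^2),\,b_{\alpha,\epsilon}\bigr)$, and by translation invariance of Rényi divergence together with Lemma~\ref{thm:renyi-laplace}, $D_\alpha(\log E^\DP(D)\Mid\log E^\DP(D'))$ equals the lemma's formula with $\lvert\mu\rvert=\lvert\log E(D)-\log E(D')\rvert$ and $b=b_{\alpha,\epsilon}$. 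Since the lemma also states this is increasing in $\lvert\mu\rvert$, the supremum over neighbors is attained at $\lvert\mu\rvert=\Delta_{\log}(E)$, where the bracketed term in the lemma becomes $\frac{1}{2\alpha-1}\mathbf{h}_\alpha(1/b_{\alpha,\epsilon})$; plugging in the defining identity $\mathbf{h}_\alpha(1/b_{\alpha,\epsilon})=(2\alpha-1)e^{(\alpha-1)\epsilon}$ collapses the whole expression to exactly $\epsilon$.

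For (iii), I would use that $\xi\sim\mathrm{Laplace}(m,b)$ has $\E[e^{t\xi}]=e^{mt}/(1-b^2t^2)$ for $\lvert t\rvert<1/b$; the hypothesis $b_{\alpha,\epsilon}<1$ is precisely what makes $t=-1$ lie in this range, so $\E[e^{-\xi}]=e^{-m}/(1-b_{\alpha,\epsilon}^2)$ with $m=-\log(1-b_{\alpha,\epsilon}^2)$, which equals $1$. Independence of $\xi$ from $E(D)$ then gives $\E[E^\DP(D)]=\E[E(D)]\cdot 1\le 1$ under the null, exactly as argued in Section~\ref{sec:method-hyp-test}.

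The main obstacle — really the only non-mechanical point — is the interplay around the radius of convergence of the Laplace MGF: validity needs $1/b_{\alpha,\epsilon}>1$, i.e.\ $b_{\alpha,\epsilon}<1$, which is the theorem's standing hypothesis and is exactly what can fail when $\Delta_{\log}(E)$ is large. Everything else is bookkeeping: checking monotonicity of $\mathbf{h}_\alpha$ (so that its inverse, and thus $b_{\alpha,\epsilon}$, is meaningful) and recognizing that Lemma~\ref{thm:renyi-laplace}'s formula, evaluated at the worst-case location shift $\Delta_{\log}(E)$, is literally $\frac{1}{\alpha-1}\log\bigl(\mathbf{h}_\alpha(1/b)/(2\alpha-1)\bigr)$.
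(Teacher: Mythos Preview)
Your proposal is correct and follows essentially the same approach as the paper: reduce privacy to the additive-noise statement via post-processing, invoke Lemma~\ref{thm:renyi-laplace} and its monotonicity to bound the Rényi divergence by $\frac{1}{\alpha-1}\log\bigl(\mathbf{h}_\alpha(1/b)/(2\alpha-1)\bigr)$, check that $\mathbf{h}_\alpha$ is strictly increasing so that the defining identity for $b_{\alpha,\epsilon}$ makes sense and yields exactly $\epsilon$, and then use the Laplace MGF under the hypothesis $b_{\alpha,\epsilon}<1$ to get $\E[e^{-\xi}]=1$. Your explicit isolation of the well-definedness of $b_{\alpha,\epsilon}$ as a preliminary step is a nice touch, but substantively the arguments coincide.
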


\begin{proof}
    We are considering $\xi \sim \mathrm{Laplace}(\mu, b)$ for some $\mu$ and $b$.
    By the post-processing theorem,
    $E^\DP(D) = E(D) \cdot e^{-\xi}$ is $(\alpha, \epsilon)$-RDP iff $\log E^\DP(D) = \log \left( E(D) \cdot e^{-\xi} \right) = \log E(D) - \xi$ is $(\alpha, \epsilon)$-RDP; note that $(\log E(D) - xi) \sim \mathrm{Laplace}(\log E(D) - \mu, b)$. It then follows, using using Lemma~\ref{thm:renyi-laplace}, that
    \begin{align*}
        & D_\alpha (\log E^\DP(D) \Mid \log E^\DP(D'))
        \\ &= D_\alpha (\mathrm{Laplace}(\log E(D) - \mu, b) \Mid \mathrm{Laplace}(\log E(D') - \mu, b))
        \\ &= D_\alpha (\mathrm{Laplace}(0, b) \Mid \mathrm{Laplace}((\log E(D') - \mu) - (\log E(D) - \mu), b))
        \\ &= D_\alpha (\mathrm{Laplace}(0, b) \Mid \mathrm{Laplace}(\log E(D') - \log E(D), b))
        \\ &= \frac{1}{\alpha - 1} \log \left(
            \frac{\alpha}{2\alpha - 1} e^{ (\alpha - 1) \lvert \log E(D') - \log E(D) \rvert /b }
            + \frac{\alpha-1}{2\alpha-1} e^{ -\alpha \lvert \log E(D') - \log E(D) \rvert /b }
        \right).
    \end{align*}
    For this to be at most $\epsilon$, we require:
    \begin{align*}
        & \frac{1}{\alpha - 1} \log \left(
            \frac{\alpha}{2\alpha - 1} e^{ (\alpha - 1) \lvert \log E(D') - \log E(D) \rvert /b }
            + \frac{\alpha-1}{2\alpha-1} e^{ -\alpha \lvert \log E(D') - \log E(D) \rvert /b }
        \right) \leq \epsilon
        \\ \impliedby&
            \frac{1}{\alpha - 1} \log \left(
                \frac{\alpha}{2\alpha - 1} e^{ (\alpha - 1) \Delta_{\log}(E) / b }
                + \frac{\alpha-1}{2\alpha-1} e^{ -\alpha \Delta_{\log}(E) /b }
            \right) \leq \epsilon
        \\ \iff&
            \frac{\alpha}{2\alpha - 1} e^{ (\alpha - 1) \Delta_{\log}(E) / b }
            + \frac{\alpha-1}{2\alpha-1} e^{ -\alpha \Delta_{\log}(E) /b }
            \leq e^{(\alpha - 1) \epsilon}
        \\ \iff&
            \alpha e^{ (\alpha - 1) \Delta_{\log}(E) / b }
            + (\alpha-1) e^{ -\alpha \Delta_{\log}(E) /b }
            \leq (2\alpha - 1) e^{(\alpha - 1) \epsilon}.
    \end{align*}
    Let $\mathbf{h}_\alpha(t) = \alpha e^{(\alpha-1) \Delta_{\log}(E) t} + (\alpha-1) e^{-\alpha \Delta_{\log}(E) t}$, for $t \geq 0$.
    Note that this is strictly increasing in $t$, as
    \begin{align*}
        \frac{\mathrm{d}}{\mathrm{d} t} \mathbf{h}_\alpha(t)
        &= \frac{\mathrm{d}}{\mathrm{d} t} \alpha e^{(\alpha-1) c t} + \frac{\mathrm{d}}{\mathrm{d} t} (\alpha-1) e^{-\alpha c t}
        \\ &= \alpha e^{(\alpha-1) c t} (\alpha-1) c + (\alpha-1) e^{-\alpha c t} (-\alpha c)
        \\ &= \alpha (\alpha-1) c \left( e^{(\alpha-1) c t} - e^{-\alpha c t} \right) > 0
        \\ \iff& e^{(\alpha-1) c t} > e^{-\alpha c t}
        \iff (\alpha-1) c t > -\alpha c t
        \\ \iff& \alpha-1 > -\alpha
        \iff \alpha > \frac{1}{2},
    \end{align*}
    which is always true.
    Hence, the inverse $\mathbf{h}^{-1}_\alpha(y)$ exists.
    This allows us to write
    \begin{align*}
        &
            \alpha e^{ (\alpha - 1) \Delta_{\log}(E) / b }
            + (\alpha-1) e^{ -\alpha \Delta_{\log}(E) /b }
            \leq (2\alpha - 1) e^{(\alpha - 1) \epsilon}
        \\ \iff& \mathbf{h}_\alpha(1/b) \leq (2\alpha - 1) e^{(\alpha - 1) \epsilon}
        \\ \iff& 1/b \leq \mathbf{h}_\alpha^{-1}\left( (2\alpha - 1) e^{(\alpha - 1) \epsilon} \right)
        \\ \iff& b \geq 1 / \mathbf{h}_\alpha^{-1}\left( (2\alpha - 1) e^{(\alpha - 1) \epsilon} \right).
    \end{align*}

    So, taking $b = 1 / \mathbf{h}_\alpha^{-1}\left( (2\alpha - 1) e^{(\alpha - 1) \epsilon} \right)$ ensures $(\alpha, \epsilon)$-Renyi differential privacy.

    Thus, all that remains is to find the smallest bias $\mu$ that ensures validity of $E^\DP(D)$.
    As argued in Section~3.1, all we need is that the MGF of $\xi$ is at most one at $-1$.
    The MGF of a Laplace distribution with mean $\mu$ and scale parameter $b$ is given by $t \mapsto e^{t \mu} / (1 - b^2 t^2)$, defined at $\lvert t \rvert < 1/b$.
    Thus we need that $\lvert -1 \rvert = 1 < 1/b$, i.e., that $b < 1$.
    As long as this is satisfied, it then follows:
    \begin{align*}
        \E[e^{-\xi}] &= e^{-\mu} / \left( 1 - b^2 \right) \leq 1
        \\ \iff& e^{-\mu} \leq 1 - b^2
        \\ \iff& \mu \geq -\log (1 - b^2).
        \qedhere
    \end{align*}
\end{proof}

\begin{proposition}[Proposition~3.3 in the main text]
    Let $E^\DP(D)$ be as in Equation~1 in the main text. Then
    \[ \E\left[\frac{1}{n} \log E^\DP(D)\right] = \E\left[\frac{1}{n} \log E(D)\right] - \frac{\E[\xi]}{n}. \]
\end{proposition}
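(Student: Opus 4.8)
The plan is to reduce the claim to linearity of expectation after taking logarithms. Starting from the defining relation in Equation~1, $E^\DP(D) = E(D) \cdot e^{-\xi}$ with $\xi$ independent of $E(D)$, I would first take logarithms of both sides to obtain the additive decomposition $\log E^\DP(D) = \log E(D) - \xi$. This step is where the multiplicative-in-exponential form of the mechanism pays off: the noise that was multiplicative on the e-value scale becomes additive on the log scale, so no joint structure between $E(D)$ and $\xi$ is needed beyond what is already assumed.

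Next I would divide through by $n$ and take expectations on both sides. Using linearity of expectation, $\E[\tfrac{1}{n}\log E^\DP(D)] = \E[\tfrac{1}{n}\log E(D)] - \E[\tfrac{1}{n}\xi] = \E[\tfrac{1}{n}\log E(D)] - \tfrac{1}{n}\E[\xi]$, which is exactly the asserted identity. Note that independence of $\xi$ from $E(D)$ is not even required here — only that the relevant expectations are finite so that the subtraction is well-defined; for the biased Gaussian and Laplace choices in Theorems~3.1 and 3.2 the noise $\xi$ has all moments finite, and $\E[\log E(D)]$ is finite whenever $E(D)$ has finite log-sensitivity and the growth rate on the right-hand side is the quantity of interest, so there is nothing to check beyond a one-line remark.

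The only potential obstacle is the integrability caveat: if one wishes to state the result with no assumptions at all, one should note that the identity holds in the extended sense whenever $\E[\tfrac{1}{n}\log E(D)]$ exists in $[-\infty, \infty)$, since $\E[\xi]$ is always finite for the mechanisms we use. I would include a brief sentence to this effect and otherwise present the two-line computation directly. No genuine difficulty is expected.
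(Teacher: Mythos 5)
Your proposal is correct and follows essentially the same route as the paper's proof: take logarithms of $E^\DP(D) = E(D)\cdot e^{-\xi}$, divide by $n$, and apply linearity of expectation. The added remark on integrability (and the observation that independence is not needed for this identity) is a harmless refinement beyond the paper's one-line computation.
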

\begin{proof}
    \begin{align*}
        \E\left[\frac{1}{n} \log E^\DP(D)\right]
        = \E\left[\frac{1}{n} \log \bigl( E(D) \cdot e^{-\xi} \bigr)\right]
        &= \E\left[\frac{1}{n} \log E(D) - \frac{1}{n} \xi\right]
        \\ &= \E\left[\frac{1}{n} \log E(D)\right] - \frac{\E[\xi]}{n}. \qedhere
    \end{align*}
\end{proof}

\begin{proposition}[Optional continuation; Proposition~3.4 in the main text]
    If $E^\DP_1(D_1)$ and $E^\DP_2(D_2)$ are $(\alpha, \epsilon)$-Rényi differentially private e-values for a null hypothesis $H_0$, with data $D_1$ independent from $D_2$, then $E^\DP_1(D_1) \cdot E^\DP_2(D_2)$ is also an $(\alpha, \epsilon)$-Rényi differentially private e-value for $H_0$. Moreover, the release of both $E^\DP_1(D_1)$ and $E^\DP_1(D_1) \cdot E^\DP_2(D_2)$ is also $(\alpha, \epsilon)$-Rényi differentially private.
\end{proposition}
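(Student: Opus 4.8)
I would split the claim into (i) validity of the product as an e-value and (ii) the joint privacy guarantee, handling them independently.

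For (i), observe that $E^\DP_1(D_1)$ and $E^\DP_2(D_2)$ are in particular \emph{valid} e-values for $H_0$ (that is part of the hypothesis), and $D_1$ is independent of $D_2$. Hence Proposition~\ref{thm:background-optional-continuation}, the optional-continuation property of ordinary e-values, applies verbatim and yields that $E^\DP_1(D_1)\cdot E^\DP_2(D_2)$ is a valid e-value for $H_0$. No privacy input is needed here.

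For (ii), the key is that combining mechanisms run on \emph{disjoint} parts of the data is parallel, not sequential, composition, so the cost stays $\epsilon$ rather than $2\epsilon$. Write the joint dataset as $D = (D_1, D_2)$; a neighbor $D' = (D_1', D_2')$ differs in exactly one record, which—because $D_1$ and $D_2$ are disjoint blocks—lies entirely in one of them. By symmetry suppose it lies in $D_1$, so $D_1, D_1'$ are neighbors and $D_2' = D_2$. Consider the joint release $\algo(D) \coloneq \bigl(E^\DP_1(D_1), E^\DP_2(D_2)\bigr)$. Since the two privatized e-values are built from independent injected noise, $\algo(D)$ has law $\mathrm{law}\bigl(E^\DP_1(D_1)\bigr) \otimes \mathrm{law}\bigl(E^\DP_2(D_2)\bigr)$, and $\algo(D')$ has law $\mathrm{law}\bigl(E^\DP_1(D_1')\bigr) \otimes \mathrm{law}\bigl(E^\DP_2(D_2)\bigr)$ with identical second factor. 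Using additivity of Rényi divergence over product measures,
\[
    D_\alpha\bigl(\algo(D) \Mid \algo(D')\bigr)
    = D_\alpha\bigl(E^\DP_1(D_1) \Mid E^\DP_1(D_1')\bigr) + D_\alpha\bigl(E^\DP_2(D_2) \Mid E^\DP_2(D_2)\bigr)
    = D_\alpha\bigl(E^\DP_1(D_1) \Mid E^\DP_1(D_1')\bigr) \leq \epsilon,
\]
the last step because $E^\DP_1$ is $(\alpha,\epsilon)$-RDP and $D_1, D_1'$ are neighbors; the symmetric case (record in $D_2$) gives the same bound via $E^\DP_2$. So $\algo$ is $(\alpha,\epsilon)$-Rényi DP. Now $\bigl(E^\DP_1(D_1), E^\DP_1(D_1)\cdot E^\DP_2(D_2)\bigr)$ is the image of $\algo(D)$ under the deterministic map $(a,b) \mapsto (a, ab)$, hence $(\alpha,\epsilon)$-RDP by the post-processing property (Proposition~\ref{thm:background-postprocessing}); composing once more with $(a,c)\mapsto c$ shows $E^\DP_1(D_1)\cdot E^\DP_2(D_2)$ on its own is $(\alpha,\epsilon)$-RDP as well.

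The main obstacle is purely conceptual: resisting the temptation to invoke the generic composition theorem (Proposition~\ref{thm:background-composition}), which would only give $2\epsilon$, and instead exploiting that a single-record change touches only one data block while the injected noises are independent, so the joint law factorizes and the untouched block contributes zero divergence. The remaining steps—tensorization of Rényi divergence and the two post-processing reductions—are routine.
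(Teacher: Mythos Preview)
Your proposal is correct and follows essentially the same route as the paper: the paper likewise observes that a neighboring pair $(D_1,D_2)$ can differ in at most one block, uses independence of the injected noises to tensorize the Rényi divergence so only one summand is nonzero and bounded by $\epsilon$, then invokes post-processing for the pair $(E^\DP_1, E^\DP_1\cdot E^\DP_2)$; for validity the paper writes out $\E[E^\DP_1]\cdot\E[E^\DP_2]\leq 1$ directly where you cite Proposition~\ref{thm:background-optional-continuation}. Your treatment is marginally more thorough in that you also explicitly post-process down to the bare product $E^\DP_1\cdot E^\DP_2$, which the paper's proof leaves implicit.
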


\begin{proof}
    Let $(D'_1, D'_2)$ be a neighboring dataset to $(D_1, D_2)$, i.e., either (i) $D'_1$ and $D_1$ differ by a single element and $D'_2 = D_2$; (ii) $D'_2$ and $D_2$ differ by a single element and $D'_1 = D_1$; or (iii) $D'_1 = D_1$ and $D'_2 = D_2$.
    And since $E^\DP_1(\cdot)$ and $E^\DP_2(\cdot)$ are $(\alpha, \epsilon)$-Rényi differentially private, we have that
    \[
        D_\alpha\bigl( E^\DP_1(D'_1) \Mid E^\DP_1(D_1) \bigr) \leq \epsilon
        \quad\text{and}\quad
        D_\alpha\bigl( E^\DP_1(D'_1) \Mid E^\DP_1(D_1) \bigr) \leq \epsilon.
    \]
    Then, using the fact that the randomness in $E^\DP_1(\cdot)$ and $E^\DP_2(\cdot)$ (due to the noise) are independent,
    \begin{align*}
        & D_\alpha\left( (E^\DP_1(D_1), E^\DP_1(D_2)) \Mid (E^\DP_1(D'_1), E^\DP_1(D'_2)) \right)
        \\ &= D_\alpha\left( E^\DP_1(D'_1) \Mid E^\DP_1(D'_1) \right)
            + D_\alpha\left( E^\DP_1(D_2) \Mid E^\DP_1(D'_2) \right);
    \end{align*}
    and since we must be in one of the three cases delineated above, this must be at most $\epsilon$.
    Hence, $(E^\DP_1(\cdot), E^\DP_2(\cdot))$ is $(\alpha, \epsilon)$-Rényi differentially private.

    Now considering randomness over the data, since the two datasets are assumed to be independent it is immediate to see that the product is an e-value. Under the null,
    \[ \E[E^\DP_1(D_1) \cdot E^\DP_2(D_2)] = \E[E^\DP_1(D_1)] \cdot \E[E^\DP_2(D_2)] \leq 1 \cdot 1 = 1. \]
    And, by the post-processing theorem, $(E^\DP_1(\cdot), E^\DP_1(\cdot) \cdot E^\DP_2(\cdot))$ is also $(\alpha, \epsilon)$-Rényi DP.
\end{proof}

\begin{proposition}[E-to-p conversion; Proposition~3.5 in the main text]
    If $E^\DP(D)$ is an $(\alpha, \epsilon)$-Rényi differentially private e-value for a null hypothesis $H_0$, then $1/E^\DP(D)$ is an $(\alpha, \epsilon)$-Rényi differentially private post-hoc valid p-value for $H_0$.
\end{proposition}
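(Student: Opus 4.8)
The plan is to observe that this statement is a direct corollary of two results already in hand, one handling validity and one handling privacy, applied to $E^\DP(D)$ as a black box. Nothing about the internal structure of $E^\DP(D)$ (the multiplicative noise, the particular mechanism used) will be needed: we only use that, by hypothesis, $E^\DP(D)$ is simultaneously a valid e-value for $H_0$ and an $(\alpha,\epsilon)$-Rényi differentially private quantity.

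First I would establish post-hoc p-value validity. Since $E^\DP(D)$ is a valid e-value for $H_0$ by assumption, Proposition~\ref{thm:background-e-to-p} applies to it verbatim, yielding that $1/E^\DP(D)$ is a post-hoc valid p-value for $H_0$ (with the usual convention $1/0 \coloneq +\infty$, equivalently clipping at $1$, which is precisely what Proposition~\ref{thm:background-e-to-p} already accommodates). No new computation is required here.

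Next I would establish the privacy guarantee. The reciprocal map $f : x \mapsto 1/x$ (extended by $f(0) = \infty$) is a fixed, data-independent measurable function, hence a legitimate post-processing operation in the sense of Proposition~\ref{thm:background-postprocessing}. Applying that proposition to the $(\alpha,\epsilon)$-Rényi differentially private $E^\DP(D)$ gives that $f(E^\DP(D)) = 1/E^\DP(D)$ is also $(\alpha,\epsilon)$-Rényi differentially private. Combining the two conclusions yields the statement. I do not anticipate any genuine obstacle; the only point worth a sentence of care is the edge case $E^\DP(D) = 0$, which is handled by the same convention already used in Proposition~\ref{thm:background-e-to-p}, and the (standard) fact that Rényi-DP post-processing permits arbitrary measurable maps, so the deterministic reciprocal qualifies.
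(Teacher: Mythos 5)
Your proposal is correct and matches the paper's own proof: the paper likewise invokes the e-to-p conversion result of the post-hoc validity literature for the validity part and the post-processing theorem applied to $t \mapsto 1/t$ for the privacy part. Your extra remarks on measurability and the $E^\DP(D)=0$ edge case are fine but not needed beyond what the paper already does.
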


\begin{proof}
    By Theorem~2 of \citep{evalue-posthoc}, $1/E^\DP(D)$ is a post-hoc valid p-value. All that remains is to show that it satisfies $(\epsilon, \delta)$-Rényi differential privacy, which follows directly by the post-processing theorem for $t \mapsto 1/t$.
\end{proof}

\begin{proposition}[Independent Averaging; Proposition~3.6 in the main text]
    If $E^\DP_1(D_1)$ and $E^\DP_2(D_2)$ are $(\alpha, \epsilon)$-Rényi differentially private e-values for a null hypothesis $H_0$, then for any $\eta \in [0, 1]$, $\eta E^\DP_1(D_1) + (1-\eta) E^\DP_2(D_2)$ is also an $(\alpha, \epsilon)$-differentially private e-value for $H_0$.
\end{proposition}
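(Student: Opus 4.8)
Here is how I would approach the proof of Proposition~\ref{thm:independent-averaging-dp}. The plan is to verify the two defining properties of a differentially private e-value separately: first statistical validity, then $(\alpha,\epsilon)$-Rényi differential privacy.

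\textbf{Validity} requires no new ideas. By linearity of expectation, under the null,
\[ \E\bigl[\eta E^\DP_1(D_1) + (1-\eta) E^\DP_2(D_2)\bigr] = \eta\,\E[E^\DP_1(D_1)] + (1-\eta)\,\E[E^\DP_2(D_2)] \leq \eta + (1-\eta) = 1, \]
since each $E^\DP_i(D_i)$ is a valid e-value; this is exactly Proposition~\ref{thm:background-averaging}, and it holds regardless of any dependence between $D_1$ and $D_2$.

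For \textbf{privacy} I would reuse the structure of the proof of Proposition~\ref{thm:optional-continuation-dp}. Regard the underlying data as $D = (D_1, D_2)$ with $D_1$ and $D_2$ occupying disjoint blocks, so that any neighbor $D' = (D'_1, D'_2)$ differs from $D$ in a single record belonging to exactly one block: either $D'_1$ neighbors $D_1$ with $D'_2 = D_2$, or $D'_2$ neighbors $D_2$ with $D'_1 = D_1$ (or $D' = D$). Because the noise injected into $E^\DP_1$ and $E^\DP_2$ is independent and the blocks satisfy $D_1 \indep D_2$, the joint law of $\bigl(E^\DP_1(D_1), E^\DP_2(D_2)\bigr)$ is a product measure, and Rényi divergence is additive over products:
\[ D_\alpha\bigl( (E^\DP_1(D_1), E^\DP_2(D_2)) \Mid (E^\DP_1(D'_1), E^\DP_2(D'_2)) \bigr) = D_\alpha\bigl(E^\DP_1(D_1) \Mid E^\DP_1(D'_1)\bigr) + D_\alpha\bigl(E^\DP_2(D_2) \Mid E^\DP_2(D'_2)\bigr). \]
In each of the three cases one summand vanishes and the other is at most $\epsilon$ by the assumed $(\alpha,\epsilon)$-Rényi DP of the corresponding mechanism, so the joint release $\bigl(E^\DP_1(D_1), E^\DP_2(D_2)\bigr)$ is $(\alpha,\epsilon)$-Rényi DP. Finally, $\eta E^\DP_1(D_1) + (1-\eta) E^\DP_2(D_2)$ is a deterministic post-processing of this pair through the map $(x,y) \mapsto \eta x + (1-\eta)y$, so Proposition~\ref{thm:background-postprocessing} yields the claim.

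The one point I would state carefully — and the only real subtlety — is the meaning of ``neighboring datasets'' here: the argument is essentially parallel composition and hinges on $D_1$ and $D_2$ being separate data blocks, so that a single-record change touches only one of the two e-values. If instead both e-values were computed from the \emph{same} $D$, a neighbor would perturb both, one would have to appeal to sequential composition, and the budget would degrade to $2\epsilon$; that is precisely the content of Proposition~\ref{thm:dependent-averaging-dp}. Everything else — linearity of expectation and additivity of the Rényi divergence over product distributions — is standard and already implicit in the proof of Proposition~\ref{thm:optional-continuation-dp}, so I expect no further obstacles.
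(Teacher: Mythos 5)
Your proposal is correct and follows essentially the same route as the paper's proof: the identical three-case analysis of neighbors of $(D_1, D_2)$, additivity of the Rényi divergence for the pair $\bigl(E^\DP_1(D_1), E^\DP_2(D_2)\bigr)$ thanks to the independent injected noise, post-processing via $(x,y) \mapsto \eta x + (1-\eta)y$, and linearity of expectation for validity. One minor remark: your appeal to $D_1 \indep D_2$ is superfluous (and not assumed in the statement) — in the privacy analysis the datasets are fixed, so the product structure follows from the independence of the mechanisms' noise alone, and validity via linearity holds under arbitrary dependence, which is exactly why this proposition, unlike optional continuation, imposes no independence assumption.
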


\begin{proof}
    Let $(D'_1, D'_2)$ be a neighboring dataset to $(D_1, D_2)$, i.e., either (i) $D'_1$ and $D_1$ differ by a single element and $D'_2 = D_2$; (ii) $D'_2$ and $D_2$ differ by a single element and $D'_1 = D_1$; or (iii) $D'_1 = D_1$ and $D'_2 = D_2$.
    And since $E^\DP_1(\cdot)$ and $E^\DP_2(\cdot)$ are $(\alpha, \epsilon)$-Rényi differentially private, we have that
    \[
        D_\alpha\bigl( E^\DP_1(D'_1) \Mid E^\DP_1(D_1) \bigr) \leq \epsilon
        \quad\text{and}\quad
        D_\alpha\bigl( E^\DP_1(D'_1) \Mid E^\DP_1(D_1) \bigr) \leq \epsilon.
    \]
    Then, using the fact that the randomness in $E^\DP_1(\cdot)$ and $E^\DP_2(\cdot)$ (due to the noise) are independent,
    \begin{align*}
        & D_\alpha\left( (E^\DP_1(D_1), E^\DP_1(D_2)) \Mid (E^\DP_1(D'_1), E^\DP_1(D'_2)) \right)
        \\ &= D_\alpha\left( E^\DP_1(D'_1) \Mid E^\DP_1(D'_1) \right)
            + D_\alpha\left( E^\DP_1(D_2) \Mid E^\DP_1(D'_2) \right);
    \end{align*}
    and since we must be in one of the three cases delineated above, this must be at most $\epsilon$.
    Hence, $(E^\DP_1(\cdot), E^\DP_2(\cdot))$ is $(\alpha, \epsilon)$-Rényi differentially private.

    Now considering randomness over the data, it is immediate to see that the convex combination is an e-value: under the null,
    \[ \E[\eta E^\DP_1(D_1) + (1-\eta) E^\DP_2(D_2)] = \eta \E[E^\DP_1(D_1)] + (1-\eta) \E[E^\DP_2(D_2)] \leq \eta 1 + (1-\eta) 1 = 1. \]
    And, by the post-processing theorem, $(E^\DP_1(\cdot), \eta E^\DP_1(\cdot) + (1-\eta) E^\DP_2(\cdot))$ is also $(\alpha, \epsilon)$-Rényi DP.
\end{proof}

\begin{proposition}[Dependent Averaging; Proposition~3.7 in the main text]
    If $E^\DP_1(D)$ and $E^\DP_2(D)$ are $(\alpha, \epsilon)$-Rényi differentially private e-values for a null hypothesis $H_0$, then for any $\eta \in [0, 1]$, $\eta E^\DP_1(D) + (1-\eta) E^\DP_2(D)$ is an $(\alpha, 2\epsilon)$-differentially private e-value for $H_0$.
\end{proposition}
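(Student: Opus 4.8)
The plan is to split the statement into its two halves --- that $\eta E^\DP_1(D) + (1-\eta) E^\DP_2(D)$ is a valid e-value, and that it is $(\alpha, 2\epsilon)$-Rényi differentially private --- and to prove each by mirroring the proof of the Independent Averaging result (Proposition~\ref{thm:independent-averaging-dp}), taking care to track exactly where sharing the single dataset $D$ costs an extra factor in the privacy budget.

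For validity I would argue exactly as in the background averaging result (Proposition~\ref{thm:background-averaging}): expectation is linear, each $E^\DP_j(D)$ is an e-value, and $\eta \in [0,1]$, so under $H_0$ we get $\E[\eta E^\DP_1(D) + (1-\eta) E^\DP_2(D)] = \eta\,\E[E^\DP_1(D)] + (1-\eta)\,\E[E^\DP_2(D)] \leq \eta + (1-\eta) = 1$. This step makes no use of any independence assumption (there is only one dataset), so it carries over verbatim from Proposition~\ref{thm:independent-averaging-dp}.

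For privacy, the key observation is that $(e_1,e_2) \mapsto \eta e_1 + (1-\eta) e_2$ is a deterministic post-processing, so by Proposition~\ref{thm:background-postprocessing} it suffices to show that the joint release $D \mapsto (E^\DP_1(D), E^\DP_2(D))$ is $(\alpha, 2\epsilon)$-RDP. Since $E^\DP_1(\cdot)$ and $E^\DP_2(\cdot)$ are each $(\alpha,\epsilon)$-RDP and are privatized with independent noise (as in Theorems~\ref{thm:hyp-test-valid-gaussian} and \ref{thm:hyp-test-valid-laplace}), the composition theorem for Rényi differential privacy (Proposition~\ref{thm:background-composition}) gives that this pair is $(\alpha, \epsilon + \epsilon) = (\alpha, 2\epsilon)$-RDP. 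This is precisely the point of departure from Proposition~\ref{thm:independent-averaging-dp}: there, a neighboring dataset perturbs only one of $E^\DP_1, E^\DP_2$, so only one $\epsilon$ enters; here the changed record feeds into both e-values, so both contribute and we obtain $2\epsilon$. Applying post-processing then yields that $\eta E^\DP_1(D) + (1-\eta) E^\DP_2(D)$ is $(\alpha, 2\epsilon)$-RDP, completing the proof.

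The main point to get right is the composition step: one must be explicit that the privatization noise for $E_1$ and $E_2$ is drawn independently, so that Proposition~\ref{thm:background-composition} applies to the joint mechanism; otherwise two arbitrarily coupled releases need not compose to $(\alpha, 2\epsilon)$. I would also remark that the factor $2$ is unavoidable in general, since releasing two views of the same data is exactly the composition setting --- everything else (linearity of expectation, the reduction to post-processing of the pair) is routine.
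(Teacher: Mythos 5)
Your proposal matches the paper's own proof: validity via linearity of expectation, and privacy via composition of the two $(\alpha,\epsilon)$-RDP releases into an $(\alpha,2\epsilon)$-RDP joint release followed by post-processing with the averaging map. Your extra remark about the privatization noises being independent is a sensible clarification but does not change the argument.
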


\begin{proof}
    By the composition theorem, the release of $(E^\DP_1(D), E^\DP_2(D))$ is $(\alpha, 2\epsilon)$-RDP. The post-processing then ensures that the release of $\eta E^\DP_1(D) + (1-\eta) E^\DP_2(D)$ is also $(\alpha, 2\epsilon)$-RDP.

    The validity of the e-value is due to linearity: under the null,
    \[ \E[\eta E^\DP_1(D_1) + (1-\eta) E^\DP_2(D_2)] = \eta \E[E^\DP_1(D_1)] + (1-\eta) \E[E^\DP_2(D_2)] \leq \eta 1 + (1-\eta) 1 = 1. \]
\end{proof}

\begin{theorem}[Theorem~3.9 in the main text]
    Under Assumption~3.8,
    let $\alpha > 1$ and $\epsilon > 0$.
    For each $j = 1, \ldots, k$, let $E^\DP_j$ be an $(\alpha, \epsilon/k)$-Rényi differentially private e-value for the null $H_0^{(j)}$ (e.g., obtained through Theorems~3.1 and 3.2).
    Then $\algo(E^\DP_1, \ldots, E^\DP_k)$ is valid and $(\alpha, \epsilon)$-Rényi differentially private.
\end{theorem}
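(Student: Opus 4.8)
The plan is to split the statement into its two assertions---$(\alpha,\epsilon)$-Rényi differential privacy and validity of $\algo(E^\DP_1,\ldots,E^\DP_k)$---and to reduce each to a result already in hand, so that the proof is essentially an application of composition followed by post-processing.

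First I would establish that the joint release $(E^\DP_1,\ldots,E^\DP_k)$ is $(\alpha,\epsilon)$-Rényi differentially private. This follows by induction on $k$ using the pairwise composition theorem (Proposition~\ref{thm:background-composition}): the base case is the pair $(E^\DP_1,E^\DP_2)$, which is $(\alpha,2\epsilon/k)$-RDP; in the inductive step one treats the already-composed tuple $(E^\DP_1,\ldots,E^\DP_{j-1})$ as a single $(\alpha,(j-1)\epsilon/k)$-RDP algorithm and composes it with the $(\alpha,\epsilon/k)$-RDP algorithm $E^\DP_j$, obtaining $(\alpha,j\epsilon/k)$-RDP. After $k-1$ steps this yields $(\alpha, k\cdot\epsilon/k) = (\alpha,\epsilon)$-RDP. (If the $E^\DP_j$ happen to be computed from disjoint portions of the data this bound is not tight, but since the claimed guarantee is exactly this upper bound, nothing is lost.) Then, observing that $\algo$ is a---possibly randomized---map applied to the tuple $(E^\DP_1,\ldots,E^\DP_k)$, the post-processing theorem (Proposition~\ref{thm:background-postprocessing}) immediately gives that $\algo(E^\DP_1,\ldots,E^\DP_k)$ is $(\alpha,\epsilon)$-RDP.

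For validity, I would note that each $E^\DP_j$, being constructed by one of our biased noise mechanisms (Theorem~\ref{thm:hyp-test-valid-gaussian} or Theorem~\ref{thm:hyp-test-valid-laplace}), is a valid e-value for the null $H_0^{(j)}$---nonnegative, with expectation at most one under $H_0^{(j)}$. Assumption~\ref{thm:algo-valid-assumption} then applies verbatim with the input e-values taken to be $E^\DP_1,\ldots,E^\DP_k$, so $\algo(E^\DP_1,\ldots,E^\DP_k)$ is valid, completing the proof. I do not expect a real obstacle here; the only points needing a small amount of care are spelling out the inductive use of the pairwise composition theorem and noting that the post-processing theorem as stated covers randomized post-processing, so that it applies to algorithms $\algo$ that are themselves stochastic.
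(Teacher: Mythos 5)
Your proof is correct and follows essentially the same route as the paper's: compose the $k$ private e-values (the paper invokes the composition theorem directly, you merely spell out the induction), apply post-processing to $\algo$, and get validity from Assumption~3.8 since each $E^\DP_j$ is by hypothesis a valid e-value. No gaps.
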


\begin{proof}
    That $\algo(E^\DP_1, \ldots, E^\DP_k)$ is valid is immediate: by Assumption~3.8 it suffices to have $E^\DP_1, \ldots, E^\DP_k$ be valid, which we assume them to be.

    As for privacy, this follows as an immediate use of the composition and post-processing theorems (Propositions~2.3 and 2.4): since $E^\DP_1, \ldots, E^\DP_k$ are each $(\alpha, \epsilon/k)$-Rényi differentially private, the tuple $(E^\DP_1, \ldots, E^\DP_k)$ is $(\alpha, \epsilon)$-Rényi differentially private. Thus, by post-processing, $\algo(E^\DP_1, \ldots, E^\DP_k)$ is also $(\alpha, \epsilon)$-Rényi differentially private.
\end{proof}

\begin{corollary}[Corollary~3.10 in the main text]
    Suppose $\log E_\theta(D)$ is locally Lipschitz in $\theta$.
    Let $\alpha > 1$ and $\epsilon > 0$,
    and let $\widetilde{E}^\DP_{\theta_1}, \ldots, \widetilde{E}^\DP_{\theta_k}$ be $(\alpha, \epsilon/k)$-Rényi differentially private versions of $\widetilde{E}_{\theta_1}, \ldots, \widetilde{E}_{\theta_k}$ as defined in Equation~3 in the main text.
    Then $\CI_\alpha\bigl( \widetilde{E}^\DP_{\theta_1}(D), \ldots, \widetilde{E}^\DP_{\theta_k}(D) \bigr)$ is an $(\alpha, \epsilon)$-Rényi differentially private confidence interval for $\theta^\star$, i.e., it satisfies $(\alpha, \epsilon)$-Rényi differential privacy, and
    \[ \P\left[ \theta^\star \in \CI_\alpha\!\left(\widetilde{E}^\DP_{\theta_1}(D), \ldots, \widetilde{E}^\DP_{\theta_k}(D)\right) \right] \geq 1 - \alpha. \]
\end{corollary}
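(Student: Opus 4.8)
The plan is to establish the two assertions---the coverage bound and the privacy bound---separately, reusing the reduction to finitely many e-values that precedes the corollary together with the mechanisms of Theorem~3.1 / Theorem~3.2. For coverage, fix the true parameter $\theta^\star$ and let $j^\star$ be an index with $\theta^\star \in [a_{j^\star-1}, a_{j^\star}]$. I would first recall that local Lipschitzness makes $\log E_\theta(D)$ be $L_{j^\star}$-Lipschitz on $[a_{j^\star-1},a_{j^\star}]$, so exponentiating $\log E_{\theta^\star}(D) \ge \log E_{\theta_{j^\star}}(D) - L_{j^\star}\lvert \theta^\star - \theta_{j^\star}\rvert$ and using $\lvert \theta^\star - \theta_{j^\star}\rvert \le a_{j^\star}-a_{j^\star-1}$ yields $\widetilde E_{\theta_{j^\star}}(D) \le E_{\theta^\star}(D)$ pointwise; hence $\E[\widetilde E_{\theta_{j^\star}}(D)] \le \E[E_{\theta^\star}(D)] \le 1$ under the true null, i.e.\ $\widetilde E_{\theta_{j^\star}}$ is a valid e-value for the true null. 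By Theorem~3.1 or Theorem~3.2 its privatized version $\widetilde E^\DP_{\theta_{j^\star}}$ is then also a valid e-value for the true null, now over the joint law of $D$ and the independent noise. Finally, $\theta^\star \notin \CI_\alpha(\widetilde E^\DP_{\theta_1}(D),\ldots,\widetilde E^\DP_{\theta_k}(D))$ forces block $j^\star$ to be excluded, i.e.\ $\widetilde E^\DP_{\theta_{j^\star}}(D) > 1/\alpha$, so Markov's inequality gives
\[
    \P\!\left[\theta^\star \notin \CI_\alpha(\widetilde E^\DP_{\theta_1}(D),\ldots,\widetilde E^\DP_{\theta_k}(D))\right] \;\le\; \P\!\left[\widetilde E^\DP_{\theta_{j^\star}}(D) \ge 1/\alpha\right] \;\le\; \alpha\,\E[\widetilde E^\DP_{\theta_{j^\star}}(D)] \;\le\; \alpha,
\]
which rearranges to the claimed $\P[\theta^\star \in \CI_\alpha(\ldots)] \ge 1-\alpha$.

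\textbf{Privacy.} Here I would simply observe that $\CI_\alpha$ is a fixed deterministic function of the tuple $(\widetilde E^\DP_{\theta_1}(D),\ldots,\widetilde E^\DP_{\theta_k}(D))$, each coordinate of which is $(\alpha,\epsilon/k)$-Rényi differentially private by hypothesis. Iterating the composition theorem (Proposition~2.3) shows the tuple is $(\alpha,\epsilon)$-Rényi differentially private, and the post-processing theorem (Proposition~2.4), applied to the map forming the union of the selected intervals, shows $\CI_\alpha(\ldots)$ is $(\alpha,\epsilon)$-Rényi differentially private. Equivalently, since the coverage argument shows $\CI_\alpha$ turns valid input e-values into a valid (i.e.\ covering) interval, this is precisely Theorem~3.9 instantiated with $\algo=\CI_\alpha$.

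\textbf{Main obstacle.} The only genuinely delicate point is the reduction from the continuum of nulls $\{H^{(\theta)}_0\}_\theta$ to the finite family $\widetilde E^\DP_{\theta_1},\ldots,\widetilde E^\DP_{\theta_k}$: one must ensure the single deflated, privatized e-value $\widetilde E^\DP_{\theta_{j^\star}}$ remains valid for the true $\theta^\star$ no matter where in its block $\theta^\star$ falls, which is exactly the Lipschitz-deflation bound combined with the validity-preservation guarantee of Theorems~3.1/3.2. The rest is minor bookkeeping: handling the case where $\theta^\star$ lies on a block boundary (belonging to two blocks, which only strengthens the inclusion $\{\theta^\star\notin\CI_\alpha\}\subseteq\{\widetilde E^\DP_{\theta_{j^\star}}(D)>1/\alpha\}$), and matching the threshold convention ($\le$ versus $<$) in the definition of $\CI_\alpha$. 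Privacy contributes no difficulty, being an immediate application of composition and post-processing.
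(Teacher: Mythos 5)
Your proposal is correct and follows essentially the same route as the paper: the Lipschitz-deflation step making $\widetilde{E}_{\theta_{j^\star}}$ (and hence its privatized version, whose validity Theorems~3.1/3.2 preserve) an e-value for the true null, Markov's inequality for the coverage bound, and composition plus post-processing (i.e., Theorem~3.9) for privacy. The only cosmetic difference is that you run the Markov argument directly on the private e-values, while the paper proves coverage for the non-private construction and then invokes Theorem~3.9 to transfer validity and privacy at once.
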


\begin{proof}
    As noted in the main text, each $\widetilde{E}_{\theta_j}(D)$ is an simultaneously an e-value for all nulls $H_0 : \theta^\star = \theta'$, $\theta' \in [a_{j-1}, a_j]$.
    Thus, the $\widetilde{CI}_\alpha(\cdot, \ldots, \cdot)$ algorithm constructs valid confidence intervals:
    \begin{align*}
        \P\left[ \theta^\star \in \CI_\alpha\!\left(\widetilde{E}_{\theta_1}(D), \ldots, \widetilde{E}_{\theta_k}(D)\right) \right]
        &= \P\Biggl[ \theta^\star \in \bigcup_{\substack{j=1 \\ \widetilde{E}_{\theta_j}(D) \leq 1/\alpha }}^k [a_{j-1}, a_j] \Biggr]
        \\ &\geq \P\left[ \widetilde{E}_{\theta_{j'}}(D) \leq 1/\alpha \right], \quad \text{for } j' \text{ s.t. } \theta^\star \in [a_{j'-1}, a_{j'}]
        \\ &= 1 - \P\left[ \widetilde{E}_{\theta_{j'}}(D) > 1/\alpha \right]
        \\ &\geq 1 - \frac{\E[\widetilde{E}_{\theta_{j'}}(D)]}{1/\alpha} \geq 1 - \frac{1}{1/\alpha} = 1 - \alpha.
    \end{align*}
    By applying Theorem~3.9, we conclude.
\end{proof}

\begin{proposition}[Proposition~4.1 in the main text]
    Let $F$ have support in $[\lambda_{\inf}, \lambda_{\sup}] \subset (-1/(1-\theta), 1/\theta)$.
    The log-sensitivity of $E_\theta(D)$, $\Delta_{\log}(E_{\theta})$, is upper bounded as
    \begin{align*}
        \Delta_{\log}(E_{\theta}) &\leq
            \max\bigl\{
            \log ( 1 + \max\{\lambda_{\sup} (1-\theta), -\lambda_{\inf} \theta\} ),
            \\ &\qquad\qquad -\log ( 1 + \min\{\lambda_{\inf}(1-\theta),-\lambda_{\sup}\theta\} )
            \bigr\}.
    \end{align*}
\end{proposition}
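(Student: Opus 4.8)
The plan is to use the permutation invariance of $E_\theta$ together with the ``predictable'' structure of Cover's betting fractions to reduce the whole sensitivity computation to bounding a single multiplicative factor. Fix neighboring datasets $D$ and $D'$ with $\lvert D \Delta D' \rvert \le 1$; if $D = D'$ there is nothing to prove, so assume without loss of generality that $D' = D \cup \{y\}$ for some $y \in [0,1]$ (the case $D = D' \cup \{y\}$ being symmetric). Since $E_\theta$ is invariant under permutations of its data \citep{evalue-mean-cover,ville}, we may place the extra record $y$ in the last position, i.e.\ write $D = (Y_1,\dots,Y_n)$ and $D' = (Y_1,\dots,Y_n,y)$. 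From the defining formula, the betting fraction $\lambda_i$ depends only on $Y_1,\dots,Y_i$, so the fractions used at steps $1,\dots,n$ are identical for $D$ and $D'$; hence the product defining $E_\theta(D')$ shares its first $n$ factors with $E_\theta(D)$ and
\[
    E_\theta(D') = E_\theta(D)\cdot\bigl(1 + \lambda_{n+1}(y-\theta)\bigr),
    \qquad\text{so}\qquad
    \bigl\lvert \log E_\theta(D') - \log E_\theta(D)\bigr\rvert = \bigl\lvert \log\bigl(1 + \lambda_{n+1}(y-\theta)\bigr)\bigr\rvert .
\]

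Next I would check that $\lambda_{n+1} \in [\lambda_{\inf},\lambda_{\sup}]$ and that the logarithm above is well defined. By its definition $\lambda_{n+1}$ is a ratio $\E_{\lambda\sim F}[\lambda\, w(\lambda)] / \E_{\lambda\sim F}[w(\lambda)]$ with $w(\lambda) = \prod_j \bigl(1 + \lambda(Y_j-\theta)\bigr)$. The hypothesis $\supp(F) \subseteq [\lambda_{\inf},\lambda_{\sup}] \subset (-1/(1-\theta),\,1/\theta)$ forces each factor $1 + \lambda(Y_j-\theta)$ to be strictly positive on $\supp(F)$ (when $\lambda \ge 0$ the factor is at least $1 - \lambda\theta > 0$, and when $\lambda < 0$ it is at least $1 + \lambda(1-\theta) > 0$), so $w$ is strictly positive there and $\lambda_{n+1}$ is a genuine weighted average of values in $[\lambda_{\inf},\lambda_{\sup}]$, hence lies in that interval. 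The same positivity, applied with $\lambda = \lambda_{n+1}$ and $Y = y$, shows $1 + \lambda_{n+1}(y-\theta) > 0$.

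It then remains to bound $\lvert \log(1 + \lambda(y-\theta))\rvert$ over $\lambda \in [\lambda_{\inf},\lambda_{\sup}]$ and $y \in [0,1]$. The map $(\lambda,y)\mapsto \lambda(y-\theta)$ is bilinear, so on this box it attains its extrema at corners, giving maximum $\max\{\lambda_{\sup}(1-\theta),\,-\lambda_{\inf}\theta\} \ge 0$ and minimum $\min\{\lambda_{\inf}(1-\theta),\,-\lambda_{\sup}\theta\} \le 0$; since $\log$ is increasing, $\log(1 + \lambda(y-\theta))$ ranges between $\log\bigl(1 + \min\{\lambda_{\inf}(1-\theta),\,-\lambda_{\sup}\theta\}\bigr) \le 0$ and $\log\bigl(1 + \max\{\lambda_{\sup}(1-\theta),\,-\lambda_{\inf}\theta\}\bigr) \ge 0$. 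Taking absolute values and then the supremum over all neighboring $D, D'$ yields exactly the stated bound. The only genuinely delicate points are the reduction to a single factor — which rests on permutation invariance plus the fact that appending a record leaves the earlier betting fractions untouched — and the positivity of the portfolio weights under the support constraint; the corner optimization at the end is routine.
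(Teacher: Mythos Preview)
Your proof is correct and follows essentially the same route as the paper: reduce via permutation invariance to a single extra factor $1+\lambda_{n+1}(y-\theta)$, then bound $\lvert\log(1+\lambda(y-\theta))\rvert$ over the box $[\lambda_{\inf},\lambda_{\sup}]\times[0,1]$. Your version is in fact slightly tidier---you explicitly justify $\lambda_{n+1}\in[\lambda_{\inf},\lambda_{\sup}]$ via the positive-weight average argument (which the paper leaves implicit) and replace the paper's four-case sign analysis by a one-line bilinear corner argument.
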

\begin{proof}
    Let $D$ and $D'$ be neighboring datasets. I.e., either $D'$ is $D$ with an additional sample $Y_{n+1}$, or it is $D$ with a sample $Y_i$ removed.
    Since the e-value $E_{D,\theta}$ is invariant to permutations of the data, we can, without loss of generality, consider that $Y_n$ was the element removed.

    When adding one new element, it follows:
    \begin{align*}
        \lvert \log E_{D,\theta} - \log E_{D',\theta} \rvert
        &= \left\lvert \sum_{i=1}^n \log \bigl( 1 + \lambda_i (Y_i - \theta) \bigr) - \sum_{i=1}^{n+1} \log \bigl( 1 + \lambda_i (Y_i - \theta) \bigr) \right\rvert
        \\ &= \left\lvert \log \bigl( 1 + \lambda_{n+1} (Y_{n+1} - \theta) \bigr) \right\rvert;
    \end{align*}
    and, when removing an element -- wlog. $Y_n$, we have
    \begin{align*}
        \lvert \log E_{D,\theta} - \log E_{D',\theta} \rvert
        &= \left\lvert \sum_{i=1}^n \log \bigl( 1 + \lambda_i (Y_i - \theta) \bigr) - \sum_{i=1}^{n-1} \log \bigl( 1 + \lambda_i (Y_i - \theta) \bigr) \right\rvert
        \\ &= \left\lvert \log \bigl( 1 + \lambda_{n} (Y_{n} - \theta) \bigr) \right\rvert.
    \end{align*}
    Hence, all that remains is to upper bound $\left\lvert \log \bigl( 1 + \lambda (Y - \theta) \bigr) \right\rvert$, for any $\lambda \in (-c/(1-\theta), c/\theta)$ and $Y \in [0, 1]$.

    We split into a few cases:
    \begin{enumerate}
        \item When $Y \geq \theta$ and $\lambda \geq 0$, it follows:
            \begin{align*}
                \log \bigl( 1 + \lambda (Y - \theta) \bigr)
                &\leq \log \bigl( 1 + \lambda (1 - \theta) \bigr)
                \leq \log \bigl( 1 + \lambda_{\sup} (1 - \theta) \bigr)
                \\
                \log \bigl( 1 + \lambda (Y - \theta) \bigr)
                &\geq \log \bigl( 1 + \lambda 0 \bigr)
                = \log 1 = 0;
            \end{align*}
        \item When $Y \geq \theta$ and $\lambda \leq 0$, it follows:
            \begin{align*}
                \log \bigl( 1 + \lambda (Y - \theta) \bigr)
                &\leq \log \bigl( 1 + \lambda 0 \bigr)
                = \log 1 = 0
                \\
                \log \bigl( 1 + \lambda (Y - \theta) \bigr)
                &\geq \log \bigl( 1 + \lambda (1 - \theta) \bigr)
                \geq \log \bigl( 1 + \lambda_{\inf} (1 - \theta) \bigr)
            \end{align*}
        \item When $Y \leq \theta$ and $\lambda \geq 0$, it follows:
            \begin{align*}
                \log \bigl( 1 + \lambda (Y - \theta) \bigr)
                &\leq \log \bigl( 1 + \lambda 0 \bigr)
                = \log 1 = 0
                \\
                \log \bigl( 1 + \lambda (Y - \theta) \bigr)
                &\geq \log \bigl( 1 + \lambda (0 - \theta) \bigr)
                \geq \log \bigl( 1 + \lambda_{\sup} (0 - \theta) \bigr)
                = \log \bigl( 1 - \lambda_{\sup} \theta \bigr)
            \end{align*}
        \item When $Y \leq \theta$ and $\lambda \leq 0$, it follows:
            \begin{align*}
                \log \bigl( 1 + \lambda (Y - \theta) \bigr)
                &\leq \log \bigl( 1 + \lambda (0 - \theta) \bigr)
                \leq \log \bigl( 1 + \lambda_{\inf} (0 - \theta) \bigr)
                \leq \log \bigl( 1 - \lambda_{\inf} \theta \bigr)
                \\
                \log \bigl( 1 + \lambda (Y - \theta) \bigr)
                &\geq \log \bigl( 1 + \lambda 0 \bigr)
                = \log 1 = 0;
            \end{align*}
    \end{enumerate}
    Thus, we conclude that
    \begin{align*}
        \left\lvert \log \bigl( 1 + \lambda (Y - \theta) \bigr) \right\rvert
        &\leq \max \{ 0, \lvert \log(1 + \lambda_{\sup} (1-\theta)) \rvert, \lvert \log(1 + \lambda_{\inf} (1-\theta)) \rvert,
            \\ &\qquad\quad \lvert \log(1 - \lambda_{\sup} \theta) \rvert, \lvert \log(1 - \lambda_{\inf} \theta) \rvert \}
        \\ &= \max \{ \lvert \log(1 + \lambda_{\sup} (1-\theta)) \rvert, \lvert \log(1 + \lambda_{\inf} (1-\theta)) \rvert,
            \\ &\qquad\quad \lvert \log(1 - \lambda_{\sup} \theta) \rvert, \lvert \log(1 - \lambda_{\inf} \theta) \rvert \}.
        \\ &= \max\bigl\{
            \log ( 1 + \max \{ \lambda_{\sup} (1-\theta), \lambda_{\inf} (1-\theta), -\lambda_{\sup} \theta, -\lambda_{\inf} \theta \} ),
            \\ &\qquad\quad -\log ( 1 + \min \{ \lambda_{\sup} (1-\theta), \lambda_{\inf} (1-\theta), -\lambda_{\sup} \theta, -\lambda_{\inf} \theta \} )
            \bigr\}.
        \\ &\leq \max\bigl\{
            \log ( 1 + \max\{\lambda_{\sup} (1-\theta), -\lambda_{\inf} \theta\} ),
            \\ &\qquad\quad -\log ( 1 + \min\{\lambda_{\inf}(1-\theta),-\lambda_{\sup}\theta\} )
            \bigr\}.
        \qedhere
    \end{align*}
\end{proof}

\begin{proposition}[Proposition~4.2 in the main text]
    For distributions $F$ with support contained in $[\lambda_{\inf}, \lambda_{\sup}]$, the Lipschitz constant of $\log E_\theta(D)$ over $\theta \in [\theta_{\inf}, \theta_{\sup}]$ is upper bounded by
    \[
        \|\theta \mapsto \log E_\theta(D)\|_{\mathrm{Lip}}
        \leq \max\left\{
            \left\lvert \frac{\lambda_{\sup}}{1 - \lambda_{\sup} \theta_{\sup}} \right\rvert,
            \left\lvert \frac{\lambda_{\inf}}{1 + \lambda_{\inf} (1 - \theta_{\inf})} \right\rvert
            \right\}.
    \]
\end{proposition}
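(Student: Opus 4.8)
The plan is to bound $\lVert\theta\mapsto\log E_\theta(D)\rVert_{\mathrm{Lip}}$ by $\sup_{\theta\in[\theta_{\inf},\theta_{\sup}]}\bigl\lvert\tfrac{\d}{\d\theta}\log E_\theta(D)\bigr\rvert$ and to estimate this derivative explicitly. The one genuine difficulty is that the betting fractions $\lambda_i$ are themselves functions of $\theta$, so $\tfrac{\d}{\d\theta}\sum_{i=1}^n\log(1+\lambda_i(Y_i-\theta))$ cannot be read off term by term. I would defuse this using the telescoping identity behind Cover's universal portfolio: from the defining recursion for $\lambda_i$ one shows, by induction on $n$, that $\prod_{i=1}^n(1+\lambda_i(Y_i-\theta))=\E_{\lambda\sim F}\bigl[\prod_{i=1}^n(1+\lambda(Y_i-\theta))\bigr]$. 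Writing $g_\lambda(\theta)\coloneq\prod_{i=1}^n(1+\lambda(Y_i-\theta))$, this reads $E_\theta(D)=\E_F[g_\lambda(\theta)]$, so that (justifying differentiation under $\E_F$ by dominated convergence on the compact support $[\lambda_{\inf},\lambda_{\sup}]$) $\tfrac{\d}{\d\theta}\log E_\theta(D)=\E_F[g_\lambda'(\theta)]/\E_F[g_\lambda(\theta)]$ is a convex combination, with weights proportional to $g_\lambda(\theta)\ge 0$, of the fixed-$\lambda$ log-derivatives $\tfrac{\d}{\d\theta}\log g_\lambda(\theta)=-\sum_{i=1}^n\tfrac{\lambda}{1+\lambda(Y_i-\theta)}$ for $\lambda\in[\lambda_{\inf},\lambda_{\sup}]$. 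The support hypothesis $[\lambda_{\inf},\lambda_{\sup}]\subset(-1/(1-\theta),1/\theta)$ guarantees that all the denominators stay strictly positive on the relevant range, so nothing blows up.

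Everything then reduces to bounding a single factor $\bigl\lvert\tfrac{\lambda}{1+\lambda(y-\theta)}\bigr\rvert$ uniformly over $\lambda\in[\lambda_{\inf},\lambda_{\sup}]$, $y\in[0,1]$ and $\theta\in[\theta_{\inf},\theta_{\sup}]$. This is a brief case split on the sign of $\lambda$: since $\lvert\lambda\rvert/(1+\lambda(y-\theta))$ is increasing in $\lvert\lambda\rvert$ and $y\mapsto 1+\lambda(y-\theta)$ is monotone, the supremum is attained at a vertex of the box. For $\lambda\ge 0$ the denominator is smallest at $y=0$, $\theta=\theta_{\sup}$, $\lambda=\lambda_{\sup}$, giving $\lambda_{\sup}/(1-\lambda_{\sup}\theta_{\sup})$; for $\lambda\le 0$ it is smallest in modulus at $y=1$, $\theta=\theta_{\inf}$, $\lambda=\lambda_{\inf}$, giving $\lvert\lambda_{\inf}/(1+\lambda_{\inf}(1-\theta_{\inf}))\rvert$. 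The maximum of these two is precisely the claimed right-hand side; equivalently, each factor $\theta\mapsto\log(1+\lambda(Y_i-\theta))$ is Lipschitz with that constant on $[\theta_{\inf},\theta_{\sup}]$.

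I expect the main obstacle to lie not in this vertex computation but in the first step — establishing the telescoping identity cleanly and legitimately differentiating through both the mixture and the product, so that the $\theta$-dependence of the $\lambda_i$ is fully absorbed — together with the bookkeeping of how the $n$ summands of $-\sum_i\tfrac{\lambda}{1+\lambda(Y_i-\theta)}$ feed into the final constant. An alternative that sidesteps the mixture entirely is to observe directly that $\lambda_i=\E_F[\lambda\,W_{i-1}(\lambda)]/\E_F[W_{i-1}(\lambda)]$, with $W_{i-1}(\lambda)=\prod_{j<i}(1+\lambda(Y_j-\theta))\ge 0$, is a weighted average of points in $[\lambda_{\inf},\lambda_{\sup}]$ and hence lies in that interval, and then to apply the same vertex bound factor by factor.
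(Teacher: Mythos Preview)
Your approach is essentially the paper's: rewrite $E_\theta(D)=\E_{\lambda\sim F}\bigl[\prod_{i=1}^n(1+\lambda(Y_i-\theta))\bigr]$, differentiate under the expectation, recognise the result as an average under the tilted law $\d F_{n,\theta}\propto g_\lambda(\theta)\,\d F$ of $-\sum_{i=1}^n\lambda/(1+\lambda(Y_i-\theta))$, and then bound a single term $\lvert\lambda/(1+\lambda(Y_i-\theta))\rvert$ by the vertex argument you describe. Your justifications (the telescoping identity for Cover's universal portfolio, dominated convergence on compact support) are more explicit than the paper's, but the skeleton is identical.

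Your worry about ``the bookkeeping of how the $n$ summands feed into the final constant'' is well placed. After the reweighting step one is bounding $\bigl\lvert\E_{F_{n,\theta}}\bigl[\sum_{i=1}^n(-\lambda)/(1+\lambda(Y_i-\theta))\bigr]\bigr\rvert$, and each summand is bounded by the stated maximum, so the honest bound is $n$ times the right-hand side in the proposition. The paper's own proof reaches exactly the same expression and then passes to the final inequality without the factor $n$; this appears to be a slip in the paper rather than something your argument is missing. (A quick sanity check: if $F$ is a point mass at $\lambda_0$, then $\log E_\theta(D)=\sum_{i=1}^n\log(1+\lambda_0(Y_i-\theta))$ and the derivative is visibly of order $n$.) Your alternative route at the end --- noting $\lambda_i\in[\lambda_{\inf},\lambda_{\sup}]$ and bounding factor by factor --- runs into the same $n$, and has the additional wrinkle that each $\lambda_i$ depends on $\theta$, so the term-by-term derivative is not simply $-\lambda_i/(1+\lambda_i(Y_i-\theta))$; the mixture route you lead with is cleaner.
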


\begin{proof}
    Bounding the Lipschitz constant is equivalent to bounding the absolute value of the derivative. So it follows:
    \begin{align*}
        \left\lvert \frac{\d}{\d\theta} \log E_\theta(D) \right\rvert
        &= \left\lvert \frac{\frac{\d}{\d\theta} E_\theta(D)}{E_\theta(D)} \right\rvert
        = \left\lvert \frac{\frac{\d}{\d\theta} \E_{\lambda \sim F}[ \prod_{i=1}^n (1 + \lambda (Y_i - \theta)) ]}{E_\theta(D)} \right\rvert
        \\ &= \left\lvert \frac{\E_{\lambda \sim F}[ \frac{\d}{\d\theta} \prod_{i=1}^n (1 + \lambda (Y_i - \theta)) ]}{E_\theta(D)} \right\rvert
        \\ &= \left\lvert \frac{\E_{\lambda \sim F}[ \frac{\d}{\d\theta} \exp( \sum_{i=1}^n \log (1 + \lambda (Y_i - \theta)) ) ]}{E_\theta(D)} \right\rvert
        \\ &= \left\lvert \frac{\E_{\lambda \sim F}[ \exp( \sum_{i=1}^n \log (1 + \lambda (Y_i - \theta)) ) \frac{\d}{\d\theta} \sum_{i=1}^n \log (1 + \lambda (Y_i - \theta)) ]}{E_\theta(D)} \right\rvert
    \end{align*}
    Write $E^{(\lambda)}_\theta(D) = \prod_{i=1}^n (1 + \lambda (Y_i - \theta))$. Then:
    \begin{align*}
        & \left\lvert \frac{\E_{\lambda \sim F}[ \exp( \sum_{i=1}^n \log (1 + \lambda (Y_i - \theta)) ) \frac{\d}{\d\theta} \sum_{i=1}^n \log (1 + \lambda (Y_i - \theta)) ]}{E_\theta(D)} \right\rvert
        \\ &= \left\lvert \frac{\E_{\lambda \sim F}[ E^{(\lambda)}_\theta(D) \frac{\d}{\d\theta} \sum_{i=1}^n \log (1 + \lambda (Y_i - \theta)) ]}{E_\theta(D)} \right\rvert
        \\ &= \left\lvert \frac{\E_{\lambda \sim F}[ E^{(\lambda)}_\theta(D) \frac{\d}{\d\theta} \sum_{i=1}^n \log (1 + \lambda (Y_i - \theta)) ]}{\E_{\lambda \sim F}[E^{(\lambda)}_\theta(D)]} \right\rvert;
    \end{align*}
    Now let $F_{n,\theta}$ be the distribution such that $\d F_{n,\theta} / \d F \propto E^{(\lambda)}_\theta(D)$. This allows us to rewrite the ratio as
    \begin{align*}
        & \left\lvert \frac{\E_{\lambda \sim F}[ E^{(\lambda)}_\theta(D) \frac{\d}{\d\theta} \sum_{i=1}^n \log (1 + \lambda (Y_i - \theta)) ]}{\E_{\lambda \sim F}[E^{(\lambda)}_\theta(D)]} \right\rvert
        \\ &= \left\lvert \E_{\lambda \sim F_{n,\theta}}\left[ \frac{\d}{\d\theta} \sum_{i=1}^n \log (1 + \lambda (Y_i - \theta)) \right] \right\rvert
        \\ &= \left\lvert \E_{\lambda \sim F_{n,\theta}}\left[ \sum_{i=1}^n \frac{-\lambda}{1 + \lambda (Y_i - \theta)} \right] \right\rvert.
    \end{align*}
    So all that remains is to bound this.
    Now, note that $-\lambda / (1 + \lambda (Y_i - \theta))$ is (i) increasing in $Y - \theta$ and (ii) decreasing in $\lambda$. Thus we have
    \[
        -\frac{\lambda_{\sup}}{1 - \lambda_{\sup} \theta_{\sup}}
        \leq -\lambda / (1 + \lambda (Y_i - \theta))
        \leq -\frac{\lambda_{\inf}}{1 + \lambda_{\inf} (1 - \theta_{\inf})},
    \]
    and
    \begin{align*}
        \left\lvert \E_{\lambda \sim F_{n,\theta}}\left[ \sum_{i=1}^n \frac{-\lambda}{1 + \lambda (Y_i - \theta)} \right] \right\rvert
        &\leq \max\left\{
            \left\lvert -\frac{\lambda_{\sup}}{1 - \lambda_{\sup} \theta_{\sup}} \right\rvert,
            \left\lvert -\frac{\lambda_{\inf}}{1 + \lambda_{\inf} (1 - \theta_{\inf})} \right\rvert
            \right\}
        \\ &= \max\left\{
            \left\lvert \frac{\lambda_{\sup}}{1 - \lambda_{\sup} \theta_{\sup}} \right\rvert,
            \left\lvert \frac{\lambda_{\inf}}{1 + \lambda_{\inf} (1 - \theta_{\inf})} \right\rvert
            \right\}.
    \end{align*}
\end{proof}

\begin{proposition}[Proposition~4.3 in the main text]
    Suppose the scores are all contained in $[a, b]$ with $a > 0$.
    Then the log-sensitivity of $E^\exch(D; S^\test)$, $\Delta_{\log}(E^\exch)$, is upper bounded as
    \[ \Delta_{\log}(E^\exch) \leq 2 \cdot \frac{b/a}{n+1}. \]
\end{proposition}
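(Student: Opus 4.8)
The plan is to bound the log-sensitivity from its definition \eqref{eq:sensitivity}, by tracking how $\log E^\exch(D; S^\test)$ changes when a single calibration record is added to or removed from $D$. The test score $S^\test$ is a query point, not part of the private dataset, so it is held fixed; and since $E^\exch$ depends on the calibration scores only through their sum $\sum_i s(X_i,Y_i)$, removing or adding any one record is equivalent to doing so to the last record. So I would split into the two neighbouring cases: $D'$ equals $D$ with one record removed (then $D$ has $n$ records, $D'$ has $n-1$) and $D'$ equals $D$ with one record added (then $D$ has $n$, $D'$ has $n+1$). Throughout write $s_i \coloneq s(X_i,Y_i)$.

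For the removal case, set $S \coloneq \sum_{i=1}^{n-1} s_i + S^\test$; a one-line simplification of the ratio of the two e-values gives
\[
  \log E^\exch(D; S^\test) - \log E^\exch(D'; S^\test)
  = \log\frac{(n+1)\,S^\test}{S+s_n} - \log\frac{n\,S^\test}{S}
  = \log\!\Bigl(1+\tfrac1n\Bigr) - \log\!\Bigl(1+\tfrac{s_n}{S}\Bigr).
\]
The key point is that the two terms on the right are both nonnegative and both at most $b/(na)$: by $\log(1+x)\le x$, the first is at most $1/n\le b/(na)$ (since $a\le b$), and the second is at most $s_n/S\le b/(na)$, because $s_n\le b$ and $S\ge na$ — it being a sum of $n$ scores, each $\ge a$. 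A difference of two numbers lying in $[0,b/(na)]$ has absolute value at most $b/(na)$, and $b/(na)\le 2b/((n+1)a)$ since $n+1\le 2n$. The addition case is handled identically, with $T\coloneq\sum_{i=1}^n s_i + S^\test\ge(n+1)a$ replacing $S$: the log-difference becomes $\log(1+\tfrac{s_{n+1}}{T})-\log(1+\tfrac1{n+1})$, once more a difference of two numbers in $[0,b/((n+1)a)]$. As all of these estimates are uniform over $S^\test\in[a,b]$, taking the supremum over neighbouring datasets yields $\Delta_{\log}(E^\exch)\le 2(b/a)/(n+1)$.

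The step I expect to need some care is the bound on $\lvert\log(1+\tfrac1n)-\log(1+\tfrac{s_n}{S})\rvert$. A plain triangle inequality only gives $1/n + b/(na)$, which is not in general below the target $2b/((n+1)a)$ (already for $n=1$), so one really has to use that the two perturbations move the e-value in opposite directions — shrinking the leading factor from $n+1$ to $n$ pushes it up, while the extra term in the denominator pushes it down — and retain only the larger of the two magnitudes. The remaining ingredients (the cancellation in the e-value ratio, the inequality $\log(1+x)\le x$, and the lower bound $S\ge na$) are routine.
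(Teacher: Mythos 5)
Your proof is correct, and it follows the same skeleton as the paper's (a case split over adding versus removing one calibration record, with the test score held fixed and the removed record taken WLOG to be the last), but the key bounding step is genuinely different. The paper rewrites the e-value as $S^\test$ divided by the \emph{average} $\frac{1}{n+1}(\sum_i S_i + S^\test)$, bounds the difference of logs by the difference of these averages divided by a lower bound on them (i.e.\ a Lipschitz bound on $\log$), and then runs a triangle-inequality telescoping of the numerator, which lands exactly on $2b/(a(n+1))$ in each case. You instead cancel the ratio of the two e-values exactly, reducing the log-difference to $\log(1+\tfrac1n)-\log(1+\tfrac{s_n}{S})$ (resp.\ $\log(1+\tfrac{1}{n+1})-\log(1+\tfrac{s_{n+1}}{T})$), and then exploit that both terms are nonnegative and individually at most $b/(na)$ (resp.\ $b/((n+1)a)$) via $\log(1+x)\le x$ and $S\ge na$, $T\ge(n+1)a$ -- correctly noting that a plain triangle inequality would not reach the target and that the two perturbations push the e-value in opposite directions. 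This buys a marginally sharper conclusion (your intermediate bounds are $b/(na)$ and $b/((n+1)a)$, i.e.\ essentially half the paper's constant, with the factor $2$ only spent converting $1/n$ into $2/(n+1)$), at the cost of no extra machinery; the paper's route is more mechanical but arrives at the stated constant directly. All the auxiliary facts you use ($S^\test\in[a,b]$ so the sums of $n$ or $n+1$ terms are bounded below by $na$ and $(n+1)a$, and $n+1\le 2n$) are consistent with the proposition's hypotheses, so the argument goes through as written.
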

\begin{proof}
    We want to compute $\Delta_{\log}(E^\exch) = \sup_{|D \Delta D'| \leq 1} \lvert \log E^\exch(D; S^\test) - \log E^\exch(D'; S^\test) \rvert$.
    For simplicity, we will rewrite the e-value as
    \[ E^\exch(D; S^\test) = \frac{S^\test}{\frac{1}{n+1}\left( \sum_{i=1}^n S_i + S^\test \right)} \]
    There are three cases:
    \begin{enumerate}
        \item[(i)] $D = D'$, in which case $\lvert \log E^\exch(D; S^\test) - \log E^\exch(D'; S^\test) \rvert = 0$.

        \item[(ii)] $D'$ differs from $D$ by one new element, $Y_{n+1}$; in this case we have
            \begin{align*}
                & \lvert \log E^\exch(D; S^\test) - \log E^\exch(D'; S^\test) \rvert
                \\ &= \left\lvert \left( \log S^\test - \log \frac{ \sum_{i=1}^n S_i + S^\test }{n+1} \right) - \left( \log S^\test - \log \frac{\sum_{i=1}^{n+1} S_i + S^\test}{n+2} \right) \right\rvert
                \\ &= \left\lvert \log \frac{ \sum_{i=1}^n S_i + S^\test }{n+1} - \log \frac{\sum_{i=1}^{n+1} S_i + S^\test}{n+2} \right\rvert
                \\ &\leq \left\lvert \frac{ \sum_{i=1}^n S_i + S^\test }{n+1} - \frac{\sum_{i=1}^{n+1} S_i + S^\test}{n+2} \right\rvert \biggm/ \left( \frac{\sum_{i=1}^n a + a}{n+2} \right)
                \\ &= \left\lvert \frac{ \sum_{i=1}^n S_i + S^\test }{n+1} - \frac{\sum_{i=1}^{n+1} S_i + S^\test}{n+2} \right\rvert \biggm/ \left( \frac{n+1}{n+2} a \right)
                \\ &= \left\lvert \frac{ \sum_{i=1}^n S_i + S^\test }{n+1} \cdot \frac{n+2}{n+1} - \frac{\sum_{i=1}^{n+1} S_i + S^\test}{n+2} \cdot \frac{n+2}{n+1} \right\rvert / a
                \\ &= \left\lvert \frac{ \sum_{i=1}^n S_i + S^\test }{n+1} \cdot \frac{n+2}{n+1} - \frac{\sum_{i=1}^{n+1} S_i + S^\test}{n+1} \right\rvert / a
                \\ &= \left\lvert \sum_{i=1}^n \left( \frac{1}{n+1} \cdot \frac{n+2}{n+1} - \frac{1}{n+1} \right) S_i + \left( \frac{1}{n+1} \cdot \frac{n+2}{n+1} - \frac{1}{n+1} \right) S^\test - \frac{S_{n+1}}{n+1} \right\rvert / a
                \\ &= \left\lvert \sum_{i=1}^n \left( \frac{n+2}{n+1} - 1 \right) S_i + \left( \frac{n+2}{n+1} - 1 \right) S^\test - S_{n+1} \right\rvert / a (n+1)
            \end{align*}
            \begin{align*}
                \\ &\leq \left( \left( \frac{n+2}{n+1} - 1 \right) \left\lvert \sum_{i=1}^n S_i + S^\test \right\rvert + \lvert S_{n+1} \rvert \right) / a (n+1)
                \\ &= \left( \frac{1}{n+1} \left\lvert \sum_{i=1}^n S_i + S^\test \right\rvert + \lvert S_{n+1} \rvert \right) / a (n+1)
                \\ &\leq \left( \frac{1}{n+1} \left\lvert \sum_{i=1}^n b + b \right\rvert + b \right) / a (n+1)
                \\ &= \left( b + b \right) / a (n+1) = 2b / a(n+1) = 2 \cdot \frac{b/a}{n+1}.
            \end{align*}

        \item[(iii)] $D'$ differs from $D$ by one less element; in this case we have
            \begin{align*}
                & \lvert \log E^\exch(D; S^\test) - \log E^\exch(D'; S^\test) \rvert
                \\ &= \left\lvert \left( \log S^\test - \log \frac{ \sum_{i=1}^n S_i + S^\test }{n+1} \right) - \left( \log S^\test - \log \frac{\sum_{i=1}^{n-1} S_i + S^\test}{n} \right) \right\rvert
                \\ &= \left\lvert \log \frac{ \sum_{i=1}^n S_i + S^\test }{n+1} - \log \frac{\sum_{i=1}^{n-1} S_i + S^\test}{n} \right\rvert
                \\ &\leq \left\lvert \frac{ \sum_{i=1}^n S_i + S^\test }{n+1} - \frac{\sum_{i=1}^{n-1} S_i + S^\test}{n} \right\rvert \biggm/ \left( \frac{\sum_{i=1}^{n-1} a + a}{n+1} \right)
                \\ &= \left\lvert \frac{ \sum_{i=1}^n S_i + S^\test }{n+1} - \frac{\sum_{i=1}^{n-1} S_i + S^\test}{n} \right\rvert \biggm/ \left( \frac{n}{n+1} a \right)
                \\ &= \left\lvert \frac{ \sum_{i=1}^n S_i + S^\test }{n+1} \cdot \frac{n}{n+1} - \frac{\sum_{i=1}^{n-1} S_i + S^\test}{n} \cdot \frac{n}{n+1} \right\rvert / a
                \\ &= \left\lvert \frac{ \sum_{i=1}^n S_i + S^\test }{n+1} \cdot \frac{n}{n+1} - \frac{\sum_{i=1}^{n-1} S_i + S^\test}{n+1} \right\rvert / a
                \\ &= \left\lvert \sum_{i=1}^{n-1} \left( \frac{1}{n+1} \cdot \frac{n}{n+1} - \frac{1}{n+1} \right) S_i + \left( \frac{1}{n+1} \cdot \frac{n}{n+1} - \frac{1}{n+1} \right) S^\test + \frac{S_n}{n+1} \right\rvert / a
                \\ &= \left\lvert \sum_{i=1}^{n-1} \left( \frac{n}{n+1} - 1 \right) S_i + \left( \frac{n}{n+1} - 1 \right) S^\test + S_n \right\rvert / a (n+1)
                \\ &\leq \left( \left( 1 - \frac{n}{n+1} \right) \left\lvert \sum_{i=1}^{n-1} S_i + S^\test \right\rvert + \lvert S_n \rvert \right) / a (n+1)
                \\ &= \left( \frac{1}{n+1} \left\lvert \sum_{i=1}^{n-1} S_i + S^\test \right\rvert + \lvert S_n \rvert \right) / a (n+1)
                \\ &\leq \left( \frac{1}{n+1} \left\lvert \sum_{i=1}^{n-1} b + b \right\rvert + b \right) / a (n+1)
                \\ &= \left( \frac{n}{n+1} b + b \right) / a (n+1)
                \\ &\leq 2b / a (n+1)
                = 2 \cdot \frac{b/a}{n+1}.
            \end{align*}
    \end{enumerate}
\end{proof}

\section{Additional theoretical results}

\subsection{$(\epsilon, \delta)$-differential privacy} \label{suppl:epsilon-delta-dp}

\begin{theorem}[$(\epsilon, \delta)$-DP Biased Gaussian mechanism for a single e-value] \label{thm:classic-dp-gaussian}
    For any $\epsilon, \delta > 0$,
    let $E^\DP(D)$ be as in Equation~1 in the main text with $\xi \sim \mathcal{N}(c^2 [\Delta_{\log}(E)]^2 / 2 \epsilon^2, c^2 [\Delta_{\log}(E)]^2 / \epsilon^2)$, for $c^2 = 2 \log 1.25/\delta$.
    Then $E^\DP(D)$ is a valid e-value satisfying $(\epsilon, \delta)$-differential privacy.
\end{theorem}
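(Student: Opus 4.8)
The plan is to mirror the proof of the RDP biased Gaussian mechanism (Theorem~3.1 in the main text), handling privacy and validity separately and exploiting that $E^\DP(D) = E(D)\cdot e^{-\xi}$ is a deterministic post-processing ($z\mapsto e^z$) of $\log E^\DP(D) = \log E(D) - \xi$.

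\textbf{Privacy.} Write $\xi = \mu + \sigma Z$ with $Z\sim\mathcal{N}(0,1)$, $\mu = c^2[\Delta_{\log}(E)]^2/(2\epsilon^2)$, and $\sigma = c\,\Delta_{\log}(E)/\epsilon$, so that $\log E^\DP(D) = \log E(D) - \sigma Z - \mu$, where $-\sigma Z \sim \mathcal{N}(0,\sigma^2)$. Thus the biased mechanism is the ordinary zero-mean Gaussian mechanism applied to the scalar query $D\mapsto\log E(D)$, followed by the deterministic shift $z\mapsto z-\mu$. Since applying a common deterministic map to both $\algo(D)$ and $\algo(D')$ preserves $(\epsilon,\delta)$-indistinguishability (post-processing), it suffices to show that $D\mapsto\log E(D) + \mathcal{N}(0,\sigma^2)$ is $(\epsilon,\delta)$-DP. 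The $\ell_2$-sensitivity of $D\mapsto\log E(D)$ is exactly $\Delta_{\log}(E)$ by Equation~2, and with $\sigma = c\,\Delta_{\log}(E)/\epsilon$ and $c^2 = 2\log(1.25/\delta)$ this is precisely the classical Gaussian mechanism, which yields $(\epsilon,\delta)$-DP. A final application of post-processing with $z\mapsto e^z$ shows that $E^\DP(D)$ is $(\epsilon,\delta)$-DP.

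\textbf{Validity.} As argued in Section~3.1, since $\xi$ is independent of $E(D)$ and $E(D)$ is an e-value, it suffices that $\E[e^{-\xi}] \le 1$. Using the Gaussian MGF, $\E[e^{-\xi}] = \exp\!\bigl(-\mu + \tfrac12\sigma^2\bigr)$; substituting $\mu = c^2[\Delta_{\log}(E)]^2/(2\epsilon^2)$ and $\sigma^2 = c^2[\Delta_{\log}(E)]^2/\epsilon^2$ makes the exponent vanish exactly, so $\E[e^{-\xi}] = 1$ and hence under the null $\E[E^\DP(D)] = \E[E(D)]\,\E[e^{-\xi}] \le 1$. As in the RDP case, $\mu$ is in fact the smallest bias making this hold.

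\textbf{Main obstacle.} The only non-routine ingredient is invoking the classical Gaussian mechanism guarantee with the stated constant; the cleanest route is to cite the standard statement rather than re-derive the tail bound on the privacy-loss random variable, after reducing to the zero-mean case via the translation argument above. One should also flag the usual caveat that the $c^2 = 2\log(1.25/\delta)$ bound is the one valid for $\epsilon\in(0,1)$; for larger $\epsilon$ one would instead appeal to a tighter (e.g.\ analytic) Gaussian-mechanism analysis, which leaves the validity computation untouched.
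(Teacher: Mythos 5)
Your proposal is correct and follows essentially the same route as the paper's proof: reduce privacy to the classical zero-mean Gaussian mechanism on the query $D \mapsto \log E(D)$ via post-processing (the paper cites Theorem~A.1 of the Dwork--Roth book at this step), then verify validity by checking that the Gaussian MGF of $\xi$ at $-1$ equals one with the stated bias. Your added remark that the constant $c^2 = 2\log(1.25/\delta)$ is the one guaranteed only for $\epsilon \in (0,1)$ is a fair caveat that the paper's ``for any $\epsilon, \delta > 0$'' phrasing glosses over, but it does not change the argument.
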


\begin{proof}
    First note that for any bias $\mu$, $E^\DP(D)$ with $\xi \sim \mathcal{N}(\mu, c^2 [\Delta_{\log}(E)]^2 / \epsilon^2)$ is $(\epsilon, \delta)$-differentially private:
    by the post-processing theorem,
    $E^\DP(D) = E(D) \cdot e^{-\xi}$ is $(\epsilon, \delta)$-DP iff $\log E^\DP(D) = \log \left( E(D) \cdot e^{-\xi} \right) = \log E(D) - \xi$ is $(\epsilon, \delta)$-DP iff $\log E(D) - (\xi - \mu)$ is $(\epsilon, \delta)$-DP; and, by Theorem A.1 of \citep{dp-book}, the latter is $(\epsilon, \delta)$-DP.

    Thus, all that remains is to find the smallest bias $\mu$ that ensures validity of $E^\DP(D)$.
    As argued in Section~3.1, all we need is that the MGF of $\xi$ is at most one at $-1$. So it follows, using the form of the MGF of an univariate normal:
    \begin{align*}
        \E[e^{-\xi}] &= \exp\left( -\mu + \frac{c^2 [\Delta_{\log}(E)]^2 / \epsilon^2}{2} \right) \leq 1
        \\ \iff& -\mu + \frac{c^2 [\Delta_{\log}(E)]^2 / \epsilon^2}{2} \leq 0
        \\ \iff& \mu \geq \frac{c^2 [\Delta_{\log}(E)]^2 / \epsilon^2}{2} = \frac{c^2 [\Delta_{\log}(E)]^2}{2\epsilon^2}.
        \qedhere
    \end{align*}
\end{proof}

\begin{theorem}[$(\epsilon, 0)$-DP Biased Laplace mechanism for a single e-value]\label{thm:classic-dp-laplace}
    For any $\epsilon > 0$ satisfying $\epsilon > \Delta_{\log} (E)$,
    let $E^\DP(D)$ be as in Equation~1 in the main text with $\xi \sim \mathrm{Laplace}(-\log( 1 - [\Delta_{\log}(E)]^2/\epsilon^2 ), \Delta_{\log}(E)/\epsilon)$.
    Then $E^\DP(D)$ is a valid e-value satisfying $(\epsilon, 0)$-differential privacy.
\end{theorem}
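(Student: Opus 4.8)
The plan is to follow the same template as the proof of the biased Laplace mechanism for Rényi differential privacy (Theorem~3.2), but now targeting \emph{pure} $(\epsilon,0)$-differential privacy, where the classical (unbiased) Laplace mechanism supplies the privacy half of the argument and the moment-generating-function computation supplies the validity half.

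First, for privacy: by the post-processing theorem, $E^\DP(D) = E(D)\cdot e^{-\xi}$ is $(\epsilon,0)$-DP iff $\log E^\DP(D) = \log E(D) - \xi$ is $(\epsilon,0)$-DP. Decompose $\xi = \mu + (\xi-\mu)$ with $\mu = -\log(1 - [\Delta_{\log}(E)]^2/\epsilon^2)$ and $\xi - \mu \sim \mathrm{Laplace}(0, \Delta_{\log}(E)/\epsilon)$. Subtracting the deterministic constant $\mu$ is another post-processing step, so it suffices to show that $\log E(D) - (\xi - \mu)$ is $(\epsilon,0)$-DP. But this is precisely the classical Laplace mechanism applied to the statistic $\log E(D)$, which by Equation~\ref{eq:sensitivity} has sensitivity at most $\Delta_{\log}(E)$: adding zero-mean Laplace noise of scale $\Delta_{\log}(E)/\epsilon$ yields $\epsilon$-differential privacy \citep{dp-book}. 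Hence the bias $\mu$ plays no role in the privacy guarantee, and $(\epsilon,0)$-DP holds.

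Second, for validity: as argued in Section~3.1, it is enough that the MGF of $\xi$ is at most $1$ at $t=-1$. The MGF of $\mathrm{Laplace}(\mu,b)$ is $t \mapsto e^{t\mu}/(1 - b^2 t^2)$, finite for $|t| < 1/b$. With $b = \Delta_{\log}(E)/\epsilon$, the hypothesis $\epsilon > \Delta_{\log}(E)$ is exactly the statement $b < 1$, so $t=-1$ lies in the domain of the MGF. Evaluating there, $\E[e^{-\xi}] = e^{-\mu}/(1-b^2) \le 1$ iff $\mu \ge -\log(1-b^2) = -\log(1 - [\Delta_{\log}(E)]^2/\epsilon^2)$, and the prescribed bias attains equality; thus $\E[e^{-\xi}] = 1$ and, since $\xi$ is independent of $E(D)$, under the null $\E[E^\DP(D)] = \E[E(D)]\cdot\E[e^{-\xi}] \le 1$, so $E^\DP(D)$ is a valid e-value.

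The main (and only mild) obstacle is the bookkeeping tied to the constraint $\epsilon > \Delta_{\log}(E)$: one must check that it simultaneously (i) keeps $t=-1$ inside the region of convergence of the Laplace MGF and (ii) makes the argument $1 - [\Delta_{\log}(E)]^2/\epsilon^2$ of the logarithm in the bias strictly positive, so that $\mu$ is well defined and finite. Both are immediate consequences of $b<1$. Everything else reduces to a direct invocation of the classical Laplace mechanism plus the independence/linearity argument for validity already used in the preceding proofs.
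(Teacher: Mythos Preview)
Your proposal is correct and follows essentially the same approach as the paper: both arguments reduce privacy to the classical Laplace mechanism on $\log E(D)$ via post-processing (noting that the constant bias $\mu$ is irrelevant for privacy), and both establish validity by evaluating the Laplace MGF at $t=-1$ under the constraint $b=\Delta_{\log}(E)/\epsilon<1$. Your treatment is slightly more explicit about why the condition $\epsilon>\Delta_{\log}(E)$ simultaneously ensures finiteness of the MGF and well-definedness of $\mu$, but the logic is identical.
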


\begin{proof}
    First note that for any bias $\mu$, $E^\DP(D)$ with $\xi \sim \mathrm{Laplace}(\mu, \Delta_{\log}(E)/\epsilon)$ is $(\epsilon, 0)$-differentially private:
    by the post-processing theorem,
    $E^\DP(D) = E(D) \cdot e^{-\xi}$ is $(\epsilon, 0)$-DP iff $\log E^\DP(D) = \log \left( E(D) \cdot e^{-\xi} \right) = \log E(D) - \xi$ is $(\epsilon, 0)$-DP iff $\log E(D) - (\xi - \mu)$ is $(\epsilon, 0)$-DP; and, by Theorem 3.6 of \citep{dp-book}, the latter is $(\epsilon, 0)$-DP.

    Thus, all that remains is to find the smallest bias $\mu$ that ensures validity of $E^\DP(D)$.
    As argued in Section~3.1, all we need is that the MGF of $\xi$ is at most one at $-1$.
    The MGF of a Laplace distribution with mean $\mu$ and scale parameter $b$ is given by $t \mapsto e^{t \mu} / (1 - b^2 t^2)$, defined at $\lvert t \rvert < 1/b$.
    Thus we need that
    \[ \lvert -1 \rvert = 1 < \frac{1}{ \Delta_{\log}(E) / \epsilon } = \frac{\epsilon}{\Delta_{\log}(E)} \iff \Delta_{\log}(E) < \epsilon. \]
    As long as this is satisfied, it then follows:
    \begin{align*}
        \E[e^{-\xi}] &= e^{-\mu} / (1 - [\Delta_{\log}(E)]^2/\epsilon^2) \leq 1
        \\ \iff& e^{-\mu} \leq 1 - [\Delta_{\log}(E)]^2/\epsilon^2
        \\ \iff& -\mu \leq \log ( 1 - [\Delta_{\log}(E)]^2/\epsilon^2 )
        \\ \iff& \mu \geq -\log ( 1 - [\Delta_{\log}(E)]^2/\epsilon^2 ).
        \qedhere
    \end{align*}
\end{proof}

\subsection{Bias is necessary}

The following is a simple result showing that a biased mechanism is necessary for differentially-private e-values.
We consider here that our non-private e-value has expectation exactly equal to 1 and that $E(D)$ is not already differentially private itself.

\begin{proposition}[Bias is necessary]
    Let $\E[E(D)] = 1$ and assume that $E(D)$ is not already differentially private (either $(\epsilon, \delta)$-DP or $(\alpha, \epsilon)$-DP). Then for $E^\DP(D)$ (as per Equation~1 in the main text) to be a differentially private e-value, we must take some $\xi$ with $\E[\xi] > 0$.
\end{proposition}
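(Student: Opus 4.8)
The plan is to isolate the constraint that e-value validity places on the multiplicative noise $\xi$, feed it into Jensen's inequality, and let the ``not already private'' hypothesis supply the final contradiction. First I would use that $\xi$ is independent of $E(D)$ and that $\E[E(D)] = 1$ under the null, so that validity of $E^\DP(D)$ forces $\E[e^{-\xi}] = \E[E^\DP(D)]/\E[E(D)] \le 1$. In particular $\E[e^{-\xi}]$ is finite, and since $e^{-t} \ge -t$ this yields $\E[\xi^-] < \infty$, so $\E[\xi] \in (-\infty, +\infty]$ is well-defined.

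Next I would apply Jensen's inequality to the strictly convex map $t \mapsto e^{-t}$ to obtain $e^{-\E[\xi]} \le \E[e^{-\xi}] \le 1$, hence $\E[\xi] \ge 0$. If $\E[\xi] = +\infty$ we are immediately done, so assume $\E[\xi]$ is finite and, towards a contradiction, suppose $\E[\xi] = 0$. Then $e^{-\E[\xi]} = 1$, so the chain of inequalities collapses to the equality $\E[e^{-\xi}] = e^{-\E[\xi]}$; the strict-convexity equality case of Jensen then forces $\xi$ to be almost surely constant, and combined with $\E[\xi] = 0$ this gives $\xi = 0$ almost surely. But then $E^\DP(D) = E(D) \cdot e^0 = E(D)$, which by assumption is not differentially private, contradicting the hypothesis that $E^\DP(D)$ is. Therefore $\E[\xi] > 0$.

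I expect the only real obstacle to be the bookkeeping around degenerate or heavy-tailed $\xi$: making sure $\E[\xi]$ is genuinely well-defined (which is exactly what finiteness of $\E[e^{-\xi}]$ buys us) and invoking the \emph{equality case} of Jensen for a strictly convex function, rather than merely the inequality, in order to eliminate the $\E[\xi] = 0$ case. Everything else is a two-line computation.
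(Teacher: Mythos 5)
Your proof is correct and follows essentially the same route as the paper: both reduce validity to $\E[e^{-\xi}] \le 1$ using independence and $\E[E(D)] = 1$, and then exploit the strict convexity of $t \mapsto e^{-t}$ via Jensen's inequality, with the ``not already private'' hypothesis ruling out degenerate noise. The paper invokes strict Jensen directly after noting $\xi$ cannot be constant, while you run the equality case of Jensen inside a contradiction argument and add the integrability/$\E[\xi]=+\infty$ bookkeeping the paper glosses over --- a slightly more careful rearrangement of the same idea.
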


\begin{proof}
    Since $E(D)$ is not already differentially private, for $E^\DP(D)$ to be differentially private $\xi$ must not be constant. Then, for $E^\DP(D)$ to be a valid e-value, we require that, under the null,
    \[ \E[E^\DP(D)] = \E[E(D) \cdot e^{-\xi}] = \E[E(D)] \cdot \E[e^{-\xi}] = \E[e^{-\xi}] \leq 1; \]
    and, since $s \mapsto e^{-s}$ is strictly convex, using the strict variant of Jensen's inequality (which holds for strictly convex functions and non-constant random variables),
    \[ 1 \geq \E[e^{-\xi}] > e^{-\E[\xi]}. \]
    Taking the log on both sides, we conclude that
    \[ \E[\xi] > 0. \qedhere \]
\end{proof}

\section{Additional experimental results}

\subsection{Comparison with the test-of-tests}

In this section we compare our framework to the private hypothesis testing method of \citep{tot} (the ``test-of-tests'').
Their work takes as input a (non-private) hypothesis test and privatizes it, guaranteeing $(\epsilon, 0)$-differential privacy.

We consider here hypothesis tests for nulls $H^{(p)}_0 : p^\star = p$, for data $Y_1, \ldots, Y_n \sim \mathrm{Bern}(p^\star)$ with $p^\star = 0.5$.
For this, we have two base hypothesis tests: (i) the e-value test for the mean \citep{evalue-mean}, from which we can get a p-value by taking the reciprocal; and (ii) the exact p-value given by a binomial test.
We compute both of these non-private p-values. As for private procedures, we compute:
\begin{itemize}
    \item The test-of-tests \citep{tot} atop the reciprocal of the e-value for the mean;
    \item The test-of-tests \citep{tot} atop the exact p-value;
    \item Our method for $(\epsilon, 0)$-differential privacy with Laplace noise (Theorem~\ref{thm:classic-dp-laplace}); and
    \item Our method for $(\epsilon, \delta)$-differential privacy with Gaussian noise (Theorem~\ref{thm:classic-dp-gaussian}), for a fixed $\delta = 0.01$.
\end{itemize}

The results can be seen in Tables 1-4, for varying values of $\epsilon$ and of $c$ (the latter being a hyperparameter in the e-value for the mean).
Throughout, we note the better performance of our method, achieving much lower p-values even when compared to the test-of-tests atop the exact test.
However, it should be noted that for our biased Laplace mechanism to be defined (which is required in order to obtain the $(\epsilon, 0)$-differential privacy analogous to the test-of-tests), we need a sufficiently low $c$.

\begin{table}[h!]
    \caption{$\epsilon = 0.1$, $c = 0.2$}
    \footnotesize
    \begin{tabular}{c|cc|cccc}
    \bm{$p$}
    & \textbf{\begin{tabular}{@{}c@{}}1/e-value\\(non-private)\end{tabular}}
    & \textbf{\begin{tabular}{@{}c@{}}Exact p-value\\(non-private)\end{tabular}}
    & \textbf{\begin{tabular}{@{}c@{}}Ours, Laplace\\\end{tabular}}
    & \textbf{\begin{tabular}{@{}c@{}}Ours, Gauss.\\(\bm{$\delta=0.01$})\end{tabular}}
    & \textbf{\begin{tabular}{@{}c@{}}ToT atop\\1/e-value\end{tabular}}
    & \textbf{\begin{tabular}{@{}c@{}}ToT atop\\exact p-value\end{tabular}} \\
    \hline
    0.05 & $4.52 \cdot 10^{-68}$ & $5.81 \cdot 10^{-209}$
    & N/A & \bm{$6.89 \cdot 10^{-60}$} & $1.70 \cdot 10^{-02}$ & $2.22 \cdot 10^{-02}$ \\
    0.10 & $2.47 \cdot 10^{-60}$ & $1.34 \cdot 10^{-161}$
    & N/A & \bm{$3.93 \cdot 10^{-47}$} & $1.74 \cdot 10^{-01}$ & $3.83 \cdot 10^{-01}$ \\
    0.15 & $1.92 \cdot 10^{-49}$ & $1.90 \cdot 10^{-110}$
    & N/A & \bm{$5.44 \cdot 10^{-36}$} & $1.50 \cdot 10^{-01}$ & $2.37 \cdot 10^{-02}$ \\
    0.20 & $7.00 \cdot 10^{-45}$ & $6.20 \cdot 10^{-93}$
    & N/A & \bm{$4.68 \cdot 10^{-33}$} & $6.46 \cdot 10^{-01}$ & $6.66 \cdot 10^{-01}$ \\
    0.25 & $3.45 \cdot 10^{-30}$ & $1.23 \cdot 10^{-48}$
    & N/A & \bm{$1.17 \cdot 10^{-22}$} & $6.72 \cdot 10^{-01}$ & $9.06 \cdot 10^{-01}$ \\
    0.30 & $1.56 \cdot 10^{-26}$ & $4.10 \cdot 10^{-40}$
    & N/A & \bm{$2.72 \cdot 10^{-21}$} & $1.14 \cdot 10^{-02}$ & $5.56 \cdot 10^{-01}$ \\
    0.35 & $4.46 \cdot 10^{-18}$ & $9.59 \cdot 10^{-24}$
    & N/A & \bm{$2.99 \cdot 10^{-09}$} & $9.74 \cdot 10^{-01}$ & $2.76 \cdot 10^{-01}$ \\
    0.40 & $2.17 \cdot 10^{-06}$ & $5.95 \cdot 10^{-08}$
    & N/A & $1.00 \cdot 10^{+00}$ & \bm{$2.38 \cdot 10^{-01}$} & $6.05 \cdot 10^{-01}$ \\
    0.45 & $2.80 \cdot 10^{-02}$ & $1.39 \cdot 10^{-03}$
    & N/A & $1.00 \cdot 10^{+00}$ & $7.63 \cdot 10^{-01}$ & \bm{$7.61 \cdot 10^{-01}$} \\
    0.50 & $1.00 \cdot 10^{+00}$ & $4.67 \cdot 10^{-01}$
    & N/A & $1.00 \cdot 10^{+00}$ & \bm{$4.24 \cdot 10^{-01}$} & $8.59 \cdot 10^{-01}$ \\
    0.55 & $1.00 \cdot 10^{+00}$ & $9.37 \cdot 10^{-02}$
    & N/A & $1.00 \cdot 10^{+00}$ & $9.33 \cdot 10^{-01}$ & \bm{$7.30 \cdot 10^{-01}$} \\
    0.60 & $8.65 \cdot 10^{-09}$ & $1.18 \cdot 10^{-10}$
    & N/A & $1.00 \cdot 10^{+00}$ & $5.91 \cdot 10^{-01}$ & \bm{$2.96 \cdot 10^{-02}$} \\
    0.65 & $1.08 \cdot 10^{-15}$ & $6.29 \cdot 10^{-20}$
    & N/A & \bm{$7.70 \cdot 10^{-05}$} & $7.09 \cdot 10^{-01}$ & $2.24 \cdot 10^{-01}$ \\
    0.70 & $3.46 \cdot 10^{-26}$ & $2.38 \cdot 10^{-39}$
    & N/A & \bm{$1.75 \cdot 10^{-10}$} & $3.93 \cdot 10^{-01}$ & $2.90 \cdot 10^{-01}$ \\
    0.75 & $5.00 \cdot 10^{-34}$ & $1.35 \cdot 10^{-58}$
    & N/A & \bm{$1.69 \cdot 10^{-23}$} & $6.35 \cdot 10^{-01}$ & $1.79 \cdot 10^{-01}$ \\
    0.80 & $6.68 \cdot 10^{-42}$ & $1.56 \cdot 10^{-82}$
    & N/A & \bm{$2.52 \cdot 10^{-33}$} & $5.25 \cdot 10^{-01}$ & $2.61 \cdot 10^{-01}$ \\
    0.85 & $2.88 \cdot 10^{-49}$ & $9.78 \cdot 10^{-110}$
    & N/A & \bm{$4.77 \cdot 10^{-34}$} & $8.10 \cdot 10^{-02}$ & $1.57 \cdot 10^{-01}$ \\
    0.90 & $4.31 \cdot 10^{-62}$ & $2.24 \cdot 10^{-171}$
    & N/A & \bm{$5.29 \cdot 10^{-45}$} & $4.16 \cdot 10^{-01}$ & $8.74 \cdot 10^{-01}$ \\
    0.95 & $2.01 \cdot 10^{-68}$ & $2.00 \cdot 10^{-211}$
    & N/A & \bm{$2.31 \cdot 10^{-53}$} & $1.29 \cdot 10^{-01}$ & $9.41 \cdot 10^{-02}$ \\
    \end{tabular}
\end{table}

\begin{table}[h!]
    \caption{$\epsilon = 0.5$, $c = 0.2$}
    \footnotesize
    \begin{tabular}{c|cc|cccc}
    \bm{$p$}
    & \textbf{\begin{tabular}{@{}c@{}}1/e-value\\(non-private)\end{tabular}}
    & \textbf{\begin{tabular}{@{}c@{}}Exact p-value\\(non-private)\end{tabular}}
    & \textbf{\begin{tabular}{@{}c@{}}Ours, Laplace\\\end{tabular}}
    & \textbf{\begin{tabular}{@{}c@{}}Ours, Gauss.\\(\bm{$\delta=0.01$})\end{tabular}}
    & \textbf{\begin{tabular}{@{}c@{}}ToT atop\\1/e-value\end{tabular}}
    & \textbf{\begin{tabular}{@{}c@{}}ToT atop\\exact p-value\end{tabular}} \\
    \hline
    0.05 & $5.97 \cdot 10^{-69}$ & $3.48 \cdot 10^{-215}$
    & \bm{$4.87 \cdot 10^{-69}$} & $9.60 \cdot 10^{-68}$ & $5.57 \cdot 10^{-03}$ & $1.15 \cdot 10^{-02}$ \\
    0.10 & $4.88 \cdot 10^{-61}$ & $1.89 \cdot 10^{-165}$
    & $3.14 \cdot 10^{-60}$ & \bm{$3.24 \cdot 10^{-61}$} & $3.04 \cdot 10^{-03}$ & $1.33 \cdot 10^{-02}$ \\
    0.15 & $1.64 \cdot 10^{-47}$ & $8.64 \cdot 10^{-103}$
    & \bm{$8.51 \cdot 10^{-48}$} & $3.78 \cdot 10^{-47}$ & $1.47 \cdot 10^{-02}$ & $5.43 \cdot 10^{-02}$ \\
    0.20 & $2.35 \cdot 10^{-44}$ & $4.85 \cdot 10^{-91}$
    & \bm{$3.46 \cdot 10^{-44}$} & $2.13 \cdot 10^{-43}$ & $6.67 \cdot 10^{-03}$ & $3.11 \cdot 10^{-03}$ \\
    0.25 & $3.11 \cdot 10^{-31}$ & $3.03 \cdot 10^{-51}$
    & $8.49 \cdot 10^{-30}$ & \bm{$4.23 \cdot 10^{-31}$} & $5.30 \cdot 10^{-03}$ & $3.91 \cdot 10^{-03}$ \\
    0.30 & $1.45 \cdot 10^{-17}$ & $6.78 \cdot 10^{-23}$
    & $1.05 \cdot 10^{-17}$ & \bm{$5.91 \cdot 10^{-18}$} & $5.59 \cdot 10^{-03}$ & $2.11 \cdot 10^{-02}$ \\
    0.35 & $1.28 \cdot 10^{-19}$ & $2.14 \cdot 10^{-26}$
    & $1.66 \cdot 10^{-19}$ & \bm{$1.24 \cdot 10^{-19}$} & $3.05 \cdot 10^{-02}$ & $1.11 \cdot 10^{-03}$ \\
    0.40 & $2.36 \cdot 10^{-11}$ & $8.29 \cdot 10^{-14}$
    & $3.01 \cdot 10^{-11}$ & \bm{$2.75 \cdot 10^{-11}$} & $9.12 \cdot 10^{-01}$ & $8.38 \cdot 10^{-02}$ \\
    0.45 & $1.20 \cdot 10^{-02}$ & $5.61 \cdot 10^{-04}$
    & \bm{$1.69 \cdot 10^{-02}$} & $1.99 \cdot 10^{-01}$ & $6.30 \cdot 10^{-02}$ & $5.19 \cdot 10^{-01}$ \\
    0.50 & $1.00 \cdot 10^{+00}$ & $6.35 \cdot 10^{-01}$
    & $1.00 \cdot 10^{+00}$ & $1.00 \cdot 10^{+00}$ & $7.69 \cdot 10^{-01}$ & \bm{$7.52 \cdot 10^{-01}$} \\
    0.55 & $4.18 \cdot 10^{-02}$ & $2.14 \cdot 10^{-03}$
    & $6.37 \cdot 10^{-02}$ & \bm{$1.21 \cdot 10^{-02}$} & $3.91 \cdot 10^{-01}$ & $9.35 \cdot 10^{-01}$ \\
    0.60 & $2.03 \cdot 10^{-09}$ & $2.13 \cdot 10^{-11}$
    & \bm{$1.49 \cdot 10^{-09}$} & $6.53 \cdot 10^{-09}$ & $9.07 \cdot 10^{-02}$ & $3.05 \cdot 10^{-01}$ \\
    0.65 & $7.99 \cdot 10^{-21}$ & $1.45 \cdot 10^{-28}$
    & $9.04 \cdot 10^{-21}$ & \bm{$4.62 \cdot 10^{-21}$} & $1.41 \cdot 10^{-01}$ & $7.15 \cdot 10^{-02}$ \\
    0.70 & $3.14 \cdot 10^{-27}$ & $1.15 \cdot 10^{-41}$
    & \bm{$5.02 \cdot 10^{-27}$} & $1.10 \cdot 10^{-25}$ & $8.66 \cdot 10^{-03}$ & $1.28 \cdot 10^{-02}$ \\
    0.75 & $4.00 \cdot 10^{-36}$ & $1.70 \cdot 10^{-64}$
    & \bm{$3.28 \cdot 10^{-36}$} & $1.53 \cdot 10^{-35}$ & $2.27 \cdot 10^{-03}$ & $8.96 \cdot 10^{-03}$ \\
    0.80 & $3.36 \cdot 10^{-41}$ & $3.37 \cdot 10^{-80}$
    & \bm{$1.27 \cdot 10^{-41}$} & $1.23 \cdot 10^{-40}$ & $4.51 \cdot 10^{-02}$ & $3.13 \cdot 10^{-02}$ \\
    0.85 & $3.37 \cdot 10^{-51}$ & $9.60 \cdot 10^{-118}$
    & $5.67 \cdot 10^{-51}$ & \bm{$5.27 \cdot 10^{-51}$} & $1.14 \cdot 10^{-02}$ & $2.01 \cdot 10^{-02}$ \\
    0.90 & $1.60 \cdot 10^{-57}$ & $6.11 \cdot 10^{-147}$
    & $1.95 \cdot 10^{-57}$ & \bm{$3.18 \cdot 10^{-58}$} & $2.69 \cdot 10^{-03}$ & $7.39 \cdot 10^{-03}$ \\
    0.95 & $3.98 \cdot 10^{-69}$ & $1.86 \cdot 10^{-216}$
    & \bm{$4.23 \cdot 10^{-69}$} & $2.99 \cdot 10^{-68}$ & $4.41 \cdot 10^{-03}$ & $3.18 \cdot 10^{-03}$ \\
    \end{tabular}
\end{table}

\begin{table}[h!]
    \caption{$\epsilon = 0.1$, $c = 0.05$}
    \footnotesize
    \begin{tabular}{c|cc|cccc}
    \bm{$p$}
    & \textbf{\begin{tabular}{@{}c@{}}1/e-value\\(non-private)\end{tabular}}
    & \textbf{\begin{tabular}{@{}c@{}}Exact p-value\\(non-private)\end{tabular}}
    & \textbf{\begin{tabular}{@{}c@{}}Ours, Laplace\\\end{tabular}}
    & \textbf{\begin{tabular}{@{}c@{}}Ours, Gauss.\\(\bm{$\delta=0.01$})\end{tabular}}
    & \textbf{\begin{tabular}{@{}c@{}}ToT atop\\1/e-value\end{tabular}}
    & \textbf{\begin{tabular}{@{}c@{}}ToT atop\\exact p-value\end{tabular}} \\
    \hline
    0.05 & $2.08 \cdot 10^{-18}$ & $1.08 \cdot 10^{-229}$
    & \bm{$3.26 \cdot 10^{-18}$} & $4.24 \cdot 10^{-17}$ & $9.00 \cdot 10^{-01}$ & $2.58 \cdot 10^{-01}$ \\
    0.10 & $2.74 \cdot 10^{-15}$ & $1.33 \cdot 10^{-149}$
    & \bm{$2.63 \cdot 10^{-15}$} & $3.67 \cdot 10^{-14}$ & $7.91 \cdot 10^{-01}$ & $3.76 \cdot 10^{-02}$ \\
    0.15 & $6.63 \cdot 10^{-13}$ & $1.57 \cdot 10^{-105}$
    & \bm{$2.52 \cdot 10^{-13}$} & $5.72 \cdot 10^{-12}$ & $2.38 \cdot 10^{-01}$ & $1.41 \cdot 10^{-02}$ \\
    0.20 & $5.68 \cdot 10^{-12}$ & $4.85 \cdot 10^{-91}$
    & $6.95 \cdot 10^{-12}$ & \bm{$2.26 \cdot 10^{-13}$} & $6.64 \cdot 10^{-04}$ & $3.65 \cdot 10^{-02}$ \\
    0.25 & $2.78 \cdot 10^{-10}$ & $4.26 \cdot 10^{-68}$
    & \bm{$4.06 \cdot 10^{-10}$} & $1.84 \cdot 10^{-09}$ & $8.57 \cdot 10^{-01}$ & $9.85 \cdot 10^{-02}$ \\
    0.30 & $8.87 \cdot 10^{-07}$ & $3.58 \cdot 10^{-32}$
    & $1.71 \cdot 10^{-06}$ & \bm{$8.86 \cdot 10^{-08}$} & $1.82 \cdot 10^{-02}$ & $1.81 \cdot 10^{-02}$ \\
    0.35 & $7.30 \cdot 10^{-05}$ & $2.10 \cdot 10^{-18}$
    & \bm{$5.06 \cdot 10^{-05}$} & $1.03 \cdot 10^{-04}$ & $9.86 \cdot 10^{-01}$ & $5.53 \cdot 10^{-02}$ \\
    0.40 & $3.91 \cdot 10^{-03}$ & $3.02 \cdot 10^{-09}$
    & $6.99 \cdot 10^{-03}$ & \bm{$1.46 \cdot 10^{-03}$} & $7.44 \cdot 10^{-01}$ & $2.58 \cdot 10^{-01}$ \\
    0.45 & $2.41 \cdot 10^{-01}$ & $7.16 \cdot 10^{-03}$
    & $7.69 \cdot 10^{-01}$ & $1.00 \cdot 10^{+00}$ & \bm{$2.98 \cdot 10^{-01}$} & $5.13 \cdot 10^{-01}$ \\
    0.50 & $1.00 \cdot 10^{+00}$ & $9.75 \cdot 10^{-01}$
    & $1.00 \cdot 10^{+00}$ & $1.00 \cdot 10^{+00}$ & \bm{$5.67 \cdot 10^{-01}$} & $6.03 \cdot 10^{-01}$ \\
    0.55 & $3.38 \cdot 10^{-01}$ & $1.77 \cdot 10^{-02}$
    & $5.55 \cdot 10^{-01}$ & $5.71 \cdot 10^{-01}$ & \bm{$4.23 \cdot 10^{-02}$} & $4.66 \cdot 10^{-02}$ \\
    0.60 & $3.91 \cdot 10^{-03}$ & $3.02 \cdot 10^{-09}$
    & $5.88 \cdot 10^{-03}$ & \bm{$5.58 \cdot 10^{-04}$} & $7.84 \cdot 10^{-01}$ & $4.34 \cdot 10^{-01}$ \\
    0.65 & $2.63 \cdot 10^{-05}$ & $3.01 \cdot 10^{-21}$
    & $4.49 \cdot 10^{-05}$ & \bm{$1.66 \cdot 10^{-05}$} & $4.90 \cdot 10^{-01}$ & $2.57 \cdot 10^{-01}$ \\
    0.70 & $9.91 \cdot 10^{-08}$ & $1.69 \cdot 10^{-40}$
    & $3.86 \cdot 10^{-07}$ & \bm{$3.39 \cdot 10^{-07}$} & $4.10 \cdot 10^{-02}$ & $1.57 \cdot 10^{-01}$ \\
    0.75 & $3.71 \cdot 10^{-10}$ & $1.54 \cdot 10^{-66}$
    & \bm{$1.45 \cdot 10^{-09}$} & $1.63 \cdot 10^{-08}$ & $7.98 \cdot 10^{-01}$ & $9.63 \cdot 10^{-02}$ \\
    0.80 & $1.83 \cdot 10^{-11}$ & $1.02 \cdot 10^{-83}$
    & \bm{$9.61 \cdot 10^{-12}$} & $9.10 \cdot 10^{-11}$ & $7.43 \cdot 10^{-01}$ & $5.66 \cdot 10^{-02}$ \\
    0.85 & $7.71 \cdot 10^{-14}$ & $1.62 \cdot 10^{-121}$
    & \bm{$1.19 \cdot 10^{-13}$} & $5.74 \cdot 10^{-13}$ & $8.99 \cdot 10^{-01}$ & $9.04 \cdot 10^{-02}$ \\
    0.90 & $1.13 \cdot 10^{-15}$ & $7.96 \cdot 10^{-158}$
    & \bm{$1.66 \cdot 10^{-15}$} & $4.25 \cdot 10^{-14}$ & $9.31 \cdot 10^{-01}$ & $3.42 \cdot 10^{-01}$ \\
    0.95 & $3.75 \cdot 10^{-18}$ & $1.25 \cdot 10^{-221}$
    & \bm{$2.82 \cdot 10^{-18}$} & $4.03 \cdot 10^{-18}$ & $7.65 \cdot 10^{-01}$ & $1.62 \cdot 10^{-01}$ \\
    \end{tabular}
\end{table}

\begin{table}[h!]
    \caption{$\epsilon = 0.05$, $c = 0.05$}
    \footnotesize
    \begin{tabular}{c|cc|cccc}
    \bm{$p$}
    & \textbf{\begin{tabular}{@{}c@{}}1/e-value\\(non-private)\end{tabular}}
    & \textbf{\begin{tabular}{@{}c@{}}Exact p-value\\(non-private)\end{tabular}}
    & \textbf{\begin{tabular}{@{}c@{}}Ours, Laplace\\\end{tabular}}
    & \textbf{\begin{tabular}{@{}c@{}}Ours, Gauss.\\(\bm{$\delta=0.01$})\end{tabular}}
    & \textbf{\begin{tabular}{@{}c@{}}ToT atop\\1/e-value\end{tabular}}
    & \textbf{\begin{tabular}{@{}c@{}}ToT atop\\exact p-value\end{tabular}} \\
    \hline
    0.05 & $5.04 \cdot 10^{-18}$ & $9.79 \cdot 10^{-218}$
    & \bm{$5.71 \cdot 10^{-18}$} & $5.84 \cdot 10^{-18}$ & $3.57 \cdot 10^{-01}$ & $3.56 \cdot 10^{-02}$ \\
    0.10 & $9.31 \cdot 10^{-16}$ & $1.06 \cdot 10^{-159}$
    & \bm{$1.02 \cdot 10^{-15}$} & $1.61 \cdot 10^{-15}$ & $1.43 \cdot 10^{-01}$ & $5.31 \cdot 10^{-03}$ \\
    0.15 & $2.75 \cdot 10^{-13}$ & $7.00 \cdot 10^{-112}$
    & \bm{$2.94 \cdot 10^{-13}$} & $3.06 \cdot 10^{-13}$ & $8.85 \cdot 10^{-01}$ & $4.55 \cdot 10^{-03}$ \\
    0.20 & $6.90 \cdot 10^{-12}$ & $8.60 \cdot 10^{-90}$
    & \bm{$6.71 \cdot 10^{-12}$} & $6.86 \cdot 10^{-12}$ & $9.91 \cdot 10^{-01}$ & $5.43 \cdot 10^{-03}$ \\
    0.25 & $1.55 \cdot 10^{-10}$ & $2.81 \cdot 10^{-71}$
    & \bm{$1.49 \cdot 10^{-10}$} & $1.92 \cdot 10^{-10}$ & $4.83 \cdot 10^{-01}$ & $5.28 \cdot 10^{-03}$ \\
    0.30 & $5.87 \cdot 10^{-06}$ & $8.57 \cdot 10^{-26}$
    & $5.99 \cdot 10^{-06}$ & \bm{$3.75 \cdot 10^{-06}$} & $8.28 \cdot 10^{-01}$ & $1.52 \cdot 10^{-03}$ \\
    0.35 & $8.00 \cdot 10^{-05}$ & $3.71 \cdot 10^{-18}$
    & \bm{$7.42 \cdot 10^{-05}$} & $1.05 \cdot 10^{-04}$ & $5.95 \cdot 10^{-01}$ & $3.63 \cdot 10^{-02}$ \\
    0.40 & $3.00 \cdot 10^{-03}$ & $9.25 \cdot 10^{-10}$
    & \bm{$3.02 \cdot 10^{-03}$} & $5.10 \cdot 10^{-03}$ & $1.60 \cdot 10^{-01}$ & $8.82 \cdot 10^{-01}$ \\
    0.45 & $4.62 \cdot 10^{-01}$ & $3.98 \cdot 10^{-02}$
    & $6.66 \cdot 10^{-01}$ & \bm{$5.54 \cdot 10^{-01}$} & $9.64 \cdot 10^{-01}$ & $7.51 \cdot 10^{-01}$ \\
    0.50 & $1.00 \cdot 10^{+00}$ & $5.48 \cdot 10^{-01}$
    & $1.00 \cdot 10^{+00}$ & $8.77 \cdot 10^{-01}$ & $1.53 \cdot 10^{-01}$ & \bm{$6.41 \cdot 10^{-02}$} \\
    0.55 & $9.27 \cdot 10^{-02}$ & $4.42 \cdot 10^{-04}$
    & $1.10 \cdot 10^{-01}$ & \bm{$5.35 \cdot 10^{-02}$} & $7.74 \cdot 10^{-01}$ & $7.36 \cdot 10^{-01}$ \\
    0.60 & $5.54 \cdot 10^{-03}$ & $1.38 \cdot 10^{-08}$
    & $5.83 \cdot 10^{-03}$ & \bm{$3.82 \cdot 10^{-03}$} & $5.24 \cdot 10^{-01}$ & $1.55 \cdot 10^{-01}$ \\
    0.65 & $1.67 \cdot 10^{-04}$ & $3.03 \cdot 10^{-16}$
    & \bm{$1.71 \cdot 10^{-04}$} & $1.84 \cdot 10^{-04}$ & $5.56 \cdot 10^{-01}$ & $6.10 \cdot 10^{-02}$ \\
    0.70 & $7.78 \cdot 10^{-06}$ & $6.65 \cdot 10^{-25}$
    & $8.23 \cdot 10^{-06}$ & \bm{$6.01 \cdot 10^{-06}$} & $3.64 \cdot 10^{-01}$ & $6.52 \cdot 10^{-04}$ \\
    0.75 & $4.59 \cdot 10^{-09}$ & $6.20 \cdot 10^{-54}$
    & $4.55 \cdot 10^{-09}$ & \bm{$3.25 \cdot 10^{-09}$} & $1.06 \cdot 10^{-01}$ & $6.70 \cdot 10^{-03}$ \\
    0.80 & $1.83 \cdot 10^{-11}$ & $1.02 \cdot 10^{-83}$
    & $1.85 \cdot 10^{-11}$ & \bm{$1.65 \cdot 10^{-11}$} & $5.40 \cdot 10^{-01}$ & $2.58 \cdot 10^{-03}$ \\
    0.85 & $1.95 \cdot 10^{-14}$ & $1.51 \cdot 10^{-132}$
    & $2.00 \cdot 10^{-14}$ & \bm{$1.73 \cdot 10^{-14}$} & $5.20 \cdot 10^{-01}$ & $2.43 \cdot 10^{-03}$ \\
    0.90 & $9.31 \cdot 10^{-16}$ & $1.06 \cdot 10^{-159}$
    & \bm{$1.03 \cdot 10^{-15}$} & $1.03 \cdot 10^{-15}$ & $4.89 \cdot 10^{-01}$ & $2.93 \cdot 10^{-02}$ \\
    0.95 & $2.08 \cdot 10^{-18}$ & $1.08 \cdot 10^{-229}$
    & $2.55 \cdot 10^{-18}$ & \bm{$2.42 \cdot 10^{-18}$} & $5.72 \cdot 10^{-01}$ & $1.88 \cdot 10^{-03}$ \\
    \end{tabular}
\end{table}

\section{Experiment details}

\subsection{Section 4.1 in the main text}

We use the e-value for the mean with distribution $F \sim \mathrm{Unif}(-1, 1)$ independent of $\theta$.
We use the first $100\,000$ samples of our dataset.

For the plot and confidence interval algorithm, we compute e-values for the mean over $k=50$ values of $\theta$, evenly spaced within $(0, 1)$.
For the confidence intervals we use a significance level of $\alpha = 0.05$.

\subsection{Section 4.2 in the main text}

We start by splitting the first $100\,000$ samples of the dataset into three splits: training (45\%), validation (15\%) and test (40\%).
We train a random forest classifier on the training data, and compute a validation loss on the validation set.
With this we define the desired safety threshold to be the validation 0-1 loss plus a tolerance of $0.05$.

We use the e-value for the mean with distribution $F \sim \mathrm{Unif}(0, c/\theta)$, where $\theta$ is the determined safety threshold and $c = 0.2$.
For the anytime-validity, we use a batch size of $128$ samples per batch.
The null is rejected at a significance level of $\alpha = 0.05$, i.e., when the e-value crosses $1/0.05 = 20$.

\subsection{Section 4.3 in the main text}

We split the data into three splits: training (54\%), calibration (36\%) and testing (10\%).
On the training data we fit a XGBoost classifier.
Then, on the calibration data we run our procedure, with conformity score $(x, y) \mapsto 1 / \mathrm{clamp}_{[0.01, 1]}(\widehat{p}(y | x))^{1/4}$.
We then linearly transform this score to have support in $[1, 100]$, and quantize it into 500 uniform bins over $[1, 100]$.
Atop this, we run our method.
Finally, we assess the resulting predictive interval sizes in the test split.
Empty prediction sets (if any) are converted into singleton predictive sets.

\end{document}